\theoremstyle{plain}
\newtheorem{theorem}{Theorem}
\newtheorem{lemma}[theorem]{Lemma}
\newtheorem{corollary}			[theorem]	{Corollary}
\newtheorem{notation}			[theorem]	{Notation}
\newtheorem{fact}				[theorem]	{Fact}
\theoremstyle{definition}
\newtheorem{definition}[theorem]{Definition}
\theoremstyle{definition}
\newtheorem{remark}[theorem]{Remark}
\newcommand{\squarerel}{\mathbin{\square}}
\newcommand{\symdiff}{\mathbin{\Delta}}
\newcommand{\dunion}{\mathbin{\dot\cup}}
			\def\thm@space@setup{%
			  \thm@preskip=\parskip \thm@postskip=0pt
			}
\newcommand	{\abs}			[1]		{\left\lvert#1\right\rvert}
\newcommand{\dfn}[1]		{\textbf{#1}}
\renewenvironment{proof}{{\noindent\bfseries Proof:}}{\qed}
\definecolor{darkred}{rgb}{0.5,0,0}
\title{Graphs Identified by Logics with Counting}
\author{Sandra  Kiefer, Pascal Schweitzer and Erkal Selman \\ RWTH Aachen University\\
{\tt \{kiefer,schweitzer,selman\}@informatik.rwth-aachen.de} \\
  Ahornstra\ss{}e 55, 52074 Aachen, Germany}
\DeclareMathOperator{\even}{even}
\DeclareMathOperator{\Circ}{Circ}
\DeclareMathOperator{\Match}{Match}
\DeclareMathOperator{\K33}{\equiv_3^3}
\newcommand{\LogicCK}{C^k}
\newcommand{\LogicCzwo}{C^2}
\newcommand{\LogicCdrei}{C^3}
\newcommand{\LogicCKminusFour}{C^{k-4}}
\DeclareMathOperator{\lcm}{lcm}
\DeclareMathOperator{\ecPOG}{ec-POG}
\begin{document}
\maketitle

	\begin{abstract}
We classify graphs and, more generally, finite relational structures that are identified by~$\LogicCzwo$, that is, two-variable first-order logic with counting. 
Using this classification, we show that it can be decided in almost linear time whether a structure is identified by $\LogicCzwo$.
Our classification implies that for every graph identified by this logic, all vertex-colored versions of it are also identified. A similar statement is true for finite relational structures.

We provide constructions that solve the inversion problem for finite structures in linear time. This problem has previously been shown to be polynomial time solvable by Martin Otto. 
For graphs, we conclude that every~$\LogicCzwo$-equivalence class contains a graph whose orbits are exactly the classes of the~$\LogicCzwo$-partition of its vertex set and which has a single automorphism witnessing this fact.

For general $k$, we show that such statements are not true by providing examples of graphs of size linear in~$k$ which are identified by~$\LogicCdrei$ but for which the orbit partition is strictly finer than the~$\LogicCK$-partition. We also provide identified graphs which have vertex-colored versions that are not identified by~$\LogicCK$.
	\end{abstract}

	%!TEX root = main.tex

\section{Introduction}

The $k$-variable fragment of counting logic, denoted by~$\LogicCK$, is obtained from first-order logic by
adding counting quantifiers 
but only allowing formulas that use at most~$k$ variables.
These finite variable logics play a central role in the area of model-checking since 
for them the model-checking problem and the  equivalence problem are solvable in polynomial time (see~\cite{DBLP:journals/bsl/Grohe98}). 
For a while, there was the hope that for some fixed~$k$ the logic~$\LogicCK$ can distinguish every pair of non-isomorphic graphs. This would imply that the graph isomorphism problem is solvable in polynomial time.
However, in 1992, it was shown by Cai, Fürer and Immerman~\cite{DBLP:journals/combinatorica/CaiFI92} 
that $\Omega(n)$ variables are required to identify all graphs on $n$ vertices. Since the examples presented in that paper consist of graph isomorphism instances which are actually known to be solvable in polynomial time, this also shows that~$\LogicCK$ does not capture polynomial time.

Concerning~$\LogicCK$, there are striking connections to other seemingly unrelated areas.
For example, there exist several Ehrenfeucht-Fra\"{i}ss\'e type games characterizing $\LogicCK$ \cite{DBLP:journals/combinatorica/CaiFI92,DBLP:conf/icalp/DawarH12,DBLP:journals/iandc/Hella96}. 
Also strongly related is the $(k-1)$-dimensional version of a well-known color refinement algorithm,
named after Weisfeiler and Lehman by Babai (see~\cite{DBLP:journals/combinatorica/CaiFI92}).
It turns out that the $(k-1)$-dimensional Weisfeiler-Lehman algorithm
does nothing else but partition $(k-1)$-tuples of vertices
according to their $\LogicCK$-types.
Another surprising connection exists to linear programming. The~$k$-th level of the Sherali-Adams hierarchy of a natural linear integer programming formulation of graph isomorphism essentially corresponds to the expressive power of~$\LogicCK$ \cite{DBLP:journals/siamcomp/AtseriasM13,DBLP:conf/csl/GroheO12}.

Among the finite variable logics, the fragment~$\LogicCzwo$ has been of particular interest because it is known to be decidable~\cite{DBLP:conf/lics/GradelOR97}, and the complexity of the decision problem has been studied extensively~\cite{DBLP:journals/jolli/PrattHartmann05}. Numerous results for this logic are known, we refer the reader to a survey by Gr{\"{a}}del and Otto~\cite{DBLP:journals/tcs/GradelO99}.
In practice, due to its strength and the fact that it can be evaluated in almost linear time, the logic~$\LogicCzwo$ (more specifically, the corresponding 1-dimensional Weisfeiler-Lehman algorithm) is an essential subroutine in all competitive canonical labeling tools~(see~\cite{DBLP:journals/jsc/McKayP14}).
Very recent results concerning~$\LogicCzwo$ include a paper by Krebs and Verbitsky studying the quantifier depth of $\LogicCzwo$-formulas for $\LogicCzwo$-equivalence classes of graphs~\cite{DBLP:journals/corr/KrebsV14}. Kopczynski and Tan show that for every fixed $\LogicCzwo$-formula, the set of those~$n$ for which there is a structure with a universe of size~$n$ satisfying the formula is semilinear~\cite{DBLP:journals/corr/abs-1304-0829}.

While the above results deal with the problem of distinguishing two graphs from each other using finite variable counting logics, in this paper we are concerned with the concept of distinguishing a graph from every other non-isomorphic graph. We say that the graph is \dfn{identified} by the logic. More formally, a graph (or a finite relational structure) $G$ is identified by a logic $L$ if there is a sentence $\varphi$ in $L$ such that every graph (or finite relational structure) which satisfies $\varphi$ is isomorphic to $G$.

Of course every graph is identified by some first-order sentence. However, by \cite{DBLP:journals/combinatorica/CaiFI92}, as mentioned above, there is no~$k \in \mathbb{N}$ such that every graph is identified by some formula in~$\LogicCK$. Let us focus on the case~$k=2$. It is not difficult to see that all trees are identified by~$\LogicCzwo$. Moreover, a graph is asymptotically almost surely identified by~$\LogicCzwo$, that is, the fraction of graphs of size~$n$ which are not identified by~$\LogicCzwo$ tends to~$0$ as~$n$ tends to infinity~\cite{DBLP:journals/siamcomp/BabaiES80}. Even more strongly, it is known~\cite{DBLP:conf/focs/BabaiK79} that the fraction of graphs which are not identified is exponentially small in~$n$. Similarly, a regular graph is asymptotically almost surely identified by~$\LogicCdrei$~\cite{DBLP:conf/focs/Kucera87}. 
	However, not all graphs are identified. (For~$\LogicCzwo$, consider a cycle of length at least~$6$, for example.) The following question arises.

 \emph{What is the structure of graphs that are identified by~$\LogicCK$?}

\paragraph{Our results.} 
We study graphs that are identified by~$\LogicCzwo$ 
	and provide a complete classification for them. 
This classification can be used to draw several conclusions 
	about general properties of identified graphs.
For example, one can derive that if an undirected graph is identified by~$\LogicCzwo$, 
	then the~$\LogicCzwo$-partition classes of the vertices 
	are exactly the orbits of the automorphism group of the graph. 
This corollary is neither true when considering finite (relational) structures 
	nor when considering~$\LogicCK$ with~$k>2$. 
For~$\LogicCzwo$, we also conclude that 
	if an undirected graph is identified, every vertex-colored version of it is identified by~$\LogicCzwo$ as well.
This statement holds for finite relational structures, too, 
	but is again not true for~$\LogicCK$ with~$k>2$. 
Using our classification, we show that in time~$O((n+m)\log n)$ 
	it is possible to determine 
	whether an undirected graph is identified by~$\LogicCzwo$.

To prove the correctness of our classification, we need explicit constructions that solve the inversion problem and the canonization problem for~$\LogicCzwo$. The inversion problem asks whether to a certain invariant a graph (or more generally, a finite structure) can be constructed. In the case of~$\LogicCzwo$, such an invariant is the count of the~$\LogicCzwo$-types of pairs of vertices~$x,y$. A celebrated result by Otto~\cite{DBLP:journals/apal/Otto97} shows that the inversion problem and the canonization problem for~$\LogicCzwo$ can be solved in polynomial time. 
As a by-product, our direct constructions provide an alternative proof for this. In fact, they show that the inversion problem for~$\LogicCzwo$ can be solved in linear time. Our constructions make use of circulant graphs and doubly-circulant graphs. With these, we observe that every~$\LogicCzwo$-equivalence class contains a graph whose $\LogicCzwo$-partition classes are the orbits. More strongly, there is a single automorphism of the graph witnessing this. To achieve inversion for finite structures, we use an old 1-factorization construction due to Walecki~(see \cite{MR0124983}) that decomposes the complete graph~$K_{2n}$ into~$2n-1$ disjoint perfect matchings.

Building on the classification of graphs identified by~$\LogicCzwo$, we also classify finite structures that are identified by~$\LogicCzwo$. For graphs, there is only one special case that may appear within a $\LogicCzwo$-partition class (namely the cycle of length 5). However, for finite structures there are $7$ different special cases, which are of sizes~$3$,~$4$,~$5$ and~$6$ for a~$\LogicCzwo$-partition class. Our classification theorem describes how these may be combined to form structures that are identified by~$\LogicCzwo$. Due to the nature of the different special cases, the classification is more involved (see Theorem~\ref{thm:classification:of:finite:ident:sturctures}). Nevertheless, we can show that one can decide in almost linear time whether a structure is identified by~$\LogicCzwo$.

For the logics~$\LogicCK$ with~$k>2$ we collect several negative results. 
One can first observe that the triangular graphs form an infinite non-trivial class of strongly regular graphs which are identified by~$\LogicCdrei$, implying that any classification result would have to include non-trivial infinite families~\cite{chang,hoffman1960}.

Contrasting our results for~$\LogicCzwo$, we provide examples of graphs that are identified by~$\LogicCdrei$ but for which conclusions analogous to the ones mentioned above do not hold.  More specifically, we present graphs identified by~$\LogicCdrei$ for which even the logic~$\LogicCK$ with~$k$ linear in the size of the graph does not correctly determine the orbit partition. This yields graphs which the logic~$\LogicCK$ identifies, but for which not all vertex-colored versions are identified by~$\LogicCK$. These ideas are based on the construction by Cai, F{\"{u}}rer and Immerman~\cite{DBLP:journals/combinatorica/CaiFI92}. 

The existence of these graphs highlights an important fact. Even if a graph is identified by the logic~$\LogicCK$, it is not clear that it is possible to take advantage of that in order to canonize the graph. This stands in contrast to a remark in~\cite{DBLP:journals/combinatorica/CaiFI92} claiming that a graph~$G$ identified by~$\LogicCK$ can be canonized in polynomial time. In fact, the crucial property required for the approach hinted at there to be successful is that all vertex-colored versions of~$G$ need to be identified by~$\LogicCK$. Indeed, if this property holds then a standard recursive individualization approach canonizes the graph~$G$. It would suffice to show that for all vertex-colored versions of~$G$ the orbits are determined by the logic. However, with a slight alteration of our construction, we obtain a graph~$G$ that is identified by~$\LogicCK$ and whose orbits are correctly determined, but for some vertex-colored versions of~$G$ the orbits are not correctly determined.

Independently of our work, Arvind, K\"{o}bler, Rattan and  Verbitsky~\cite{2015arXiv150201255A} have investigated the structure of undirected graphs identified by~$\LogicCzwo$ obtaining results similar to the ones we provide in Section~\ref{sec:characterization:of:graphs}.

\paragraph{Organization of the paper.} After providing preliminaries in Section~\ref{sec:prelims}, we give constructions for graphs and finite relational structures with given color degrees in Section~\ref{sec_inversion}, also solving the inversion problem. We then classify graphs identified by~$\LogicCzwo$ in Section~\ref{sec:characterization:of:graphs} and use these results to classify identified finite relational structures in Section~\ref{sec:general:finite:structures}. Finally, in Section~\ref{sect_highdim} we collect negative results on logics~$\LogicCK$ for~$k>2$.

	%!TEX root = main.tex

\section{Preliminaries}\label{sec:prelims}

Unless specified otherwise, by a \dfn{graph} we always mean a finite undirected graph without loops.
We denote the vertex set of a graph $G$ by $V(G)$
	and the edge set by $E(G)$.
For a subset $P$ of $V(G)$, we denote by \dfn{$G[P]$}
	the subgraph of $G$ induced by $P$.
The number of neighbors of a vertex is the \dfn{degree} of the vertex.
If all vertices have degree~$k$, we call the graph \dfn{$k$-regular}.
A \dfn{$(k,\ell)$-biregular graph $G$ on bipartition $(P,Q)$}
	is a graph on vertex set $P\dunion Q$ such that
	$P$ and $Q$ are independent sets,
	every vertex in $P$ has exactly $k$ neighbors in $Q$ and
	every vertex in $Q$ has exactly $\ell$ neighbors in $P$.

Two vertices $v, v'\in V(G)$ are in the same \dfn{orbit of $G$} (more precisely, in the same orbit of the automorphism group of~$G$)
	if $G$ has an automorphism $\varphi$ such that $\varphi(v) = v'$. 
The \dfn{orbit partition} of $G$ is the partition of $V(G)$ into the orbits of $G$.
	
A graph $G$ is \dfn{identified} by a logic $L$
	if there is a sentence $\varphi$ in~$L$ 
	such that every graph which satisfies $\varphi$ 
	is isomorphic to $G$.
A logic $L$ \dfn{distinguishes} two graphs $G$ and $G'$
	if there is a sentence $\varphi$ in $L$ such that $G\models \varphi$ and $G'\not\models \varphi$.
We say that $G$ and $G'$ are \dfn{$L$-equivalent} if they are not distinguished by $L$. 

The \dfn{$k$-variable counting logic}, denoted by $\LogicCK$,
	is the $k$-variable fragment of first-order logic enriched by counting quantifiers.
For every $t\in \mathbb{N}$ we have the counting quantifier~$\exists^{\geq t}$.
For a formula $\varphi(x)$ with the free variable $x$ and for a graph $G$
	we have $G \models \exists^{\geq t}x\ \varphi(x)$ 
	if and only if
	there are at least $t$ vertices $v\in V(G)$ such that $G\models \varphi[v]$ (where, as usual,~$\varphi[v]$ denotes the formula obtained by substituting~$v$ for the free variable~$x$ in~$\varphi$).
The (bound and free) variables in a $\LogicCK$-formula are all from a fixed $k$-element set,
	say $\{x_1, \ldots, x_k\}$, but they can be reused.
For example, the formula 
	\[
		\exists x_1 \exists^{\geq 2} x_2 (E(x_1,x_2) \land \exists^{\geq 5} x_1 E(x_1,x_2))
	\]
	is a valid $\LogicCzwo$-sentence which says that
	there is a vertex with at least 2 neighbors of degree at least~$5$.
The \dfn{$\LogicCK$-type} of a vertex $v$ in $G$
	is the set of all $\LogicCK$-formulas $\varphi(x)$ 
	such that $G\models \varphi[v]$.
The \dfn{$\LogicCK$-coloring} of $G$ 
	is the coloring of each vertex with its $\LogicCK$-type.
The \dfn{$\LogicCK$-partition} of a graph $G$ 
	is the partition of its vertex set induced by their $\LogicCK$-types.
Similarly, the \dfn{$\LogicCK$-type} of a tuple $(v, w)$ 
	is the set of $\LogicCK$-formulas $\varphi(x,y)$
	such that $G\models \varphi[v,w]$. For a vertex~$v$ or a pair of vertices~$v$ and~$w$ to obtain the \dfn{atomic $\LogicCK$-type} we consider only quantifier-free~$\LogicCK$-formulas.

Let $G$ be a graph and $\Pi$ be a partition of $V(G)$.
We say that $\Pi$ is \dfn{equitable}
	if for all $P,Q\in \Pi$ and $v,v'\in P$, 
	the vertices $v$ and $v'$ have the same number of neighbors in $Q$.
A vertex coloring $\chi$ is called equitable 
	if the partition induced by $\chi$ is equitable.

It is a well-known fact 
	that the~$\LogicCzwo$-partition of a graph 
	is its~\dfn{coarsest equitable partition}.
The~$\LogicCzwo$-partition of a graph with $n$ vertices and $m$ edges can be calculated in time $O((m+n)\log n)$ 
	by the color refinement procedure (see \cite{DBLP:conf/esa/BerkholzBG13}), 
	also called na\"{\i}ve vertex classification or 1-dimensional Weisfeiler-Lehman algorithm.

\subsection{Relational structures and partially oriented graphs}\label{subsec:rel:struct:and:pocs:prelims}

In order to simplify the task of 
	generalizing our results to finite relational structures,
	we introduce edge-colored partially oriented graphs.
For every structure $\mathfrak{A}$ 
	we define an edge-colored partially oriented graph $\ecPOG(\mathfrak{A})$
	that holds all the information about $\mathfrak{A}$ 
	expressible by using only two variables.

An \dfn{edge-colored partially oriented graph} (in short, an \dfn{$\ecPOG$})
	is an edge-colored directed graph $(G,c)$ (with~$c$ an edge-coloring function) without loops
	such that for every $(v,w)\in E(G)$, it holds that 
	if $(w,v)\in E(G)$ then~$c((v,w)) = c((w,v))$.
Slightly abusing terminology,
	we say that an edge $(v,w)\in E(G)$ is \dfn{undirected} if $(w,v)\in E(G)$
	and \dfn{directed} otherwise. We accordingly draw~$(v,w)$ and~$(w,v)$ as one undirected edge between~$v$ and~$w$ (see Figure~\ref{fig:POC}) and denote it by~$\{v,w\}$.
An $\ecPOG$ $(G,c)$ is \dfn{complete} 
	if for all $v$, $w \in V(G)$ with $v \neq w$ we have ${(v,w)\in E(G)}$ or ${(w,v)\in E(G)}$.

\begin{figure}[H] 
	\centering \includegraphics[width=0.2\textwidth]{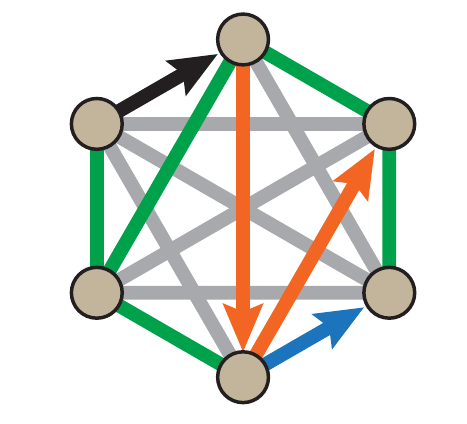}
	\caption{a complete $\ecPOG$}\label{fig:POC}
\end{figure}	

In the following, we consider finite relational structures
	over a fixed signature $\sigma = (R_1, \ldots, R_\ell)$ 
	where~$R_i$ has arity~$r_i$. The various definitions given for graphs are  analogously defined for structures. Let~$\mathfrak{A}$ be a finite relational structure with universe~$A$.
For every $i\in \{1, \ldots, \ell\}$ we define 
a function~${c_i\colon A^2 \to \mathcal{P}(\{1,2\}^{r_i})}$ via
	\[
		c_i (v_1, v_2) \coloneqq
			\left\{ (j_1, \ldots, j_{r_i})\in \{1,2\}^{r_i}
			\mathrel{}\middle|\mathrel{}\mathfrak{A} \models R_i(v_{j_1}, \ldots, v_{j_{r_i}}) 
			\right\}
	\]
	where, as usual,
	~$\mathcal{P}$ denotes the power set
	and~$\{1,2\}^{r_i}$ 
	denotes the set of all $r_i$-tuples over~$\{1,2\}$.
For every $v$, $w \in A$ with $v\neq w$ we let
\[
	c(v,w)\coloneqq (c_1(v,w), \ldots, c_\ell(v,w)).
\]	
Since for each~$i$ the possible images of~$c_i$ come from a set of bounded size, by using the order of the relations~$R_i$ in~$\sigma$,
	one can easily define a canonical linear ordering~$\leq$ on the image of~$c$ (for example, by using the lexicographic order). 
With the help of this ordering, 
	we define ${\ecPOG(\mathfrak{A}) \coloneqq ((A, E_\mathfrak{A}), c_\mathfrak{A})}$ 
	as the complete $\ecPOG$ 
	with vertex set $A$,
 edge set
	\[
		E_\mathfrak{A} \coloneqq \{ (v,w) \mid v,w \in A, v\neq w \text{ and } c(v,w)\leq c(w,v) \}
	\]
	and the edge coloring $c_\mathfrak{A} \coloneqq \left.c\right|_{E_\mathfrak{A}}$, 
	the restriction of $c$ to the domain $E_\mathfrak{A}$.

Note that~$c(v,w)$ uniquely determines 
	the atomic $\LogicCzwo$-types of $v$, $w$, $(v,w)$ and $(w,v)$.
Intuitively, this means that if the universe $A$ has at least two elements,
	then the coloring $c$ stores all the information about $\mathfrak{A}$
	that the logic~$\LogicCzwo$ can express.
Hence, there is no need to include vertex colors or loops in our definition of $\ecPOG$s. (More concretely, the atomic $\LogicCzwo$-type of~$v$ is encoded in the color~$c(v,w)$ for all~$w\neq v$, so does not need to be modeled by loops or vertex colors.)
Note also that in $\ecPOG(\mathfrak{A})$,
	a directed edge cannot have the same color as an undirected edge.

A partition $\Pi$ of the vertex set of an $\ecPOG$ is \dfn{equitable}
	if for all $P, Q\in \Pi$, for all $v,v'\in P$ and for every edge color $c$,
	the vertices $v$ and $v'$ have the same number of $c$-colored outgoing, incoming and undirected edges connecting them to~$Q$.
An $\ecPOG$ is \dfn{color-regular}
	if for each edge color $c$ every vertex $v$ has 
		the same $c$-indegree,
		the same $c$-outdegree and
		the same $c$-degree for undirected edges, respectively.
An edge-colored undirected biregular graph on bipartition~$(P,Q)$
	is called \dfn{color-biregular} if 
	for every edge color $c$ 
	the subgraph induced by the edges of color $c$ is biregular on~$(P,Q)$. 
If the graph is partially oriented, vertices in each bipartition class must additionally have the 		same number of outgoing and incoming edges in each color.
Note that for an equitable partition of an $\ecPOG$,
	the graph induced by one~$\LogicCzwo$-partition class is always color-regular 
	and the graph induced between two~$\LogicCzwo$-partition classes is color-biregular.

	%!TEX root = main.tex

\section{Inversion}\label{sec_inversion}

In this section, 
	we treat the so-called \emph{inversion problem} that is closely related to the question which graphs are identified by~$\LogicCzwo$.
A \dfn{complete invariant} of an equivalence relation~$\equiv$
	on a class~$\mathcal{C}$ of structures 
	is a mapping~$\mathcal{I}$ from~$\mathcal{C}$ to some set~$S$,
	such that~$\mathfrak{A}\equiv \mathfrak{B}$ 
	if and only if~$\mathcal{I}(\mathfrak{A}) = \mathcal{I}(\mathfrak{B})$.
We say that \dfn{$\mathcal{I}$ admits linear time inversion}
	if given~$s\in S$ one can construct in linear time 
	a structure~$\mathfrak{A}$ with~$\mathcal{I}(\mathfrak{A}) = s$ or decide that no such structure exists. This algorithmic task describes the inversion problem.
	Of course, if a structure~$\mathfrak{A}$ is identified then a solution to the inversion problem must construct~$\mathfrak{A}$ when given~$\mathcal{I}(\mathfrak{A})$.

For~$\LogicCzwo$, we show that a natural complete invariant, namely~$\mathcal{I}_C^2$, admits linear time inversion.
Otto~\cite{DBLP:journals/apal/Otto97} proved
	that this invariant admits polynomial time inversion.

Given a graph~$G$, 
	one can define a linear ordering 
	$P_1 \leq \ldots \leq P_t$ on the classes of its coarsest equitable partition,
	which only depends on the~$\LogicCzwo$-equivalence class of~$G$
	(see~\cite{DBLP:journals/apal/Otto97}).
This ordering allows us to define~$\mathcal{I}_C^2$, 
	mapping~$G$ to~$(\bar{s},M)$, 
	where~$\bar{s}$ is the tuple~$(\abs{P_1}, \ldots, \abs{P_t})$ and
	$M$ is a~$t \times t$ matrix,
	such that every vertex in~$P_i$
	has exactly~$M_{ij}$ neighbors in~$P_j$.
It is easy to see that~$\mathcal{I}_C^2$ is a complete invariant of~$\LogicCzwo$.

% ------------------------------------------------------------------------------
\subsection{Inversion for graphs}\label{subsec:grap:inv}

\begin{definition}
	A graph is \dfn{circulant} 
		if it has an automorphism with exactly one cycle. 
	A graph on vertex set~$P \dunion Q$
		is \dfn{doubly-circulant} with respect to~$P$ and~$Q$
		if it has an automorphism with exactly two cycles, 
		one on~$P$ and the other on~$Q$.
	Analogously, a graph is \dfn{multi-circulant} 
		with respect to a partition~$\{P_1,\ldots, P_\ell\}$ of its vertex set
		if it has an automorphism with exactly~$\ell$ cycles, 
		each on one of the~$P_i$.
\end{definition}

It is well-known that circulant graphs can be constructed 
	by numbering the vertices from~$0$ to~$n-1$, 
	picking a set~$S \subseteq \{1, \dots, \lfloor n/2\rfloor\}$ of distances
	and inserting all edges between pairs of vertices 
	whose distance of indices in the circular ordering is contained in~$S$. 
By a double counting argument, 
	one can easily see that for any~$k$-regular graph on~$n$ vertices, the product~$k \cdot n$ is even. Moreover, for all~$k$,~$n \in \mathbb{N}$ with~$k < n$ and~$k \cdot n$ being even, a circulant~$k$-regular graph on~$n$ vertices can be constructed in linear time. Indeed, to obtain a~$k$-regular graph in case~$k$ is even, 
	one must only ensure that~$|S| = k/2$. If~$k$ is odd (which implies that~$n$ is even), it suffices to have~$n/2 \in S$ and~$|S| = (k+1)/2$ 
	(see Figure~\ref{fig:circulant} for~$n=12$ and~$S = \{1,2,3,6\}$). 
We call this the \dfn{circulant construction}.

\begin{figure}[H]
	\subfloat[the circulant construction\label{fig:circulant}]{%
		\includegraphics[width=0.28\textwidth] {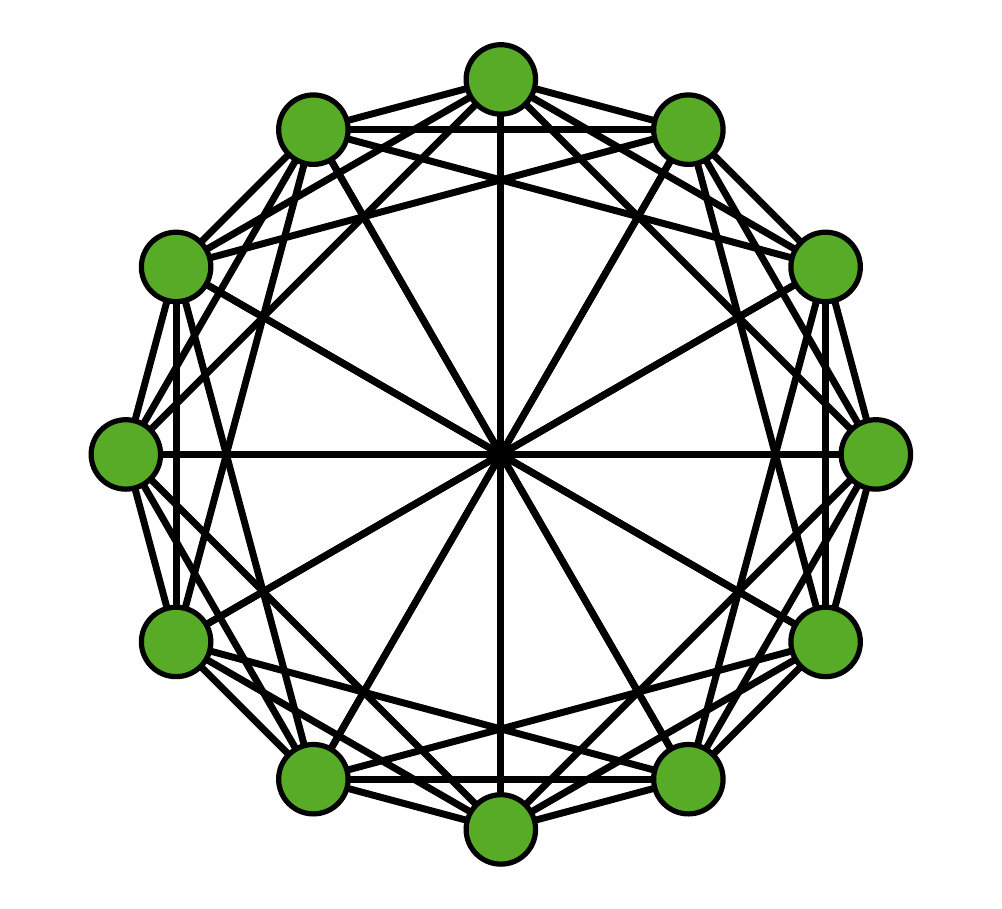}
	}
	\hfill
	\subfloat[the doubly-circulant construction\label{fig:doublyCirculant}]{%
		\includegraphics[width=0.28\textwidth] {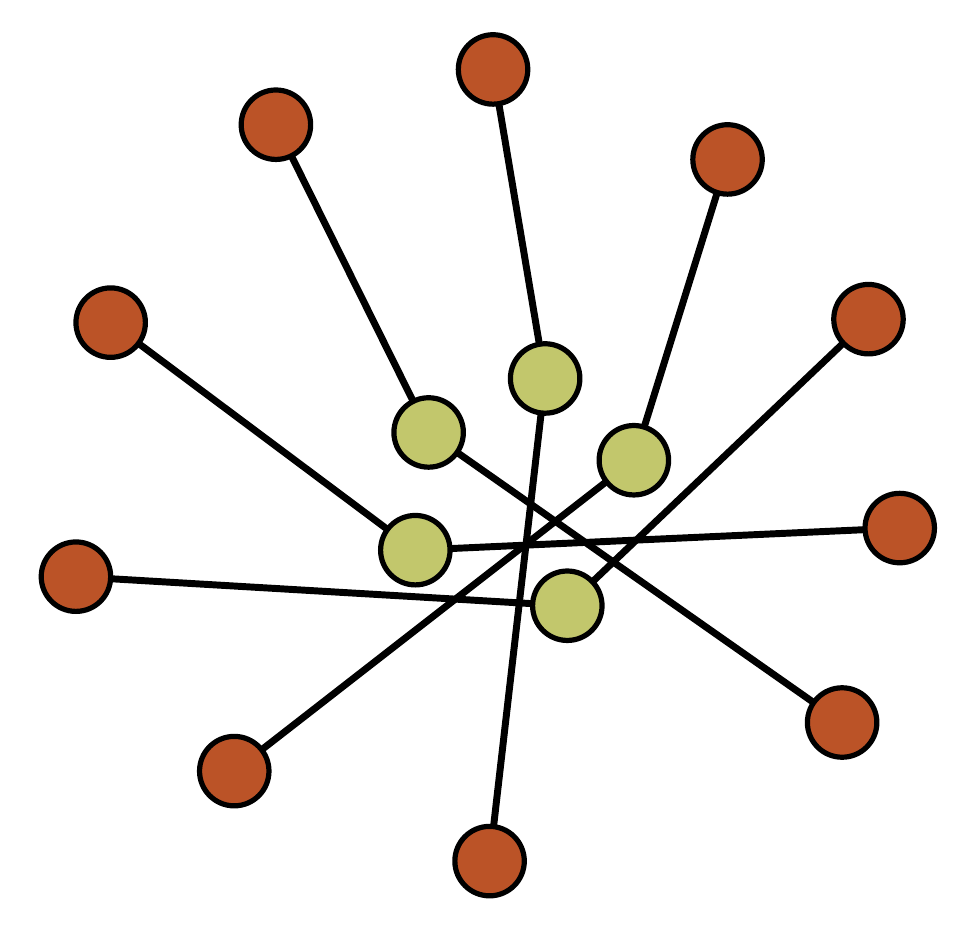}
	}
	\hfill
	\subfloat[Walecki's 1-factorization\label{fig:walecki}]{%
		\includegraphics[width=0.28\textwidth] {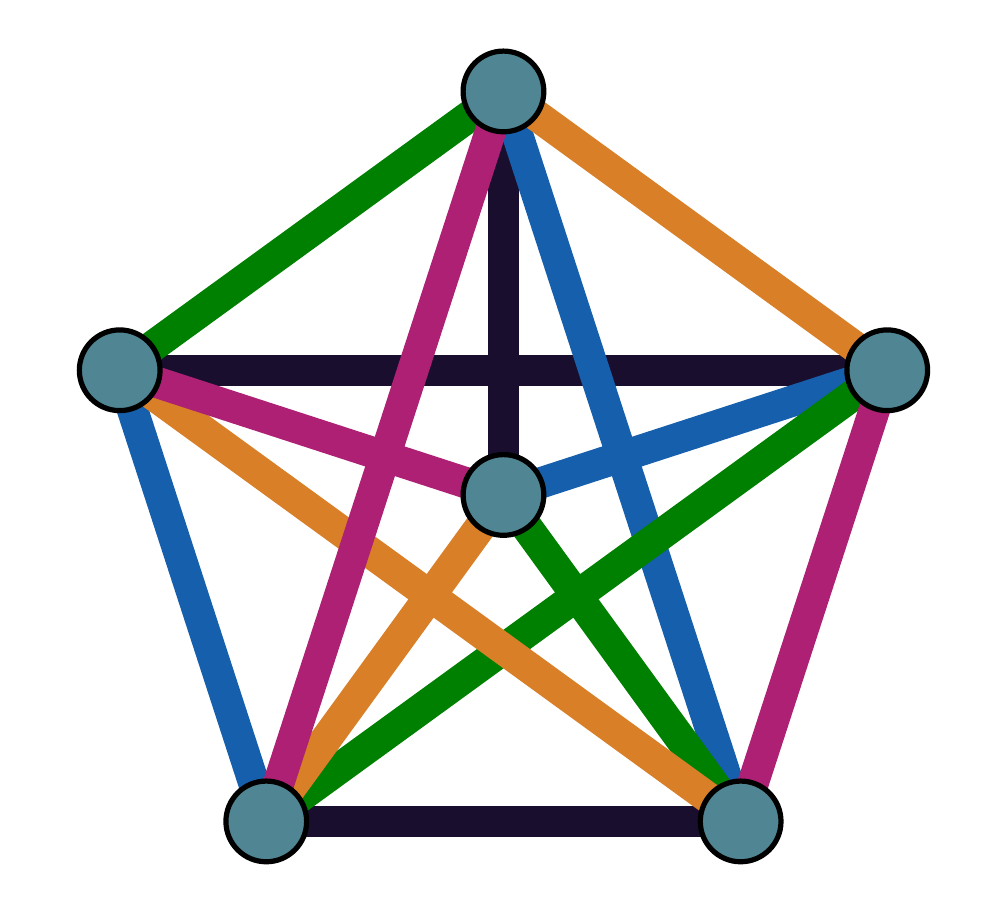}
	}
	\caption{}
	\label{fig:threeFigs}
\end{figure}

We have mentioned the necessary and sufficient evenness condition of~$k \cdot n$ for a~$k$-regular circulant graph on~$n$ vertices (with~$k < n$). For the existence of doubly-circulant graphs, we also obtain a simple criterion. 
Dy double counting, every~$(k,\ell)$-biregular graph 
	on a bipartition~$(P, Q)$ with~$\abs{P} = m$ and~$\abs{Q} = n$ satisfies~$k \cdot m = \ell \cdot n$.
The following lemma says 
	that this condition is essentially sufficient to guarantee the existence of a doubly-circulant~$(k,\ell)$-biregular graph and that such a graph can be constructed in linear time.

\begin{lemma}\label{lem_constructions}
	For all~$k$,~$\ell$,~$m$,~$n \in \mathbb{N}$ 
		with~$k \leq n$,~$\ell \leq m$ and~${k \cdot m = \ell \cdot n}$, 
		one can construct in linear time
		a~$(k, \ell)$-biregular graph on a bipartition~$(P,Q)$ 
		with~$\abs{P} = m$ and~$\abs{Q} = n$,
		which is doubly-circulant with respect to~$P$ and~$Q$. 
\end{lemma}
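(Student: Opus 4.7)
Label $P = \{p_0, \dots, p_{m-1}\}$ and $Q = \{q_0, \dots, q_{n-1}\}$, and let $\varphi$ be the permutation sending $p_i \mapsto p_{(i+1) \bmod m}$ and $q_j \mapsto q_{(j+1) \bmod n}$. Then $\varphi$ consists of exactly two cycles, one of length $m$ on $P$ and one of length $n$ on $Q$, so every $\varphi$-invariant edge set between $P$ and $Q$ automatically witnesses double-circulantness. The plan is therefore to build the edge set as a union of $\varphi$-orbits on $P \times Q$, tuning the number of chosen orbits so that the bidegrees land exactly at $k$ and $\ell$.

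Set $d \coloneqq \gcd(m, n)$ and write $m = d m'$, $n = d n'$ with $\gcd(m', n') = 1$. The $\varphi$-orbit of $(p_i, q_j)$ is $\{(p_{(i+s) \bmod m}, q_{(j+s) \bmod n}) \mid s \in \mathbb{Z}\}$, which has size $\lcm(m, n) = m n / d$. Two pairs lie in the same orbit precisely when they agree in the value of $(j - i) \bmod d$, so the orbits are indexed by residues $r \in \{0, 1, \dots, d-1\}$, giving $d$ orbits in total. A direct count shows that each single orbit contributes exactly $n'$ edges at every vertex of $P$ and $m'$ edges at every vertex of $Q$.

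Consequently, taking $E$ to be the union of $t$ of these orbits for some $t \in \{0, 1, \dots, d\}$ yields a $(tn', tm')$-biregular bipartite graph. The remaining task is to show that the choice $t \coloneqq k / n'$ is feasible, meaning $t$ is a non-negative integer at most $d$, and that it simultaneously gives $tm' = \ell$. Dividing $km = \ell n$ by $d$ yields $km' = \ell n'$, and since $\gcd(m', n') = 1$ this forces $n' \mid k$ (so $t \in \mathbb{Z}_{\geq 0}$) and at the same time $tm' = \ell$. The hypothesis $k \leq n = dn'$ immediately gives $t \leq d$.

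The construction is then explicit: take $E = \bigl\{ \{p_i, q_j\} \mid (j - i) \bmod d \in \{0, 1, \dots, t-1\} \bigr\}$ and write out the adjacency lists directly from this description. The output has $km = \ell n$ edges, and building it takes $O(m + n + k m)$ time, which is linear in the size of the output. The only real subtlety is the integrality argument isolating $t = k / n'$ as a non-negative integer in $\{0, 1, \dots, d\}$; once that is in place, the orbit structure of the cyclic shift simultaneously delivers biregularity, double-circulantness, and a linear-time construction.
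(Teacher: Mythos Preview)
Your proof is correct and rests on the same underlying idea as the paper's: the edge set is built as a union of orbits of the simultaneous cyclic shift $\varphi$, and the degree contribution of a single orbit is exactly $n' = n/\gcd(m,n)$ on the $P$-side and $m' = m/\gcd(m,n)$ on the $Q$-side.

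The difference is purely in presentation. The paper proceeds iteratively: in each round it picks a fresh starting index $j$ and inserts the edges $\{v_i, v'_{(i+j+s)\bmod n}\}$ for $s$ ranging over multiples of $m$ up to $\lcm(m,n)$, then argues that the degree increment $\lcm(m,n)/m$ divides $k$ so the process terminates at exactly the right degree. Unwinding this, one round adds precisely one $\varphi$-orbit, so the paper is doing the same thing as you without naming it. Your version is more transparent: you identify the $d=\gcd(m,n)$ orbits up front via the invariant $(j-i)\bmod d$, compute $t=k/n'$ directly from the arithmetic constraint $km'=\ell n'$ and $\gcd(m',n')=1$, and then take the first $t$ residue classes. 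This makes the integrality and the bound $t\le d$ immediate, and also makes the explicit edge-set description (and hence the linear-time claim) slightly cleaner than the paper's loop-based formulation.
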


\begin{proof}
	Let~$P \coloneqq \{v_0, \dots, v_{m-1}\}$ and~$Q \coloneqq \{v'_0, \dots, v'_{n-1}\}$. We insert edges according to the following algorithm: Let~$j$ be the least integer such that the edge~$\{v_0, v'_j\}$ has not yet been inserted. (So in the beginning,~$j=0$.) For every~$v_i$, we insert edges to all vertices~$v'_{(i + j + s) \bmod n}$, where~$s$ ranges over all multiples of~$m$ up to~$\lcm(m, n)$. Thus, a vertex of~$P$ in the obtained graph has degree~$i_{P,Q} = \frac{\lcm(m,n)}{m}$. Let~$E_{P,Q}$ be the set of edges in the final graph we intend to construct. We have~$\abs{E_{P,Q}} = m \cdot k = n \cdot \ell$ edges, so~$\abs{E_{P,Q}}$ is a multiple of both~$m$ and~$n$. Thus,~$k = \frac{n \cdot \ell}{m} = \frac{\abs{E_{P,Q}}}{m} \geq \frac{\lcm(m,n)}{m} = i_{P,Q}$. If the inequality is proper, we can repeat the above process. In one iteration, the degree of a vertex in~$P$ increases by~$\frac{\lcm(m,n)}{m}$, a divisor of~$\lcm(m,n)$. Since~$\abs{E_{P,Q}}$ is a multiple of~$\lcm(m,n)$, we can therefore always ensure that the degree of a vertex in~$P$ never exceeds~$k$. It strictly increases with every iteration, so after a finite number of iterations we obtain a graph~$H$ in which a vertex in~$P$ has exactly~$k$ neighbors in~$Q$. 

	We provide an automorphism~$\varphi$ of~$H$ such that the orbits with respect to the cyclic group~$\langle\varphi\rangle$ generated by~$\varphi$ are~$P$ and~$Q$. Define~$\varphi$ such that it maps a vertex~$v_i \in P$ to~$v_{(i+1) \bmod m}$ and a vertex~$v'_j \in Q$ to~$v'_{(j+1) \bmod n}$. By construction,~$\varphi(P) = P$ and~$\varphi(Q) = Q$, so it remains to show that~$\varphi$ is a graph isomorphism, i.e., that~$\{\varphi(u),\varphi(v)\} \in E_{P,Q}$ if and only if~$\{u,v\} \in E_{P,Q}$ (this suffices since~$P$ and~$Q$ are independent sets). Let~$v_i \in P$,~$v'_j \in Q$. We have~$\varphi(v_i) = v_{(i+1) \bmod m}$ and~$\varphi(v'_j) = v'_{(j+1) \bmod n}$.

	Suppose~$\{v_i, v'_j\} \in E_{P,Q}$. Then~$j = i+r+s \mod n$ for some~$r \in \{0, \dots, m-1\}$ and some multiple~$s$ of~$m$ with~$s \leq \lcm(m, n)$. This implies that~$(j+1) = (i+1)+r+s \mod n$ for the same~$r$ and~$s$. Thus,~$\{\varphi(v_i), \varphi(v'_j)\} \in E_{P,Q}$. By a symmetric argument, we get equivalence. 
\end{proof}

The construction described in the proof is depicted in Figure~\ref{fig:doublyCirculant}.

The coarsest equitable partition of a graph is not necessarily its orbit partition. 
For example, on the disjoint union of two cycles of different sizes, 
	the coarsest equitable partition is the unit partition, 
	whereas the graph actually has two orbits. 
However, the above construction yields that for each~$\LogicCzwo$-equivalence class, 
	there is a representative
	whose coarsest equitable partition is indeed the orbit partition.

\begin{theorem}\label{thm_representative}
	For every graph~$G$, there is a~$\LogicCzwo$-equivalent graph~$H$
		which is multi-circulant with respect to its coarsest equitable partition.
\end{theorem}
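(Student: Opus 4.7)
The plan is to construct $H$ directly from the complete $C^2$-invariant $\mathcal{I}_C^2(G) = (\bar{s}, M)$, where $\bar{s} = (n_1, \dots, n_t)$ records the sizes of the $C^2$-partition classes $P_1, \dots, P_t$ of $G$ and $M$ records the color-degree structure. Within each $P_i$ I install a circulant $M_{ii}$-regular graph on $n_i$ vertices via the circulant construction; this is feasible since $M_{ii} \cdot n_i$ equals twice the number of edges of $G[P_i]$ and hence is even, while $M_{ii} \leq n_i - 1 < n_i$. Between each pair $P_i, P_j$ with $i \neq j$ I apply Lemma~\ref{lem_constructions} to install a doubly-circulant $(M_{ij}, M_{ji})$-biregular graph on $(P_i, P_j)$, using that $M_{ij} \cdot n_i = M_{ji} \cdot n_j$ by double counting in $G$.

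By construction the partition $\{P_1, \dots, P_t\}$ is equitable in $H$, and $H$ has the same class sizes and color-degree matrix as $G$, so $\mathcal{I}_C^2(H) = \mathcal{I}_C^2(G)$ and hence $H \equiv_{C^2} G$. The coarsest equitable partition of $H$ is a coarsening of $\{P_1, \dots, P_t\}$ (since the latter is equitable), but it has exactly $t$ classes of sizes $n_1, \dots, n_t$ because the coarsest equitable partition is a $C^2$-invariant. Hence the two partitions coincide, and $\{P_1, \dots, P_t\}$ is indeed the coarsest equitable partition of $H$.

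For the multi-circulant property, on each $P_i$ I fix the vertex enumeration $v_0^{(i)}, \dots, v_{n_i-1}^{(i)}$ used by the circulant and doubly-circulant constructions, and I define $\varphi \colon V(H) \to V(H)$ to be the simultaneous cyclic shift $v_j^{(i)} \mapsto v_{(j+1) \bmod n_i}^{(i)}$. On each $P_i$ the shift is an automorphism of the induced subgraph by the very definition of a circulant graph, and on each bipartite subgraph between $P_i$ and $P_j$ it is precisely the automorphism exhibited in the proof of Lemma~\ref{lem_constructions}. Thus $\varphi$ is a single automorphism of $H$ whose cycle decomposition consists of exactly one cycle per $P_i$, which is what is required. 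The main obstacle is arranging that one permutation witnesses multi-circularity on all intra- and inter-class parts at once; this is resolved because both building blocks are explicitly designed to commute with a uniform shift-by-one on each class, so the local shifts assemble into a global automorphism.
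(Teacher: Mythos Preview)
Your proof is correct and follows essentially the same approach as the paper: use the circulant construction within each $\LogicCzwo$-class and Lemma~\ref{lem_constructions} between classes, then observe that the shift-by-one on every class assembles into a single automorphism. You are in fact more explicit than the paper about two points it leaves implicit---that the coarsest equitable partition of $H$ coincides with $\{P_1,\dots,P_t\}$, and that one global permutation witnesses multi-circularity---though your justification of $\mathcal{I}_C^2(H)=\mathcal{I}_C^2(G)$ is slightly circular as written (you invoke it before establishing that $\{P_1,\dots,P_t\}$ is the coarsest equitable partition of $H$); the clean fix is to note directly that color refinement evolves identically on $G$ and $H$ since it depends only on $(\bar s,M)$.
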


\begin{proof}
	Let~$\{ P_1, \ldots, P_t \}$ be the coarsest equitable partition of~$G$.
	For every~$i\in \{1, \ldots, t\}$ the graph~$G[P_i]$ is regular.
	By the circulant construction, 
		there is a circulant graph~$H(P_i)$ on each~$P_i$, 
		which is~$\LogicCzwo$-equivalent to~$G[P_i]$.
	By Lemma~\ref{lem_constructions}, 
		we can connect~$H(P_i)$ to~$H(P_j)$ 
		in such a way that the resulting graph 
		on vertex set~$P_i\dunion P_j$ 
		is~$\LogicCzwo$-equivalent to~$G[P_i\cup P_j]$
		and doubly-circulant with respect to~$P_i$ and~$P_j$.
	In this way, by connecting the graphs~$H(P_i)$ to each other, 
		we get the desired graph~$H$.
\end{proof}

\begin{corollary}
	$\mathcal{I}_C^2$ admits linear time inversion on the class of graphs.
\end{corollary}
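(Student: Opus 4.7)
The plan is to turn the construction underlying Theorem~\ref{thm_representative} into an explicit linear-time inversion algorithm for $\mathcal{I}_C^2$. Given an input $(\bar{s}, M) = ((s_1, \ldots, s_t), M)$, we first verify in $O(t^2)$ time the obvious necessary regularity conditions: for every $i$, $M_{ii} < s_i$ and $M_{ii} \cdot s_i$ is even; for every pair $i \neq j$, $M_{ij} \leq s_j$, $M_{ji} \leq s_i$ and $M_{ij} \cdot s_i = M_{ji} \cdot s_j$. If any of these fails, no graph can realize $(\bar{s}, M)$ and we reject immediately.

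Otherwise, we allocate pairwise disjoint vertex sets $P_1, \ldots, P_t$ with $|P_i| = s_i$, apply the circulant construction on each $P_i$ to produce an $M_{ii}$-regular circulant graph in time $O(s_i M_{ii})$, and for each pair $i < j$ invoke Lemma~\ref{lem_constructions} to connect $P_i$ and $P_j$ by a doubly-circulant $(M_{ij}, M_{ji})$-biregular graph in time $O(s_i M_{ij})$. Summing contributions, the total running time is proportional to the number of vertices and edges of the resulting graph $H$, hence linear in the output size. By construction, $\{P_1, \ldots, P_t\}$ is an equitable partition of $H$ with precisely the prescribed cell sizes and neighborhood counts. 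Whenever some graph $G$ realizes $(\bar{s}, M)$, Theorem~\ref{thm_representative} shows that $H$ is $\LogicCzwo$-equivalent to $G$, and since $\mathcal{I}_C^2$ is a complete $\LogicCzwo$-invariant we conclude $\mathcal{I}_C^2(H) = (\bar{s}, M)$.

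The main obstacle is handling inputs that pass the regularity checks but are not in the image of $\mathcal{I}_C^2$---typically because the canonical linear ordering on classes used in the definition of $\mathcal{I}_C^2$ is violated, or because two of the $P_i$ would be merged in the coarsest equitable partition of $H$. We address this by a final $O(t^2)$ consistency test that compares the row profiles $(s_i, M_{i1}, \ldots, M_{it})$ against the conditions imposed by the canonical ordering; any discrepancy certifies that $(\bar{s}, M)$ lies outside the image of $\mathcal{I}_C^2$ and we reject. All steps together run in linear time in the combined size of the input and output, establishing the corollary.
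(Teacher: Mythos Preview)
Your approach is exactly the paper's: the corollary is stated there without proof, as an immediate consequence of the circulant and doubly-circulant constructions underlying Theorem~\ref{thm_representative} and Lemma~\ref{lem_constructions}. The only part you add beyond the paper is the final consistency check; note that verifying that the given partition is actually the \emph{coarsest} equitable one and that the classes appear in the canonical order really requires simulating color refinement on the $t$-vertex quotient (not merely a single pass over row profiles), though the paper does not spell this step out either.
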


	Given an equivalence relation~$\equiv$ on a class~$\mathcal{C}$ of structures, the
		\dfn{canonization problem} for~$\equiv$
		is the problem of finding a map~$c\colon \mathcal{C} \to \mathcal{C}$ such that
		for every~$\mathfrak{A}$ in~$\mathcal{C}$ 
		we have~$c(\mathfrak{A}) \equiv \mathfrak{A}$ and
		for all~$\mathfrak{A}, \mathfrak{B}$ in~$\mathcal{C}$
		we have~$c(\mathfrak{A}) \equiv c(\mathfrak{B})$.
	The map~$c$ is called a \dfn{canonization} for~$\equiv$ 
		and~$c(\mathfrak{A})$ is called the \dfn{canon} of~$\mathfrak{A}$ (with respect to~$c$).
	Typically, the goal is to find such a canonization~$c$ that can be evaluated efficiently.

As a by-product of the described constructions, we obtain the following corollary.

\begin{corollary}\label{cor:canonization:complexity}
	Canonization of graphs for~$\LogicCzwo$
		can be done in time~$O((n+m) \log n )$.
\end{corollary}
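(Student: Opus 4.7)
The plan is to define the canonization map $c$ as the composition of two steps. Given a graph $G$ on $n$ vertices and $m$ edges, first compute the invariant $\mathcal{I}_C^2(G) = (\bar{s}, M)$ via color refinement, then apply the inversion procedure from Theorem~\ref{thm_representative} to the invariant to obtain the canon $c(G)$.

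First I would run the 1-dimensional Weisfeiler-Lehman algorithm on $G$, which as stated in Section~\ref{sec:prelims} produces the coarsest equitable partition of $G$ in time $O((n+m)\log n)$. The canonical linear ordering $P_1 \leq \ldots \leq P_t$ on its classes, and hence the tuple $\bar{s} = (\abs{P_1}, \ldots, \abs{P_t})$ together with the matrix $M$ of inter-class degrees, can then be read off in time $O(n + t^2) = O(n+m)$. This step yields $\mathcal{I}_C^2(G)$.

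Next I would feed $(\bar{s}, M)$ into the deterministic inversion construction underlying the proof of Theorem~\ref{thm_representative}: apply the circulant construction on each class $P_i$ to build an $M_{ii}$-regular graph, and for each pair $i < j$ invoke Lemma~\ref{lem_constructions} to realize a doubly-circulant $(M_{ij}, M_{ji})$-biregular graph between $P_i$ and $P_j$. The resulting graph $H$ has the same number of vertices and edges as $G$, so the inversion runs in $O(n+m)$ time. By Theorem~\ref{thm_representative}, $H$ is $\LogicCzwo$-equivalent to $G$, which gives $c(G) \equiv_{\LogicCzwo} G$.

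The step that needs to be verified carefully is that this really defines a canonization, i.e.~that $\LogicCzwo$-equivalent inputs yield identical outputs. This follows from two facts: the canonical ordering of the equitable partition classes depends only on the $\LogicCzwo$-equivalence class of $G$, so $\LogicCzwo$-equivalent graphs produce identical invariants $(\bar{s}, M)$; and the circulant and doubly-circulant constructions are deterministic functions of their numerical parameters, so identical invariants lead to identical graphs. Adding the two contributions, the total running time is dominated by color refinement and comes to $O((n+m)\log n)$, as claimed.
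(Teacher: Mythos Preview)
Your approach is exactly the one the paper intends: compute the invariant $\mathcal{I}_C^2(G)$ via color refinement and then apply the deterministic circulant/doubly-circulant inversion from Theorem~\ref{thm_representative}. The paper itself gives no proof beyond ``as a by-product of the described constructions,'' and your write-up supplies precisely those details.

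One small slip: the equality $O(n+t^2)=O(n+m)$ is not valid in general (take a path on $n$ vertices, where $t\approx n/2$ but $m=n-1$). What is true is that $M$ has at most $O(n+m)$ \emph{nonzero} entries, since each edge contributes to a single pair of classes, so a sparse representation of $(\bar{s},M)$ can be extracted in $O(n+m)$. With that adjustment the argument goes through unchanged.
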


Theorem~\ref{thm_representative} also immediately yields one of our main results.
\begin{corollary}\label{cor_orbit}
	If a graph~$G$ is identified by~$\LogicCzwo$, 
		then its coarsest equitable partition is the orbit partition.
\end{corollary}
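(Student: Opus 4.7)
The plan is to derive the corollary essentially as a direct consequence of Theorem~\ref{thm_representative}, using the identification hypothesis to transfer the structural properties of the multi-circulant representative back to~$G$. First I would invoke Theorem~\ref{thm_representative} to obtain a graph~$H$ that is $\LogicCzwo$-equivalent to~$G$ and multi-circulant with respect to its coarsest equitable partition~$\{P_1,\ldots,P_t\}$. Since~$G$ is identified by~$\LogicCzwo$, the equivalence~$H\equiv_{\LogicCzwo} G$ forces~$H\cong G$, so any invariant of the isomorphism type of~$H$ transfers to~$G$; in particular it suffices to prove the statement for~$H$.

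Next I would verify the two containments between the orbit partition and the coarsest equitable partition of~$H$. For one direction, the orbit partition is always equitable, because an automorphism preserves adjacency and hence the number of neighbors any vertex has in any orbit; therefore, as the coarsest equitable partition is coarser than every equitable partition, each class~$P_i$ is a union of orbits. For the other direction, the multi-circulant property provides an automorphism~$\varphi$ of~$H$ whose cycles are precisely the classes~$P_i$; in particular~$\langle\varphi\rangle$ acts transitively on each~$P_i$, so each~$P_i$ is contained in a single orbit. Combining both containments gives that the coarsest equitable partition of~$H$ equals its orbit partition, and transferring along the isomorphism~$H\cong G$ finishes the proof.

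There is essentially no obstacle beyond setting up this transfer argument correctly, since the constructive heavy lifting has already been done in Theorem~\ref{thm_representative}; the only point that requires a moment of care is the standard fact that the orbit partition always refines the coarsest equitable partition, which needs to be stated explicitly to rule out the a priori possibility that a $\LogicCzwo$-class splits into several orbits.
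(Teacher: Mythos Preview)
Your proposal is correct and matches the paper's intended argument: the paper simply states that the corollary is immediately yielded by Theorem~\ref{thm_representative}, and what you have written is precisely the unpacking of that immediacy. The two containments you spell out (orbit partition is equitable, hence refines the coarsest equitable partition; the multi-circulant automorphism forces each~$P_i$ into a single orbit) together with the transfer along the isomorphism~$H\cong G$ forced by identification are exactly the steps implicit in the paper's one-line justification.
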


% \begin{proof}
% 	Let~$G$ be identified by~$\LogicCzwo$. That is, there is a formula~$\varphi \in \LogicCzwo$ that distinguishes~$G$ from all non-isomorphic graphs. In particular, there may not be non-isomorphic graphs that are~$\LogicCzwo$ equivalent to~$G$. We can apply Theorem \ref{thm_representative} to prove that there is a~$\LogicCzwo$-equivalent graph~$H$ of~$(\bar{t},M)$ for which the coarsest equitable partition is the orbit partition. Thus, the coarsest equitable partition on~$G$ is the orbit partition as well, since~$H$ is isomorphic to~$G$ and the orbit partition is isomorphism invariant.
% \end{proof}

Note that the inverse of Corollary~\ref{cor_orbit} does not hold: 
For example, on the graph consisting of two disjoint~$3$-cycles, 
	the coarsest equitable partition is the orbit partition, 
	but the graph is not identified by~$\LogicCzwo$ since it is~$\LogicCzwo$-equivalent to a~$6$-cycle.

In Section~\ref{sect_highdim}, we also show that Corollary~\ref{cor_orbit} 
	does not hold for any logic~$C^k$ with~$k\geq 3$.

\begin{remark}\label{remark:on:factorization:of:K6}
	It is natural to ask 
		whether Corollary~\ref{cor_orbit} holds for finite relational structures in general. 
	This is not the case, 
		not even for edge-colored undirected graphs. 
	For example, let the colors of the edges of the complete graph~$K_6$ 
		correspond to a~$1$-factorization (see Figure~\ref{fig:walecki}).
	There is only one~$1$-factorization of~$K_6$ up to isomorphism (see~\cite{Lindner1976265}).
	This graph is identified by~$\LogicCzwo$ 
		(see Theorem~\ref{thm:edge:colored:color:regular:graph:identified}). 
	It is rigid, i.e., its orbit partition is discrete, 
		whereas its coarsest equitable partition is the unit partition.
\end{remark}

% ------------------------------------------------------------------

\subsection{Inversion for finite structures}\label{subsec:finstruct:inv}

Let~$\mathfrak{A} = (A,R_1,\ldots,R_\ell)$ be a finite relational structure.
We define~$\left.\mathfrak{A}\right|_2$,
	the restriction of~$\mathfrak{A}$ to arity~$2$,
	to be the relational structure
	$(A,R'_1,\ldots,R'_\ell)$
	with
	\[
		R'_i \coloneqq 
			\{ (v_1, \ldots,v_{r_i})\in R_i 
				\mid \{v_1, \ldots,v_{r_i}\} \text{ has at most~$2$ elements}, 
			\}
	\]
	where~$r_i$ is the arity of~$R_i$.
Obviously, 
	two relational structures~$\mathfrak{A}$ and~$\mathfrak{B}$ are~$\LogicCzwo$-equivalent
	if and only if~$\left.\mathfrak{A}\right|_2$ and~$\left.\mathfrak{B}\right|_2$
	are~$\LogicCzwo$-equivalent.
Furthermore,~$\left.\mathfrak{A}\right|_2$ and~$\left.\mathfrak{B}\right|_2$ 
	are~$\LogicCzwo$-equivalent 
	if and only if~$\ecPOG(\left.\mathfrak{A}\right|_2)$ 
	and~$\ecPOG(\left.\mathfrak{B}\right|_2)$ are~$\LogicCzwo$-equivalent. 
Hence, the inversion problem for~$I^2_C$ on finite relational structures
	reduces to the inversion problem for~$I^2_C$ on~$\ecPOG$s.

The complete invariant~$I^2_C$ that we defined on the class of graphs 
	can naturally be extended to the class of~$\ecPOG$s:
we simply replace the matrix in the original definition of~$I^2_C$ with a matrix~$M$ such that for all~$i, j\in \{ 1, \ldots, k \}$
	the entry~$M_{ij}$ is a tuple encoding for each edge-color~$d$
	the number of~$d$-colored outgoing edges 
	from a vertex in~$P_i$
	to~$P_j$.

Similarly to the case of graphs,
	in order to solve the inversion problem for~$\ecPOG$s
	it suffices to solve it 
for the color-regular case and for the color-biregular case.

\paragraph{Color-regular case:}
Let~$n$ be the number of vertices. 
If~$n$ is odd, the degrees in each color in the underlying undirected graph must be even. Consequently, the circulant construction from Subsection~\ref{subsec:grap:inv}
	can be adapted to perform the inversion for directed and colored edges 
	(see Figure~\ref{fig:circulant-for-directed}). In more detail, for every map~$\psi \colon \{1,\ldots,n-1\} \rightarrow \{1,\ldots,t\}$ such that for all~$i$ we have~$\psi(n-i) = \psi(i)$ and for all~$j$ we have~$|\psi^{-1}(j)| = d_j$, we obtain a circulant graph~$\Circ(\psi)$ by coloring the edge between vertices~$i$ and~$j$ satisfying~$j\leq i$ with the color~$\psi(i-j)$. This can also be interpreted as a~$2$-factorization of the graph into~$(n-1)/2$ graphs that are 2-regular. The map~$\psi$ then dictates that the edges of the~$\ell$-th 2-factor are to be colored with color~$\varphi(\ell)$.

\begin{figure}[H] 
	\centering 
	{ \includegraphics[width=0.2\textwidth]{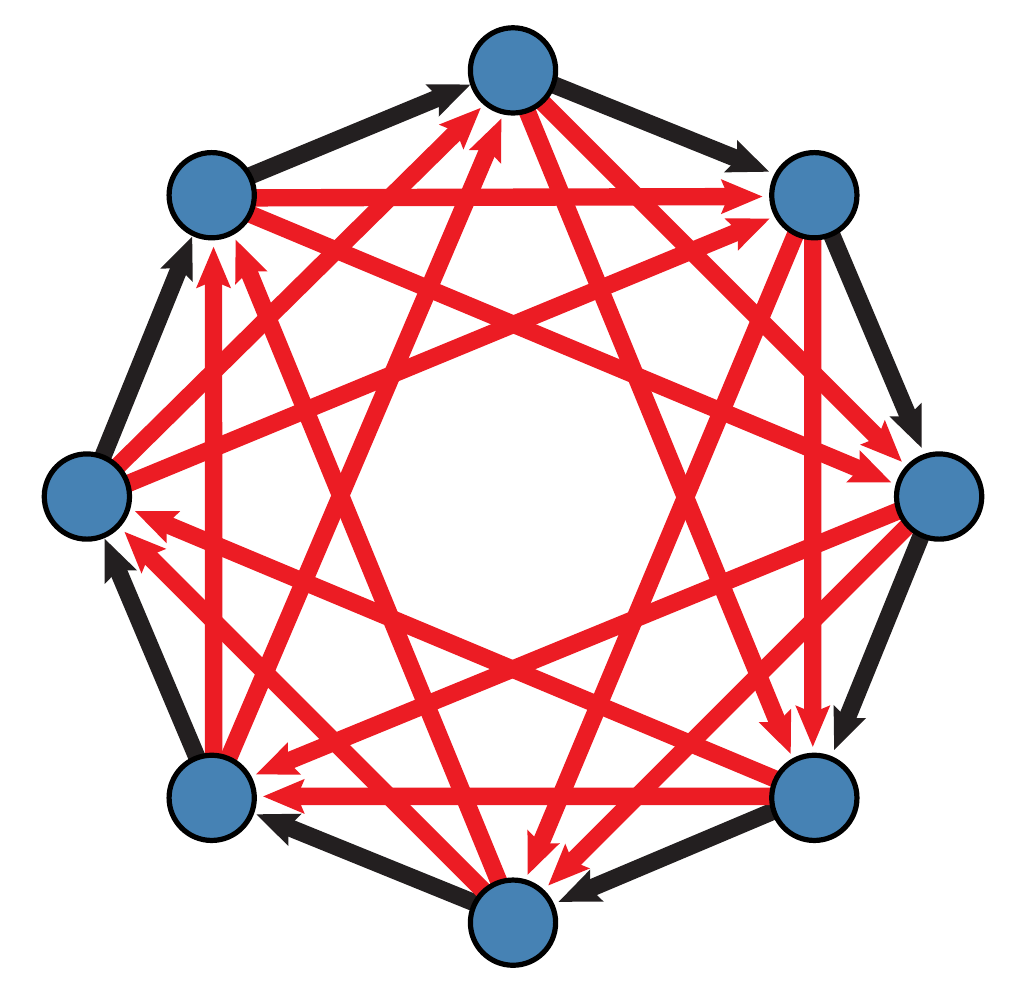} }
	\caption{
		The circulant construction can be adapted to colored graphs as long as all color degrees are even.
	}
	\label{fig:circulant-for-directed}
\end{figure}

If~$n$ is even and more than one color degree is odd, 
	we cannot apply the circulant construction. In fact, as Remark~\ref{remark:on:factorization:of:K6} shows it might not be possible to construct a transitive graph with the given color degrees.
Still, we can construct a canonical representative 
	for the corresponding~$\LogicCzwo$-equivalence-class as follows.
We use the 1-factorization construction due to Walecki (see~\cite{MR0124983}).
For even~$n$ the construction decomposes the complete graph~$K_n$ 
	into~$n-1$ disjoint perfect matchings. 
Geometrically, it is obtained as follows. 
Positioning the vertices~$1,\ldots,n-1$ in the plane to form a regular~$(n-1)$-gon and placing the vertex~$n$ into the center, each matching consists of an edge from~$n$ to some vertex~$i\in \{1,\ldots,n-1\}$ and all edges that are perpendicular to this edge~(see Figure~\ref{fig:walecki}).
In algebraic terms, the edge~$\{i,n\}$ is part of the~$i$-th matching for~$i\in\{1\,\ldots,n-1\}$. For~$i,j\in \{1,\ldots,n-1\}$ with~$i\leq j$ either~$|n-1-j+i|/2$ or~$|i+j|/2$ is an integer~$\ell$ and the edge~$\{i,j\}$ is part of the~$\ell$-th matching. This shows that the factorization can be computed in linear time
(see also \cite[Example 7.1.2.]{West2000}).

Suppose we want to have a graph with the specified color degrees~$d_1,\ldots,d_t$. By interpreting non-edges as edges of a special color, we can assume that~$\sum_{i = 1}^t d_i = n-1$. Then any function~$\psi \colon \{1,\dots,n-1\} \rightarrow \{1,\dots,t\}$ 
	with~$|\psi^{-1}(j)| = d_j$ yields an edge-colored color-regular graph 
	with the given color degrees 
	if the edges of the~$i$-th matching in the 1-factorization receive color~$\psi(i)$. 
Note that we can choose~$\psi$ canonically by mapping the first~$d_1$ integers to~$1$, 
	the next~$d_2$ integers to~$2$ and so on.  
	
We thus obtain a canonical construction, which yields, for every set of given color degrees, an edge-colored color-regular undirected graphs. 
We call this the \dfn{matching construction} 
	and denote for given color degrees~$d_1, \dots, d_t$ 
	the canon obtained from an edge-coloring function~$\psi$ by $\Match(\psi)$. 
The construction due to Walecki \cite{MR0124983}
		has the property that two matchings of the 1-factorization 
		always yield a Hamiltonian cycle. 
		If we require directed edges, in which case in-degrees must be equal to out-degrees, we pair a suitable number of matchings and orient the obtained Hamiltonian cycles.
With this  construction 
	it is directly possible to perform inversion 
	of~$\ecPOG$s
	and thus of finite relational structures in linear time.

\paragraph{Color-biregular case:}
The doubly-circulant construction that was described in Subsection~\ref{subsec:grap:inv} for graphs can directly be altered to handle colored directed edges, by treating each color one after the other.

\begin{corollary}
	$\mathcal{I}_C^2$ admits linear time inversion on the class of finite relational structures.
\end{corollary}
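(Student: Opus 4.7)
The plan is to assemble the constructions already developed in the color-regular and color-biregular subsections, using the reduction from finite relational structures to~$\ecPOG$s that was observed just after the definition of~$\left.\mathfrak{A}\right|_2$. Concretely, given an alleged invariant~$(\bar{s},M)$ for an~$\ecPOG$ with ordered equitable classes~$P_1\leq \cdots \leq P_t$, the first step is to read off from~$M$ the prescribed color degrees inside each~$P_i$ and between each pair~$(P_i,P_j)$, and to check in linear time the local arithmetic conditions (evenness of~$|P_i|\cdot d$ for undirected color degrees when~$|P_i|$ is odd, equal in- and out-degrees within each class, and the biregularity balance~$k\cdot |P_i|=\ell\cdot |P_j|$ across classes). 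If any of these sanity checks fail, we reject; otherwise we proceed to build the pieces.

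Next, for each class~$P_i$, I would build a color-regular~$\ecPOG$ on~$P_i$ matching the diagonal entry~$M_{ii}$. If~$|P_i|$ is odd (so every prescribed undirected color degree is forced to be even) or all color degrees happen to be even, I apply~$\Circ(\psi)$ from the adapted circulant construction, partitioning~$\{1,\ldots,\lfloor |P_i|/2\rfloor\}$ canonically according to the color-degree tuple and then orienting the 2-factors whose colors have unequal in- and out-degrees. Otherwise I use the matching construction~$\Match(\psi)$ from Walecki's 1-factorization, pairing matchings as described in the excerpt to accommodate colors that require directed edges. In both cases the canonical choice of~$\psi$ (map the first~$d_1$ integers to color~$1$, the next~$d_2$ to color~$2$, etc.) ensures that the output depends only on the invariant, and the construction runs in time linear in the number of edges produced on~$P_i$.

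Then, for each ordered pair~$(P_i,P_j)$ with~$i<j$, I would build the color-biregular~$\ecPOG$ between~$P_i$ and~$P_j$ by invoking the doubly-circulant construction from Lemma~\ref{lem_constructions}, applied color by color (and, for directed colors, orientation by orientation) as indicated at the end of the color-biregular paragraph. The arithmetic precondition~$k\cdot|P_i|=\ell\cdot|P_j|$ that Lemma~\ref{lem_constructions} demands is exactly what~$M$ supplies, and the total work across colors and pairs is linear in the number of inserted edges, hence linear in the size of the output. Combining the intra-class and inter-class pieces yields the desired~$\ecPOG$; translating it back to a relational structure over the original signature~$\sigma$ is immediate from the definition of~$\ecPOG(\mathfrak{A})$ and costs only linear time.

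The main obstacle I anticipate is not a conceptual one but rather the careful bookkeeping in the color-regular case when~$|P_i|$ is even and several color degrees are odd: there the circulant construction is unavailable (as Remark~\ref{remark:on:factorization:of:K6} highlights), and one must verify that Walecki's 1-factorization combined with the canonical coloring function~$\psi$ genuinely realizes the prescribed color-degree tuple while remaining computable in linear time. Once this is in place, the verification that the resulting~$\ecPOG$ has exactly the invariant~$(\bar{s},M)$ — i.e., that the~$P_i$ are the classes of its coarsest equitable partition in the correct canonical order — follows from the~$\LogicCzwo$-equivalence of each constructed piece to the corresponding restriction, exactly as in the proof of Theorem~\ref{thm_representative}.
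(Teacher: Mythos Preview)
Your proposal is correct and follows essentially the same route as the paper: reduce from finite relational structures to~$\ecPOG$s, then handle each~$\LogicCzwo$-class by the adapted circulant construction (odd size) or the Walecki matching construction (even size), and handle each pair of classes by the doubly-circulant construction applied color by color. The paper presents the corollary without a separate proof, treating it as an immediate consequence of the two preceding paragraphs; your write-up simply makes explicit the bookkeeping (arithmetic sanity checks, canonical choice of~$\psi$, final verification of the invariant) that the paper leaves implicit.
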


	%!TEX root = main.tex

\section{\texorpdfstring{Characterization of the graphs identified by~$\LogicCzwo$}                   {Characterization of the graphs identified by~C2}}\label{sec:characterization:of:graphs}

Here we examine the graphs that are identified by the logic~$\LogicCzwo$ and give a complete characterization of them. We derive various results from the characterization.

\begin{definition}
	Let~$T$ be a tree with a designated vertex~$v$.
	For~$i\in\{1,\ldots,5\}$
		let~$(T_i,v_i)$ be an isomorphic copy of~$(T,v)$.
	Let~$F$ be the disjoint union of the five trees~$(T_i,v_i)$
		and~$E$ be the edge set of a~$5$-cycle on vertex set~$\{v_1, \ldots, v_5\}$.
	Then we call~$F+E$ a \dfn{bouquet}.

	A \dfn{bouquet forest} 
		is a disjoint union of 
		vertex-colored trees and 
		non-isomorphic  vertex-colored bouquets.
\end{definition}

\begin{figure}[H] \label{fig:bouquet-forest}
	\centering \includegraphics[width=0.9\textwidth]{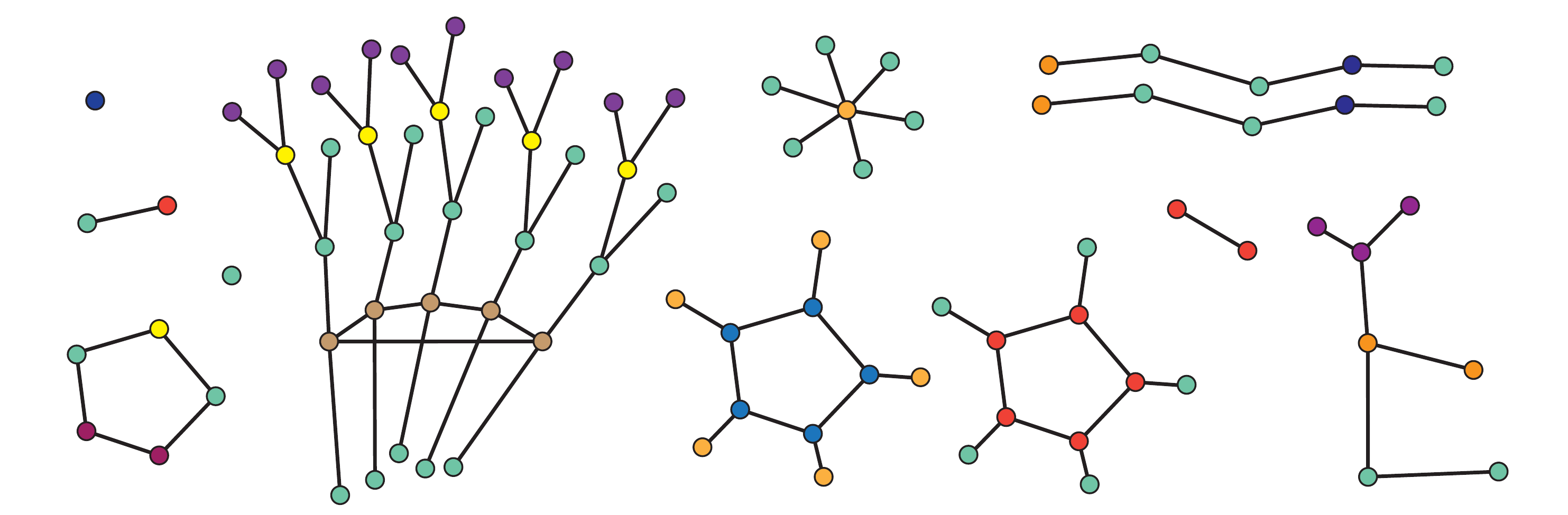}
	\caption{a bouquet forest with~$4$ bouquets}
\end{figure}

\begin{notation}
	For sets~$P$,~$Q$ we define~$[P,Q] \coloneqq {\{ \{v,w\} \mid v\neq w,\ v\in P,\ w\in Q\}}$.
\end{notation}

\begin{definition}
	Let~$G$ be a graph with~$\LogicCzwo$-coloring~$\chi$.
	We define the \dfn{flip of~$G$}
		to be the vertex-colored graph~$(F,\chi)$ with 
		$V(F) = V(G)$ and
		$E(F) = E(G) \symdiff ([P_1,Q_1]\cup \ldots \cup [P_t,Q_t])$, 
		where the~$(P_i,Q_i)$ are all pairs of (not necessarily distinct)~$\LogicCzwo$-partition classes of~$G$ which satisfy
		\[
			\abs{[P_i,Q_i]\cap E(G)} > \abs{[P_i,Q_i]\setminus E(G)}.
		\]
	We say that~$(F, \chi)$ is a \dfn{flipped graph}. 
		If~$\chi$ is the~$\LogicCzwo$-coloring of~$F$
			and~$(F, \chi)$ is a flipped graph,
			we also say that~$F$ is flipped.

\end{definition}

\bigskip
\begin{figure}[htb] \label{fig:a-graph-and-its-flip}
	\centering \includegraphics[width=0.7\textwidth]{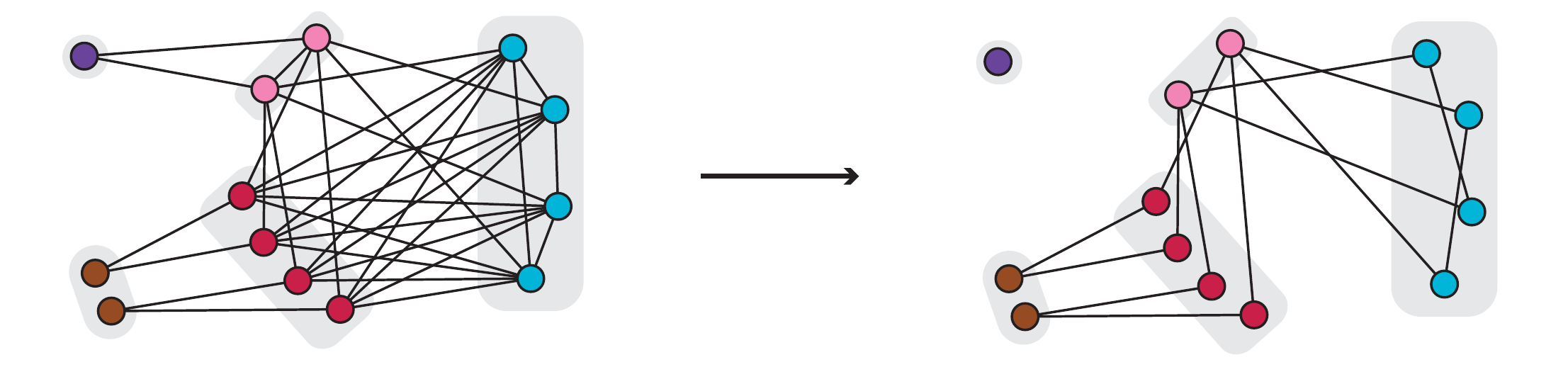}
	\caption{a graph with its~$\LogicCzwo$-partition and its flip}
\end{figure}

The symbol~$\Delta$ in the definition denotes, as usual, the symmetric difference.
To obtain the flip of~$G$, 
	we first calculate the~$\LogicCzwo$-partition of~$G$
	and check for each class~$P$
	whether~$G$ has more edges in~$P$ than non-edges.
If so, we take the complement of the edge relation there.
We do the same for all pairs~$P,Q$ of distinct~$\LogicCzwo$-partition classes,
	considering the edges in~$[P,Q]$ there.

While we may consider arbitrary  vertex colorings in general, we want to stress the fact that
	the coloring~$\chi$ of the flip~$(F,\chi)$ of a graph~$G$
	is always the~$\LogicCzwo$-coloring of~$G$.
Note that~$\chi$ induces an equitable partition of~$F$, 
	although it might not be the coarsest one.
The aim of this section is to prove 
	the following classification: a graph is identified by~$\LogicCzwo$
	if and only if its flip is a bouquet forest.
We first argue that we can restrict ourselves to flipped graphs.

\begin{lemma}\label{le:iff-identified}
	A graph~$G$ is identified by~$\LogicCzwo$ 
		if and only if its flip is identified by~$\LogicCzwo$.
\end{lemma}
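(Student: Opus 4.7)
The plan is to exhibit the flip operation as a natural bijection (on isomorphism classes) between graphs and flipped vertex-coloured graphs, and to verify that this bijection preserves and reflects $\LogicCzwo$-equivalence. Once this is set up, the lemma follows directly, since being identified means that the $\LogicCzwo$-equivalence class equals the isomorphism class, and this property transfers along the bijection.

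First I would establish invertibility. Because the defining condition of the flip uses a strict inequality, every pair $(P,Q)$ that was flipped in $G$ satisfies $\abs{[P,Q]\cap E(F)}<\abs{[P,Q]\setminus E(F)}$ in $F$, while every pair that was not flipped satisfies $\abs{[P,Q]\cap E(F)}\leq\abs{[P,Q]\setminus E(F)}$. Hence, given a flipped graph $(F,\chi)$, the original graph $G$ is recovered by complementing $[P,Q]$ for exactly those pairs of $\chi$-classes where the strict inequality holds, and this reconstruction depends on nothing but $(F,\chi)$. Any isomorphism of graphs must map the $\LogicCzwo$-partition to the $\LogicCzwo$-partition of the image, so flip and un-flip are well defined on isomorphism classes and invert each other there.

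Next I would show that the flip preserves and reflects $\LogicCzwo$-equivalence. Since the $\LogicCzwo$-coloring $\chi$ of $G$ is itself $\LogicCzwo$-definable, $G\equiv_{\LogicCzwo}G'$ is equivalent to $(G,\chi)\equiv_{\LogicCzwo}(G',\chi')$ after identifying the respective $\LogicCzwo$-colorings. Inside each pair $(P,Q)$ of $\chi$-classes, the flip either replaces the (bi)regular subgraph by its (bipartite) complement or leaves it untouched, and which of the two occurs is dictated by the $\LogicCzwo$-expressible comparison between the $Q$-degree of a vertex in $P$ and $|Q|/2$. Therefore, for every vertex $v$ and every $\chi$-class $Q$, the number of $F$-neighbours of $v$ in $Q$ is a function of its $G$-color-degrees, and so the $\LogicCzwo$-type of $v$ in $(F,\chi)$ is determined by its $\LogicCzwo$-type in $(G,\chi)$. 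The same argument applied to un-flipping yields the converse.

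To assemble the lemma, assume first that $G$ is identified and let $(F',\chi')\equiv_{\LogicCzwo}(F,\chi)$. Un-flipping $(F',\chi')$ yields a graph $G'$ with $G'\equiv_{\LogicCzwo}G$, whence $G'\cong G$, and hence $(F',\chi')\cong(F,\chi)$. Conversely, if $(F,\chi)$ is identified and $G'\equiv_{\LogicCzwo}G$, then the respective flips are $\LogicCzwo$-equivalent, hence isomorphic, so un-flipping gives $G\cong G'$. The main obstacle I anticipate is the book-keeping around the coloring $\chi$: by definition it is the $\LogicCzwo$-coloring of $G$ rather than of $F$, so one has to check carefully that any vertex-coloured graph $\LogicCzwo$-equivalent to a flipped graph is itself flipped (in particular, that its prescribed vertex coloring is the $\LogicCzwo$-coloring of its un-flip), which is what allows the bijection to match up on entire $\LogicCzwo$-equivalence classes.
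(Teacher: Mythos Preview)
Your overall strategy---showing that the flip sets up a correspondence preserving both isomorphism and $\LogicCzwo$-equivalence---is sound and is exactly the idea behind the paper's proof. However, your invertibility paragraph contains an error. You observe that flipped pairs satisfy strict $<$ in $F$ while non-flipped pairs satisfy $\leq$ in $F$, and then claim that $G$ is recovered from $(F,\chi)$ by complementing precisely the pairs where ``the strict inequality holds.'' But non-flipped pairs can also satisfy strict $<$ in $F$ (a pair with, say, one edge out of ten possible has strict $<$ in both $G$ and $F$), so this rule over-complements. In fact no rule based solely on edge counts in $F$ can work: the empty graph and the complete graph on four vertices yield the same $(F,\chi)$ as abstract vertex-coloured graphs once colours are taken up to relabelling.

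The fix, which you essentially anticipate in your final paragraph, is that $\chi$ is not an arbitrary colouring but the $\LogicCzwo$-type function of $G$; the type of a vertex records its degree into every $\chi$-class of $G$, so $\chi$ itself tells you which pairs satisfied $>$ in $G$, and those are the ones to un-complement. The paper sidesteps this bookkeeping altogether: it works directly with graphs $G,G'$ on the same vertex set sharing the same $\LogicCzwo$-colouring $\chi$ (which is no loss, since any $G'\equiv_{\LogicCzwo}G$ can be realised this way). Then the flip of each is the symmetric difference with one and the same edge set determined by $\chi$, so a map $f$ is an isomorphism $(G,\chi)\to(G',\chi)$ iff it is an isomorphism $(F,\chi)\to(F',\chi)$. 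Both directions of the lemma follow immediately by contraposition, without ever needing an explicit inverse rule.
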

\begin{proof}
	Let~$G$ be a graph and~$\chi$ its~$\LogicCzwo$-coloring.
	Let~$(F, \chi)$ be the flip of~$G$.
	Note that~$V(F) = V(G)$ by definition.
	Let~$G'$ be a graph on~$V(G)$
		with the same~$\LogicCzwo$-coloring~$\chi$ and let~$(F',\chi)$ be the flip of~$G'$.
	A map~$f: V(G)\to V(G)$ 
		is an isomorphism from~$(G,\chi)$ to~$(G',\chi)$
		if and only if it 
		is an isomorphism from~$(F,\chi)$ to~$(F',\chi)$. 
	This implies the contrapositives of both directions of the lemma.
\end{proof}

For a graph~$G$ and its~$\LogicCzwo$-coloring~$\chi$,
	we know that~$G$ and~$(G, \chi)$ have the same automorphisms. If for a~$\LogicCzwo$-partition class~$P$ of~$G$, 
	the induced subgraph~$G[P]$ is not identified by~$\LogicCzwo$, 
	then~$G$ is not identified by~$\LogicCzwo$.
The same is true for subgraphs 
	consisting of edges~$[P,Q]\cap E(G)$ for distinct classes~$P$ and~$Q$.
We are going to exploit this simple fact
	in order to prove the classification. 
The following two lemmas
	provide strong restrictions 
	to the edge relation of the regular and biregular graphs 
	identified by~$\LogicCzwo$.

\begin{lemma} \label{le:regular}
	A flipped regular graph is identified by~$\LogicCzwo$
		if and only if it is a graph with no edges, a matching, or a~$5$-cycle.
\end{lemma}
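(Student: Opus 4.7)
The plan is to reduce the identification question to one of uniqueness. Since any $k$-regular graph has the trivial (single-class) partition as an equitable partition, that partition is its coarsest equitable partition and hence its~$\LogicCzwo$-partition. In particular, any two $k$-regular graphs on $n$ vertices share the same~$\mathcal{I}_C^2$-invariant and are therefore~$\LogicCzwo$-equivalent. Consequently, a flipped $k$-regular graph $F$ on $n$ vertices is identified by~$\LogicCzwo$ if and only if $F$ is the unique $k$-regular graph on $n$ vertices up to isomorphism.

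The direction $(\Leftarrow)$ is then immediate: each of the three candidates---the empty graph, a perfect matching (on even $n$), and the $5$-cycle---is the only $k$-regular graph on its vertex count, and one readily checks that each is flipped. For $C_5$, in particular, the single~$\LogicCzwo$-class~$P$ satisfies $\abs{E \cap [P,P]} = 5 = \abs{[P,P]\setminus E}$, so the flip condition holds.

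For the converse $(\Rightarrow)$, I would first extract the flip condition: since $F$ has $kn/2$ edges and flippedness forces $kn/2 \leq \binom{n}{2}/2$, it follows that $k \leq (n-1)/2$, that is, $n \geq 2k+1$. I would then split on $k$. The cases $k=0$ (empty graph) and $k=1$ (matching, existing only for even $n$) are immediate. For $k=2$, either $n=5$ and $F$ must be $C_5$, or $n \geq 6$, in which case $C_n$ and $C_3 \dunion C_{n-3}$ are two non-isomorphic $2$-regular graphs on $n$ vertices, contradicting the uniqueness of $F$. For $k \geq 3$ with $kn$ even and $n \geq 2k+1$, I need to exhibit two non-isomorphic $k$-regular graphs on $n$ vertices. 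Here I would invoke the circulant construction from Subsection~\ref{subsec:grap:inv}: with $\lfloor n/2 \rfloor \geq k$ there are multiple admissible distance sets $S \subseteq \{1, \ldots, \lfloor n/2 \rfloor\}$, and two well-chosen ones produce circulant graphs distinguished by an isomorphism invariant such as their triangle count.

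The main obstacle is making this last step rigorous uniformly over all admissible pairs $(k,n)$ with $k \geq 3$. For $n$ sufficiently larger than $2k+1$ this is routine: one compares a set $S$ containing a block of consecutive distances (forcing many triangles) against a spread-out $S$ (with fewer triangles), or one falls back to disjoint unions of copies of $K_{k+1}$ as an alternative family of non-isomorphic $k$-regular graphs. The tight boundary $n = 2k+1$, which forces $k$ even, requires a dedicated case analysis; here one selects two specific distance sets of size $k/2$ and verifies that the resulting circulants have distinct triangle counts, yielding two non-isomorphic $k$-regular graphs as required.
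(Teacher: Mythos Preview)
Your reduction to the uniqueness question is exactly the paper's starting point: a flipped $k$-regular graph on $n$ vertices is identified by~$\LogicCzwo$ if and only if it is the unique $k$-regular graph on $n$ vertices, and the flip condition indeed gives $n\ge 2k+1$. Your treatment of $k\le 2$ matches the paper.

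Where your plan diverges is in producing the two non-isomorphic $k$-regular graphs for $k\ge 3$. The paper does not compare circulants via triangle counts. Instead it fixes $H$ to be a \emph{connected} circulant and, in almost every case, builds $H'$ to be \emph{disconnected}: for $k$ odd (so $n$ even) it takes two $k$-regular pieces on $n/2$ or $n/2\pm 1$ vertices; for $k$ even and $n$ even it takes two pieces on $n/2$ vertices; for $k$ even and $n>2k+1$ it takes pieces on $\lfloor n/2\rfloor$ and $\lceil n/2\rceil$ vertices. Connectedness versus disconnectedness is the entire invariant---no counting is needed. Only the single boundary case $n=2k+1$ (forcing $k$ even) remains, and there the paper gives an explicit \emph{non-transitive} $k$-regular graph: take two disjoint $k$-cliques, match half of each clique to half of the other, and join all unmatched vertices to one new vertex $v_0$. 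For $k>2$ the vertex $v_0$ lies in no $k$-clique, so this graph is not vertex-transitive and hence not isomorphic to any circulant.

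Your proposed route---selecting two distance sets and comparing triangle counts---could in principle work, but you have not carried it out, and at the tight boundary $n=2k+1$ it is not clear without computation that two admissible sets of size $k/2$ always yield distinct triangle counts. The paper's connected/disconnected dichotomy together with the one ad hoc construction for $n=2k+1$ avoids this entirely and is both shorter and fully rigorous; it is worth adopting in place of the triangle-count argument.
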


\begin{proof}
	The backward direction is easy to check.
 	For the forward direction, let~$G$ be a flipped~$k$-regular graph on~$n$ vertices.
	Assume that~$G$ has at least one edge and it is neither a matching nor a~$5$-cycle.
	Then we have~$n > 2$ and~$k \geq 2$.
	In order to show that~$G$ is not identified by~$\LogicCzwo$,
		we construct two non-isomorphic graphs~$H$ and~$H'$,
		both~$\LogicCzwo$-equivalent to~$G$.
	Note that the graphs which are~$\LogicCzwo$-equivalent to~$G$
		are exactly the~$k$-regular graphs on~$n$~vertices.

	Since~$k \geq 2$, we can apply the circulant construction from Subsection~\ref{subsec:grap:inv}
		in a way so that the resulting graph is connected.
	We choose~$H$ to be such a connected circulant graph.

	For the construction of~$H'$ we do a case analysis:
	\begin{enumerate}
		\item \emph{$k$ is odd:}
			Then~$n$ is even.
			\begin{enumerate}
			
			\item \emph{$n/2$ is even:}
					By assumption, we have~$k < n/2$. 
					Let~$H'$ be the disjoint union of 
						two~$k$-regular graphs on~$n/2$ vertices.
						
				\item \emph{$n/2$ is odd:} 
					Since~$k$ is odd and the graph is flipped, 
						we have~$2 < k < n/2 - 1$.
					Let~$H'$ be the disjoint union of two
						$k$-regular graphs on
						$n/2 + 1$ and~$n/2 - 1$ vertices, for example, circulant ones. 
			\end{enumerate}
			In both cases, we have~$H \not \cong H'$ since~$H'$ is disconnected.
		\item \emph{$k$ is even:}
			\begin{enumerate}
				\item \emph{$n$ is even:}
					Again we have~$2 \leq k < n/2$. 
					Since~$k$ is even, independent of the parity of~$n/2$, 
						there is a~$k$-regular graph on~$n/2$ vertices.
					We let~$H'$ be the disjoint union of two copies of such a graph.
					Again, we have~$H \not \cong H'$ since~$H'$ is disconnected. 
				\item \emph{$n$ is odd:}
					Since~$G$ is flipped, we have~$n\geq 2k + 1$.
					
					If~$n > 2k + 1$, from the parities of~$k$ and~$n$ we have~$k < n/2-1$ and
						we can choose 
						$H'$ to be the disjoint union of a~$k$-regular graph on~$\lfloor n/2 \rfloor$ vertices and a~$k$-regular graph on~$\lceil n/2 \rceil$ vertices.

					In case~$n = 2k + 1$, there is 
						no disconnected~$k$-regular graph on~$2k + 1$ vertices. Thus, in order to construct~$H'$,
						take two disjoint~$k$-cliques.
					Connect half of the first clique via a matching to half of the second clique.
					Lastly, connect all vertices in the unmatched halves
						to a new vertex~$v_0$. 

					Since~$G$ is not a~$5$-cycle by assumption, we cannot have~$k=2$.
					For~$k>2$, the vertex~$v_0$ is the only vertex which does not belong to a~$k$-clique.
					Hence,~$H'$ is not transitive, therefore not circulant.
					This implies~$H \not \cong H'$.
			\end{enumerate}
	\end{enumerate}
\end{proof} 

\begin{lemma}\label{le:biregular}
	The logic~$\LogicCzwo$ identifies a flipped~$(k,\ell)$-biregular graph
		if and only if~$k\leq 1$ or~$\ell\leq 1$. 
	Furthermore, 	
		if the graph is not identified,
		there is a non-isomorphic graph,
		which is~$(k,\ell)$-biregular on the same bipartition (even if~$k=\ell$).
\end{lemma}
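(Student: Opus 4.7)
The plan is to prove both directions constructively and to extract the ``furthermore'' clause from the same construction. For the backward direction I would assume without loss of generality that $k \leq 1$. If $k = 0$ the graph has no edges, and if $k = 1$ it is a disjoint union of stars $K_{1,\ell}$ whose centers lie in~$Q$; in both cases the isomorphism type is determined by $m$ and $n$ alone, so the graph is identified by~$\LogicCzwo$.

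For the forward direction I would assume $k, \ell \geq 2$ and construct two non-isomorphic $(k,\ell)$-biregular graphs $H_1$ and $H_2$ on the bipartition $(P, Q)$. The flip condition applied to the pair $(P, Q)$ gives $k \leq n/2$ and $\ell \leq m/2$, so in particular $m - \ell \geq \ell \geq 2$ and $n - k \geq k \geq 2$. I then split $P = P_1 \dunion P_2$ and $Q = Q_1 \dunion Q_2$ with $|P_1| = \ell$ and $|Q_1| = k$. Since $km = \ell n$ implies $k(m-\ell) = \ell(n-k)$, Lemma~\ref{lem_constructions} provides a $(k,\ell)$-biregular graph $G_2$ on $(P_2, Q_2)$; let $H_1$ be the disjoint union of $K_{k,\ell}$ on $(P_1, Q_1)$ with $G_2$.

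To obtain $H_2$ I would perform a single $C_4$-swap across the split: pick any edge $\{u_1, v_1\}$ of the $K_{k,\ell}$-component of $H_1$ and any edge $\{u_2, v_2\}$ of $G_2$, then delete these two edges and insert $\{u_1, v_2\}$ and $\{u_2, v_1\}$. A direct degree count shows that $H_2$ is again $(k,\ell)$-biregular on $(P, Q)$. The swap merges the $K_{k,\ell}$-component of $H_1$ with the component of $G_2$ containing $\{u_2, v_2\}$ and leaves all other components untouched, so $H_2$ has strictly fewer connected components than $H_1$, and therefore $H_1 \not\cong H_2$.

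The construction is symmetric in the two sides of the bipartition, so the clause ``even if $k = \ell$'' requires no extra work: both $H_1$ and $H_2$ live on the fixed bipartition $(P, Q)$ by construction. The only delicate point I anticipate is verifying that the split is legal, i.e., that $m \geq 2\ell$ and $n \geq 2k$; this is precisely what the flip assumption yields, and once it is in hand Lemma~\ref{lem_constructions} supplies the auxiliary biregular graph $G_2$ and the rest of the argument goes through without further obstacles.
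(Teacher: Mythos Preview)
Your proof is correct and follows essentially the same route as the paper. Both arguments use the flip condition to peel off a $K_{k,\ell}$-component and invoke Lemma~\ref{lem_constructions} on the remaining bipartition $(P_2,Q_2)$ to obtain the disconnected graph (your $H_1$ is exactly the paper's $H'$). The only difference is how the second graph is produced: the paper iterates edge swaps along cycles until the graph becomes \emph{connected}, whereas you perform a single $C_4$-swap to drop the component count by exactly one. Your variant is slightly more economical, since you avoid the iterative argument; the paper's version gives a marginally stronger conclusion (a connected witness), but that strength is not needed for the lemma. The non-isomorphism is certified the same way in both cases---by a difference in the number of connected components---and the ``even if $k=\ell$'' clause is handled identically.
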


\begin{proof}
For the backward direction note that
		if~$k\leq 1$ or~$\ell\leq 1$, 
		then~$G$ is a disjoint union of stars.
	Therefore,~$\LogicCzwo$ identifies~$G$.

	For the forward direction, let~$G$ be a flipped~$(k,\ell)$-biregular graph
		on bipartition~$(P ,Q)$.
	Without loss of generality assume that~$k \geq \ell$. Suppose~$k\geq 2$ and~$\ell\geq 2$.
	In order to show that~$\LogicCzwo$ does not identify~$G$,
		we construct non-isomorphic~$(k,\ell)$-biregular graphs~$H$ and~$H'$,
		both on bipartition~$(P,Q)$. We will define~$H$ and~$H'$ 
		such that~$H$ is connected and~$H'$ is disconnected.
	Let~$\abs{P} = m$ and~$\abs{Q} = n$.
	Note that our assumption~$k \geq \ell$ implies~$n\leq m$.
	
	Let~$H$ be a connected~$(k,\ell)$-biregular graph on bipartition~$(P,Q)$. Such a graph exists due to the following argument. Starting with~$G$, it suffices to explain how to repeatedly decrease the number of connected components while maintaining~$(k,\ell)$-biregularity. Since~$k>1$ and~$\ell>1$, each connected component has a cycle. Let~$C_1$ and~$C_2$ be two distinct connected components and let~$(p_1,q_1)$ and~$(p_2,q_2)$ be edges in these connected components, respectively, that lie on a cycle, where~$p_i\in P$ and~$q_i\in Q$. We replace the edges~$(p_1,q_1)$ and~$(p_2,q_2)$ by the edges~$(p_1,q_2)$ and~$(p_2,q_1)$. This decreases the number of connected components by~$1$ and maintains vertex degrees. We conclude that there is a connected~$(k,\ell)$-biregular graph on bipartition~$(P,Q)$.

	We now define the graph~$H'$.
	Suppose
		${P = \{v_0, \ldots, v_{m-1}\}}$ and
		${Q = \{w_0, \ldots, w_{n-1}\}}$.
	Let the complete bipartite graph on bipartition 
		${(\{v_0, \ldots, v_{\ell-1}\}, \{w_0, \ldots{}, w_{k-1}\})}$
		be one of the connected components of~$H'$.
	Since~$G$ is flipped, 
		we have~${k \leq n/2}$ and~${\ell \leq m/2}$.
	It follows that~${k\leq n-k}$ and~${\ell\leq m-\ell}$.
	By~${m \cdot k = n \cdot \ell}$, we have~${(m-\ell)\cdot k = (n-k)\cdot \ell}$.
	By Lemma~\ref{lem_constructions}, there is a~$(k,\ell)$-biregular graph 
		on bipartition
		${(\{v_\ell, \ldots, v_{m-1}\}, \{w_k, \ldots{}, w_{n-1}\})}$.
	We take such a graph as the rest of~$H'$.
	Obviously~$H'$, is disconnected.
	This completes the proof.
\end{proof}

\begin{notation}\label{notation:relations}
	For a~$(k,\ell)$-biregular graph on bipartition~$(P,Q)$ we introduce the following notations.
	\begin{align*}
		P \squarerel Q 	&:\iff k = \ell = 0\\
		P \doteq Q 	&:\iff k = \ell = 1\\
		P \ll Q 	&:\iff k \geq 2 \text{ and } \ell = 1
	\end{align*}
\end{notation}
We also write~$Q \gg P$ instead of~$P \ll Q$.

With this notation, Lemma~\ref{le:biregular} says that a flipped graph is identified by~$\LogicCzwo$ if and only if for every two distinct~$\LogicCzwo$-partition classes~$P$ and~$Q$ we have~$P \squarerel Q$,~$P \doteq Q$ or~$P \ll Q$.

For a graph~$G$ with~$\LogicCzwo$-partition~$\Pi$, we define the \dfn{skeleton~$S_G$} of~$G$ as the graph
	with~$V(S_G) = \Pi$
	and~$E(S_G) = \{ \{P,Q\} \mid P\doteq Q \text { or }  P\ll Q  \text{ in the flip of } G\}$.

\begin{lemma}\label{le:restr}
	Let~$G$ be a flipped graph which is identified by~$\LogicCzwo$. Then the following hold:
	\begin{enumerate}
\item there is no path~$P_0,P_1,\ldots,P_t$ in~$S_G$ with~$P_0\ll P_1$ and~$P_{t-1} \gg P_{t}$,\label{item:monotone:graph} 
\item there is no path~$P_0,P_1,\ldots,P_t$ in~$S_G$ where~$P_0\ll P_1$ and~$P_{t}$ induces a~$5$-cycle or a matching,\label{item:cannot:increase:to:execption:graph}
\item in every connected component of~$S_G$ 
	there is at most one vertex~$P$ 
	that induces a matching or a~$5$-cycle in~$G$.\label{item:at:most:one:exept:graph}
	\end{enumerate}
\end{lemma}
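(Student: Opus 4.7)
My plan is to establish each clause by contrapositive: assuming the given structural condition fails, I construct two non-isomorphic graphs that are $\LogicCzwo$-equivalent to $G$, contradicting the hypothesis that $G$ is identified. The guiding principle is that by completeness of $\mathcal{I}_C^2$, any modification of $G$ that preserves the $\LogicCzwo$-partition $\Pi$ and the color-degree matrix between every pair of classes of $\Pi$ yields a $\LogicCzwo$-equivalent graph. I may therefore freely replace the induced subgraph on a single class, or the bipartite subgraph between two classes, by any (bi)regular graph with matching parameters, provided the $\LogicCzwo$-partition does not collapse. Lemmas~\ref{le:regular} and~\ref{le:biregular} already show that such replacements in isolation can produce non-isomorphic alternatives whenever the parameters exceed a small exceptional set.

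For clause~(\ref{item:monotone:graph}), let $i$ be the smallest index with $P_i \gg P_{i+1}$; then $i\geq 1$ and the edge $\{P_{i-1},P_i\}$ is either $\doteq$ or $\ll$. Either way the bipartite subgraph between $P_{i-1}$ and $P_i$ partitions $P_i$ into nontrivial blocks (matched pairs or star leaves of a $P_{i-1}$-vertex), while the bipartite subgraph between $P_i$ and $P_{i+1}$ is a disjoint union of $|P_{i+1}|$ stars whose centers lie in $P_{i+1}$. Using the construction in Lemma~\ref{le:biregular}, I exhibit two bipartite replacements between $P_i$ and $P_{i+1}$ that are respectively \emph{aligned} and \emph{misaligned} with the block partition of $P_i$ inherited from $P_{i-1}$. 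These replacements preserve every color-degree entry and leave edges elsewhere untouched, so the resulting graphs are $\LogicCzwo$-equivalent; non-isomorphism is witnessed by the number of connected components of the induced subgraph on $P_{i-1}\cup P_i\cup P_{i+1}$.

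For clause~(\ref{item:cannot:increase:to:execption:graph}), the path ends at a class $P_t$ whose induced subgraph has a nontrivial automorphism $\sigma$ (rotation of the $5$-cycle or swap of two matching edges). I form $G'$ by applying $\sigma$ to the edges of $G[P_t]$ only. Then $G'$ has the same color-degree matrix and $\LogicCzwo$-partition as $G$. The bipartite graph from $P_{t-1}$ to $P_t$ carries a rigid identification of $P_t$-vertices (either as unique leaves of $P_{t-1}$-centered stars or as matched partners of $P_{t-1}$-vertices), and the initial edge $P_0\ll P_1$ propagates this rigidity back to $P_0$, acting as a global anchor. Consequently no automorphism of the whole graph can compensate for $\sigma$, so $G$ and $G'$ are non-isomorphic.

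For clause~(\ref{item:at:most:one:exept:graph}), if $P$ and $P'$ are two special classes in a common component of $S_G$ joined by a path, then by~(\ref{item:monotone:graph}) and~(\ref{item:cannot:increase:to:execption:graph}) applied in both directions every edge on the path must be $\doteq$. In particular $|P|=|P'|$, and composing the matchings along the path gives a canonical bijection $P\to P'$. As in~(\ref{item:cannot:increase:to:execption:graph}) I apply a nontrivial automorphism of $G[P]$; the bijection turns $G[P']$ into a second rigid anchor that precludes any global compensation. The main technical obstacle throughout is the non-isomorphism verification: for each construction one must exhibit an invariant that actually changes. Natural candidates -- number of connected components, cycle lengths containing fixed star centers, parity of an induced permutation -- suffice once the alignment framework of the construction is in place, and the verification then reduces to a short case check analogous to the arguments in Lemmas~\ref{le:regular} and~\ref{le:biregular}.
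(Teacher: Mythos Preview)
Your contrapositive strategy and the replacement principle are exactly right, and your treatment of Part~(\ref{item:monotone:graph}) is essentially the paper's argument: the paper also localises to three consecutive classes $P_0\ll P_1\gg P_2$ and compares an ``aligned'' arrangement of the $P_2$-stars with the $P_0$-stars against a ``misaligned'' one, using the number of crossing blocks as invariant. (One caveat: a \emph{perfectly} aligned version need not exist when the two block sizes are not divisible; the paper only arranges that at most $|P_0|-1$ blocks cross, which suffices since $|P_2|>|P_0|-1$.)

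There is, however, a genuine gap in Parts~(\ref{item:cannot:increase:to:execption:graph}) and~(\ref{item:at:most:one:exept:graph}). If $\sigma$ is an \emph{automorphism} of $G[P_t]$, then applying $\sigma$ to the edge set of $G[P_t]$ returns that edge set unchanged---this is precisely what ``automorphism'' means---so your $G'$ equals $G$. Even under the charitable reading that you twist the bipartite edges $P_{t-1}\to P_t$ by $\sigma$ while keeping $G[P_t]$ fixed, the map that is $\sigma^{-1}$ on $P_t$ and the identity elsewhere is then an isomorphism $G'\to G$ (it undoes the twist and, being an automorphism of $G[P_t]$, preserves the internal edges). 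Concretely, for two $5$-cycles joined by a matching chain, rotating one cycle and carrying the matching along gives back an isomorphic graph, so your ``rigid anchor'' does not anchor anything.

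What the paper does instead is different in each case. For Part~(\ref{item:cannot:increase:to:execption:graph}) with a matching on $P_t$ it keeps $G[P_t]$ fixed and builds two \emph{different bipartite graphs} between $P_{t-1}$ and $P_t$, one minimising and one maximising the number of matching edges seen by a single $P_{t-1}$-vertex. For the $5$-cycle it constructs nothing: $P_{t-1}\ll P_t$ with $|P_t|=5$ prime forces $|P_{t-1}|=1$, whence that vertex is adjacent to all of $P_t$, contradicting flippedness. For Part~(\ref{item:at:most:one:exept:graph}) the paper replaces the $5$-cycle (respectively matching) on $P_t$ by a \emph{different} $5$-cycle (matching) on the same vertex set---one \emph{not} obtained from the original by an automorphism of $G[P_t]$---and uses the matching chain to $P_0$ to transport the discrepancy. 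The crucial point is that the permutation you apply must \emph{fail} to be an automorphism of the induced subgraph; you assumed the opposite.
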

\begin{proof}
	For each condition, assuming it does not hold, we will define non-isomorphic graphs~$H$ and~$H'$,
		both~$\LogicCzwo$-equivalent to a subgraph of~$G$ 
		that is induced by some classes of the~$\LogicCzwo$-partition of~$G$.

	\begin{figure}[htb]
		\subfloat[\label{fig:two-stars}]
			{ \includegraphics[width=0.24\textwidth] {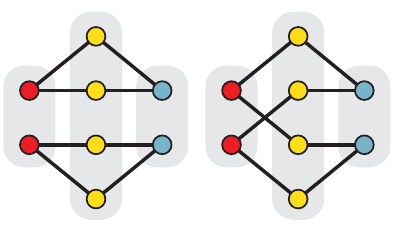}}
		\hfill
		\subfloat[\label{fig:matching-and-star}]
			{\includegraphics[width=0.24\textwidth] {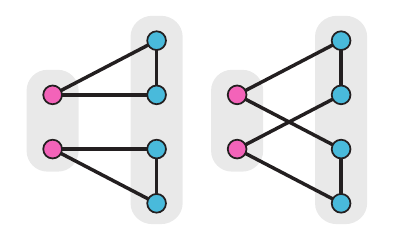}}
		\hfill
		\subfloat[\label{fig:two-five-cycles}]
			{\includegraphics[width=0.24\textwidth] {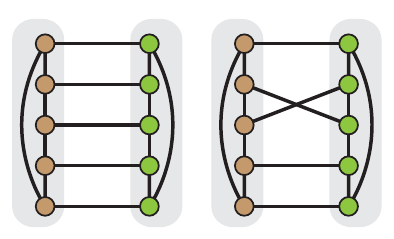}}  
		\hfill
		\subfloat[\label{fig:two-matchings}] 
			{\includegraphics[width=0.24\textwidth] {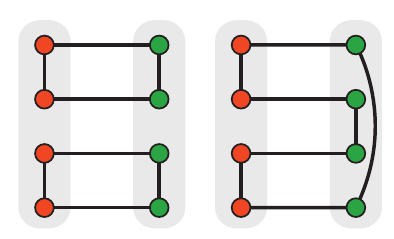}} 
		\caption{
			pairs of non-isomorphic graphs 
				illustrating the necessity of the conditions in Lemma~\ref{le:restr} 
		}
		\label{fig:proof-of-characterization-together} 
	\end{figure}

	\emph{Notation.}
	For two~$\LogicCzwo$-partition classes~$P$ and~$Q$,
		let~$i_{PQ}$ denote the number of neighbors that a vertex in~$P$ has in~$Q$.		

	(Part~\ref{item:monotone:graph})
First assume that~$t = 2$ 
			and~$G[P_1]$ has no edges.
		Assume that in~$G$ we have~$P_0\ll P_1 \gg P_2$.
		We may assume~$\abs{P_0}\leq \abs{P_2}$. 
		Let~$p\coloneqq i_{P_0P_1}$ and~$q\coloneqq i_{P_2P_1}$.
		Then~$p \geq q$.

		In the following,~$\Pi$,~$\Theta$ and their indexed variants
			will always denote partitions of~$P_1$
			into classes of size~$p$ and~$q$, respectively.
		Furthermore, we denote by~$H(\Pi,\Theta)$ the graph on vertex set~$P_0 \cup P_1 \cup P_2$ 
			with the following properties.
			\begin{itemize}
				\item For every $i\in \{ 0,1,2\}$ the subgraph induced by $P_i$ has no edges,
				\item we have~$P_0 \ll P_1 \gg P_2$,
				\item the classes of~$\Pi$ form the neighborhoods in~$P_1$ of vertices in~$P_0$,
				\item the classes of~$\Theta$ form the neighborhoods in~$P_1$ of vertices in~$P_2$.
			\end{itemize}

		First we define~$H$.
		Because~$p \geq q$, there are partitions~$\Pi_0$ and~$\Theta_0$ of~$P_1$
			such that there are at most~$\abs{P_0}-1$ classes of~$\Theta_0$
			that intersect two or more elements of~$\Pi_0$. 
		One way to construct them is to number the vertices in~$P_1$ 
			and then to define the $i$-th element of $\Pi_0$ (and $\Theta_0$) 
			as the set of the $i$-th $p$ (and $q$) vertices.
		We let~${H \coloneqq H(\Pi_0,\Theta_0)}$.

		We now define~$H'$. We know that~$\abs{P_0}\geq 2$ since~$G$ is flipped.
		Hence, since~$\abs{P_0} \cdot p = \abs{P_1} \cdot q$, there are partitions~$\Pi_1$ and~$\Theta_1$
			such that all 
			classes of~$\Theta_1$
			intersect two or more elements of~$\Pi_1$.
		We let~$H' \coloneqq H(\Pi_1,\Theta_1)$.
		Since~$|P_2| > |P_0|-1$, this implies~$H\not\cong H'$
		(see Figure~\ref{fig:two-stars}). This construction can easily be generalized for the case~$t>2$.

(Part~\ref{item:cannot:increase:to:execption:graph})
	Without loss of generality, we may assume that~$t = 1$.
	First let~$G[P_1]$ be a~$5$-cycle.
	Since~$P_0\ll P_1$ and~$5$ is prime, 
		the set~$P_0$ is a singleton.
	This contradicts the assumption that~$G$ is flipped.

	Let now~$G[P_1]$ be a matching.
	We let~$H$ be the graph~$G[P_1] + B$ on vertex set~$P_0 \cup P_1$,
		where~$B$ is a subset of~$[P_0,P_1]$
		such that we have~$P_0\ll P_1$ in~$H$
		and for each~$v\in P_0$ 
		the number of edges in~$G[P_1]$ 
		that~$v$ is connected to via~$B$
		is as small as possible.
	(If~$i_{P_0P_1}$ is even, this number is~$i_{P_0P_1} / 2$,
		otherwise it is~$(i_{P_0P_1}+1) / 2$.)

	We let~$H'$ be the graph~$G[P_1] + B'$,
		where~$B'$ is a subset of~$[P_0,P_1]$
		such that we have~$P_0\ll P_1$ in~$H'$
		and there is a vertex in~$P_0$
		which is connected to~$i_{P_0P_1}$ edges in~$G[P_1]$ via~$B'$.
	Such a~$B'$ exists because~$P_0$ cannot be a singleton.
	Figure~\ref{fig:matching-and-star} shows~$H$ and~$H'$ for~$t=1$ and~$\abs{P} = 4$.

(Part~\ref{item:at:most:one:exept:graph})
	Suppose~$P_0,P_1,\ldots,P_t$ is a path in~$S_G$ 
		such that both~$P_0$ and~$P_t$ induce a matching or a~$5$-cycle in~$G$. By Part~\ref{item:cannot:increase:to:execption:graph}, 
		we have~$P_0\doteq P_1 \doteq  \ldots \doteq  P_t$ in~$G$.
	By Lemma~\ref{le:regular}, without loss of generality, 
		we may assume that for every~$1 \leq i \leq t-1$, 
		the edge set of~$G[P_i]$ is empty.
	First, let~$G[P_0]$ be a~$5$-cycle. Then~$P_t$ also consists of~$5$ elements and cannot be a matching.
	Assume that~$G[P_t]$ is a~$5$-cycle.
	Let~$\tilde{H}$ be a graph on vertex set 
		$P_0 \dunion \ldots \dunion P_t$,
		such that for every~$i\in \{1, \ldots, t\}$ 
		we have~$P_{i-1} \doteq P_i$ in~$\tilde{H}$ and
		the graph~$\tilde{H}[P_i]$ has no edge.
	Let~$C$ be a~$5$-cycle on~$P_0$.
	Then there is a unique cycle~$C'$ on~$P_t$
		such that whenever~$\{v,v'\}$ is an edge in~$C$ 
		and~$w,w'\in P_t$ are the two unique vertices 
		connected via paths in~$\tilde{H}$ to~$v$ and~$v'$, respectively,
		then~$\{w,w'\}\in C'$.
	Let~$C''$  be a~$5$-cycle on~$P_t$ that is different from~$C'$.
	We define 
		$H  \coloneqq \tilde{H}+C+C'$ and
		$H' \coloneqq \tilde{H}+C+C''$. 
	Then~$H$ and~$H'$ are non-isomorphic
		(see Figure~\ref{fig:two-five-cycles} for~$t=1$).
	
	Assume now that~$G[P_0]$ is a matching.
	Note that~$\abs{P_0}$ must be even.
	Obviously,~$G[P_t]$ cannot be a~$5$-cycle, so assume it is a matching.
	Let~$M$ be a matching on~$P_0$ and let~$\tilde{H}$ be defined as above.
	Then there is a unique matching~$M'$ on~$P_t$
		such that whenever~$\{v,v'\}$ is an edge in~$M$ 
		and~$w,w'\in P_t$ are the two unique vertices connected via edges in~$\tilde{H}$ to~$v$ and~$v'$, respectively,
		then~$\{w,w'\}\in M'$.
	Let~$M''$ be a matching on~$P_t$ that is different from~$M'$.
	Because~$G$ is flipped,~$\abs{P_t} = 2$ is impossible.
	Hence, such a matching~$M''$ exists.
	We define 
		$H  \coloneqq \tilde{H}+M+M'$ and
		$H' \coloneqq \tilde{H}+M+M''$
		(see Figure~\ref{fig:two-matchings} for~$t = 1$ and~$\abs{P_0} = 4$).
\end{proof}

Now we have all the ingredients to prove the main theorem of this section.

\begin{theorem}\label{th:characterization}
	A graph is identified by~$\LogicCzwo$
		if and only if its flip is a bouquet forest.
\end{theorem}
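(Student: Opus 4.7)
My plan is to proceed by a structural analysis that combines Lemmas~\ref{le:iff-identified},~\ref{le:regular},~\ref{le:biregular}, and~\ref{le:restr}. First, Lemma~\ref{le:iff-identified} reduces the problem to the case that $G$ is flipped; throughout I carry the $\LogicCzwo$-coloring on $G$.

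For the forward direction, assume $G$ is identified. As noted in the paper just before Lemma~\ref{le:regular}, each induced subgraph $G[P]$ on a single $\LogicCzwo$-class and each subgraph on $[P,Q] \cap E(G)$ between two distinct classes is itself identified. Lemma~\ref{le:regular} then forces each class to induce an edgeless graph, a matching, or a $5$-cycle, and Lemma~\ref{le:biregular} forces every pair of distinct classes into relation $\squarerel$, $\doteq$, or $\ll$. Part~\ref{item:monotone:graph} of Lemma~\ref{le:restr} prevents any class from having two distinct $\ll$-predecessors (otherwise one obtains a path violating the condition), so in each connected component of the skeleton $S_G$ the oriented $\ll$-edges form an in-forest. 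Parts~\ref{item:cannot:increase:to:execption:graph} and~\ref{item:at:most:one:exept:graph} yield at most one exceptional (matching or $5$-cycle) class per component, sitting at the root of this in-forest. I then classify each component accordingly: if no exceptional class is present, the root $P_r$ together with its $\doteq$- and $\ll$-descendants unfolds into $\abs{P_r}$ pairwise disjoint isomorphic vertex-colored trees; if the root induces a $5$-cycle, the descendants glue onto it as a bouquet; if the root induces a matching of $m$ edges, each matching edge becomes the central edge of one of $m$ isomorphic bi-rooted vertex-colored trees. Two isomorphic bouquets cannot appear in different components, since the union of their $5$-cycle classes would constitute a single $\LogicCzwo$-class inducing two disjoint $5$-cycles, contradicting Lemma~\ref{le:regular}. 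Hence $G$ is a bouquet forest.

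For the backward direction, suppose $G$ (flipped) is a bouquet forest. Vertex-colored trees are identified by $\LogicCzwo$ by induction from the leaves, since the $\LogicCzwo$-type of a vertex records the multiset of types of its children and thus its rooted subtree up to isomorphism. A vertex-colored bouquet is identified because its unique $5$-cycle class is recognizable from the $\LogicCzwo$-partition, after which each of the five attached tree copies is identified by the tree case. Any $\LogicCzwo$-equivalent $F'$ shares the same $\LogicCzwo$-partition and hence the same decomposition into components; pairwise non-isomorphism of the bouquets yields a unique matching of components between $F'$ and $G$, giving $F' \cong G$.

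The main obstacle will be the forward-direction case analysis: verifying that the combined restrictions from Lemmas~\ref{le:regular},~\ref{le:biregular}, and~\ref{le:restr} are not merely necessary but \emph{sufficient} to force a bouquet-forest structure. In particular, the matching-root case needs careful treatment to show that the matched pairs of root vertices really do twin two isomorphic copies of the subtree, and that deeper $\doteq$-edges do not introduce spurious identifications across different matching-pair copies.
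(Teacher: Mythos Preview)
Your overall strategy matches the paper's, and the backward direction is essentially the same leaf-stripping induction. The forward direction, however, has a genuine gap.

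You deduce from Part~\ref{item:monotone:graph} of Lemma~\ref{le:restr} that no class has two distinct $\ll$-predecessors, and conclude that the $\ll$-edges form an in-forest. That is fine, but it does \emph{not} show that the skeleton $S_G$ is a forest: the skeleton also contains the $\doteq$-edges, and nothing you invoke excludes a cycle $P_0 \doteq P_1 \doteq \cdots \doteq P_t \doteq P_0$ of equal-size classes. None of Lemmas~\ref{le:regular},~\ref{le:biregular}, or~\ref{le:restr} rule this out. The paper handles this with a separate argument that you are missing: if such a $\doteq$-cycle existed, one could realise the union of these perfect matchings either as $|P_0|$ disjoint $(t{+}1)$-cycles or as a single $(t{+}1)\cdot|P_0|$-cycle, giving two non-isomorphic $\LogicCzwo$-equivalent graphs. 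Without this step your ``root $P_r$'' is not well defined, and the subsequent unfolding into trees and bouquets does not go through. (Mixed cycles containing at least one $\ll$ are then handled by cardinality together with Part~\ref{item:monotone:graph}, but you should say this explicitly.)

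A second, smaller gap: when you assert that ``the descendants glue onto it as a bouquet'', you are claiming that the five attached trees are pairwise isomorphic. This is not automatic from the restrictions you have listed; the paper obtains it from Corollary~\ref{cor_orbit} (the $\LogicCzwo$-classes of an identified graph are orbits), which forces the five vertices of the $5$-cycle to lie in a common orbit. You should invoke that result, or supply an alternative argument.
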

\begin{proof}
	We first show the backward direction. 
	By Lemma \ref{le:iff-identified}, 
		it suffices to show 
		that every bouquet forest is identified by~$\LogicCzwo$.
	This can be done by induction.
	The base case is that
		disjoint unions of non-isomorphic vertex-colored~$5$-cycles 
		are identified by~$\LogicCzwo$. Note that, in the definition of a bouquet forest, 
		we do not allow isomorphic vertex-colored bouquets. (The reason for this is
		that~$\LogicCzwo$ cannot distinguish two~$5$-cycles from a~$10$-cycle.)
	For the inductive step,
		it suffices to observe that a graph is identified by~$\LogicCzwo$ 
		if all vertex-colored versions of the graph obtained by removing all leaves of the graph are identified.

	For the forward direction,
		let~$G$ be a graph identified by~$\LogicCzwo$.
	Let~$(F,\chi)$ be the flip of~$G$.
	By Lemma \ref{le:iff-identified}, 
		we know that~$(F,\chi)$ is identified by~$\LogicCzwo$.
	
	By Lemma \ref{le:regular},
		for every color class~$P$ of~$(F,\chi)$
		the graph~$F[P]$ either has no edges or is a matching or a~$5$-cycle.
	
	By Lemma \ref{le:biregular}, 
		for every two distinct color classes~$P$ and~$Q$ of~$(F,\chi)$
		we have~$P \squarerel Q$,~$P \doteq Q$ or~$P \ll Q$.
	
	We now show that the skeleton~$S_G$ is a forest.

	First let~$P_0, P_1, \ldots, P_t$ with~$t \geq 2$
		be color classes of~$(F,\chi)$ with~$P_0 \doteq P_1 \doteq \ldots \doteq P_t$.
	We claim that~$P_0 \doteq P_t$ is impossible.
	Assume~$P_0 \doteq P_t$ for the sake of a contradiction.
	Note that~$\abs{P_0} \neq 1$ because~$F$ is flipped.
	So we have~$\abs{P_0}\geq 2$.
	Then we can arrange the edges in the given matchings
		as~$\abs{P_0}$ disjoint cycles each of size~$t+1$,
		but also as a single cycle of size~$(t + 1) \cdot \abs{P_0}$.
	This implies that~$(F,\chi)$ is not identified by~$\LogicCzwo$.
	By Lemma \ref{le:iff-identified}, 
		we can conclude 
		that~$G$ is not identified by~$\LogicCzwo$, a contradiction.

	Let~$P_0,\ldots,P_t$ be a sequence of color classes
		such that for every~$i\in\{1,\ldots,t\}$ 
		we have~$P_{i-1} \ll P_i$ or~$P_{i-1} \doteq P_i$.
	Assume that~$P_{i-1} \ll P_i$ for at least one~$i\in\{1,\ldots,t\}$.
	Then by cardinality reasons 
		it cannot be the case that~$P_t \doteq P_0$ or~$P_t \ll P_0$.

	Also note that 
		if~$P_0 \ll \ldots \ll P_t$ with~$t\geq 2$, 
		then Part~\ref{item:monotone:graph} of Lemma~\ref{le:restr} implies 
		that~$P_0\ll P_t$ is impossible.

	We conclude that~$S_G$ is a forest.

	Furthermore, by Part~\ref{item:at:most:one:exept:graph} of  Lemma~\ref{le:restr},
		for every connected component~$T$ of~$S_G$
		there can be at most one class~$P\in V(T)$
		such that~$F[P]$ is a matching or a~$5$-cycle.
	To see that each connected component that contains a~$5$-cycle is a
	bouquet, note that by Corollary~\ref{cor_orbit}, all vertices of the~$5$-cycle must be in the same
	orbit and thus, the five trees obtained by deleting the~$5$-cycle must be isomorphic.
		With Part~\ref{item:cannot:increase:to:execption:graph} of Lemma~\ref{le:restr}, 
		it follows that the subgraph of~$F$ induced by the vertices of~$T$
	is either a forest or a bouquet. 
	Moreover, no~$\LogicCzwo$-partition class of~$F$ contains more than one~$5$-cycle,
		since this would contradict the fact that~$(F, \chi)$ is identified.
	Hence, different~$5$-cycles must have different colors. 
	This implies that there are no isomorphic colored bouquets in~$(F,\chi)$. 
\end{proof}

\begin{corollary}\label{cor:1}
	Given a graph with~$n$ vertices and~$m$ edges,
		we can decide whether it is identified by~$\LogicCzwo$ in time~${O((m+n)\log n)}$.
\end{corollary}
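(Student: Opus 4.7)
The plan is to run color refinement on $G$, producing the $\LogicCzwo$-partition $\Pi$ in time $O((m+n)\log n)$, and then verify, in linear additional time, the conditions that by Theorem~\ref{th:characterization} are equivalent to the flip $F$ of $G$ being a bouquet forest.

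First, I would observe that $F$ need not be constructed explicitly: since $\Pi$ is equitable for $G$, the structure of $F$ restricted to any pair $(P,Q)$ of classes is fully determined by $|P|$, $|Q|$, and the number of $G$-edges between $P$ and $Q$. Iterating over $E(G)$ and bucketing edges by the classes of their endpoints yields in $O(m+n)$ time the $F$-degree inside each class as well as the $F$-biregularity on every class pair that is not $\squarerel$-related in $F$ (pairs with no $G$-edges are automatically $\squarerel$-related). I would then apply Lemmas~\ref{le:regular} and~\ref{le:biregular}: check that each $F[P]$ is edgeless, a matching, or a $5$-cycle (uniquely determined by the $F$-degree and $|P|$), and that each biregularity $(k,\ell)$ between distinct classes satisfies $k \leq 1$ or $\ell \leq 1$.

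Next, I would construct the skeleton $S_G$, whose vertex set is $\Pi$ and whose edges correspond to the $\doteq$- and $\ll$-pairs, and verify via a linear-time traversal that $S_G$ is a forest. A DFS of each tree in $S_G$ would then check the three restrictions of Lemma~\ref{le:restr}: the monotonicity of $\ll$-edges along every path (equivalently, in each component all $\ll$-edges are consistently oriented away from a unique root), that no $\ll$-path ends at a matching- or $5$-cycle class, and that each component contains at most one class inducing a matching or a $5$-cycle.

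Finally, every connected component of $S_G$ that contains a $5$-cycle class gives rise to a bouquet of $F$, and the definition of a bouquet forest requires that no two bouquets be isomorphic as vertex-colored graphs (with the $\LogicCzwo$-coloring of $G$). Since each bouquet decomposes as a central $5$-cycle with $5$ isomorphic colored rooted trees attached, a canonical form can be computed in linear time by standard colored-rooted-tree canonization (e.g.\ via the classical Aho-Hopcroft-Ullman algorithm); a sort-and-scan over the canons detects duplicates. All steps beyond color refinement run in $O(m+n)$ time, giving the claimed $O((m+n)\log n)$ bound. The most delicate step is the bouquet isomorphism check, but it reduces cleanly to linear-time colored-rooted-tree canonization and so does not add to the asymptotic running time.
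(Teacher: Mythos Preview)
Your proof is correct and follows the same approach as the paper, which simply states that color refinement takes $O((m+n)\log n)$ time and that flipping and checking whether the flip is a bouquet forest can be done in linear time. One remark: your ``most delicate step'' is in fact unnecessary, since distinct bouquets in the flip have their $5$-cycles in distinct $\LogicCzwo$-partition classes and hence carry distinct $\chi$-colors, so they are automatically non-isomorphic as vertex-colored graphs.
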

\begin{proof}
	Color refinement can be performed in time~$O((m+n)\log n)$ 
		(see~\cite{DBLP:conf/esa/BerkholzBG13}).
	Flipping a graph 
		and checking whether the flip is a bouquet forest
		can be done in linear time.
\end{proof}

A second corollary of Theorem~\ref{th:characterization}
	is concerned with vertex colorings of graphs that are identified by~$\LogicCzwo$.

\begin{corollary}\label{cor:refinement:of:ident:graph:is:ident}
	Let~$(G,\chi)$ be a vertex-colored graph  which is identified by~$\LogicCzwo$
	 	and let~$\chi'$ be a vertex coloring of~$G$ 
		which induces a finer partition on~$V(G)$ than~$\chi$ does.
	Then~$(G,\chi')$ is also identified by~$\LogicCzwo$.
\end{corollary}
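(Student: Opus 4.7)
The plan is to apply Theorem~\ref{th:characterization} twice. Since~$(G,\chi)$ is identified by~$\LogicCzwo$, its flip~$F$ (taken with respect to the~$\LogicCzwo$-coloring of~$(G,\chi)$) is a bouquet forest. The goal is then to show that the flip~$F'$ of~$(G,\chi')$ is also a bouquet forest, so that another application of Theorem~\ref{th:characterization} yields that~$(G,\chi')$ is identified.

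First I would set up notation: let~$\Pi$ and~$\Pi'$ be the~$\LogicCzwo$-partitions of~$(G,\chi)$ and~$(G,\chi')$, respectively. Since~$\chi'$ refines~$\chi$, the partition~$\Pi'$ refines~$\Pi$, so each~$\Pi'$-class~$P^i$ is contained in a unique~$\Pi$-class~$P$. The main step is then to verify, via case analysis, the conditions of Lemmas~\ref{le:regular} and~\ref{le:biregular} for~$F'$ and~$\Pi'$: for every~$\Pi'$-class~$P^i \subseteq P$, the induced subgraph~$F'[P^i]$ is empty, a matching, or a~$5$-cycle, and for every pair~$(P^i, Q^j)$ with~$P^i \subseteq P$ and~$Q^j \subseteq Q$, the biregular type of~$F'$ on~$[P^i, Q^j]$ is one of~$\squarerel$,~$\doteq$, or~$\ll$. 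Each case reduces to computing the density of~$G$ on the refined region, which is forced to be bounded away from~$1/2$ by the structure of~$F$. A subtle point is that a~$\Pi$-class~$P$ on which~$F$ induces a~$5$-cycle admits only the trivial or the fully discrete equitable refinement in~$\Pi'$; in the discrete case, all five~$5$-cycle edges are flipped away in~$F'$, so the original bouquet decomposes into five individually tree-like components.

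The main obstacle is then to verify the global conditions of Lemma~\ref{le:restr} for the refined skeleton: that it is acyclic, with no path first using a~$\ll$-edge and later a~$\gg$-edge, no path ending at a matching or~$5$-cycle class after traversing a~$\ll$-edge, and with at most one exception class per connected component. These must be deduced from the corresponding properties of the skeleton of~$(G,\chi)$ by showing that refinement cannot create any forbidden configuration: the key subclaim is that if two~$\Pi'$-classes~$P^i \subseteq P$ and~$Q^j \subseteq Q$ with~$P\neq Q$ are joined by a~$\doteq$- or~$\ll$-edge in the refined skeleton, then already~$P$ and~$Q$ are joined by such an edge in~$S_G$, so refinement never introduces new between-class skeleton edges that could close a cycle. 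A split matching class may contribute a new skeleton edge of type~$\doteq$ between its pieces, but this is still compatible with being a bouquet forest. Once all these conditions are checked,~$F'$ is a bouquet forest, and Theorem~\ref{th:characterization} applied to~$(G,\chi')$ concludes the proof.
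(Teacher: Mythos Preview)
Your approach—directly verifying that the flip~$F'$ of~$(G,\chi')$ satisfies all conditions of Corollary~\ref{cor:summary:of:pros}—is workable but takes a longer route than the paper. The paper instead reduces everything to the special case where~$G$ is itself a bouquet forest: it first remarks that the corollary is easy in that case, and then for general~$G$ observes that the flip of~$(F,\chi')$ equals the flip of~$(F',\chi')$ (where~$F$ is the flip of~$(G,\chi)$ and~$F'$ the flip of~$(G,\chi')$). Since~$F$ is a bouquet forest, the base case gives that~$(F,\chi')$ is identified, and two applications of Lemma~\ref{le:iff-identified} transfer identification through the two flips to~$(G,\chi')$. This sidesteps the class-by-class verification entirely; what you propose is essentially doing the base case and the flip bookkeeping at the same time, by hand.

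Your outline also contains a concrete error. A~$\Pi$-class~$P$ on which~$F$ induces a~$5$-cycle does \emph{not} admit only the trivial and the discrete equitable refinements: for the cycle~$v_1v_2v_3v_4v_5$, the partition into~$\{v_1\},\{v_2,v_5\},\{v_3,v_4\}$ is equitable as well. So your case analysis for the~$5$-cycle is incomplete. The missing case is not hard to dispatch (after flipping with respect to~$\Pi'$ the three parts become edgeless and are linked only by a single~$\doteq$ relation between the two pairs, so the bouquet dissolves into trees), but it must be included for the argument to go through.
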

\begin{proof}
	Note that the statement is true if~$G$ is a bouquet forest. 
	For the general case, we may assume that~$\chi$ and~$\chi'$ are equitable.   
	Let~$(F,\chi)$ be the flip of~$G$ and let~$(F',\chi')$ be the flip of~$(G,\chi')$. 
	Then the flip of~$(F,\chi')$ is equal to the flip of~$(F',\chi')$. 
	Since~$F$ is a bouquet forest, 
		we conclude that~$(F,\chi')$ is identified and thus, by applying Lemma~\ref{le:iff-identified} twice, we have that~$(F',\chi')$ is identified. Again by Lemma~\ref{le:iff-identified}, the vertex-colored graph~$(G,\chi')$ is identified.
\end{proof}

In the corollary, the vertex colorings~$\chi$ and~$\chi'$ of the graph~$G$ can be arbitrary, in particular~$\chi$ can be monochromatic. We show in Section~\ref{sect_highdim} that even when~$\chi$ is required to be monochromatic this result cannot be generalized to the logics~$\LogicCK$ for~$k>2$.

Since they can appear only once per connected component of the skeleton, we call a~$C^2$-partition class that is a~$5$-cycle or a matching an~\dfn{exception}. Our classification of finite relational structures that are identified by~$\LogicCzwo$, which we give in the next section, 
	depends on the structural properties proven about identified graphs in this section. 
We summarize them for convenience.

\begin{corollary}\label{cor:summary:of:pros}
	A flipped graph~$G$ is identified by~$\LogicCzwo$ if and only if the following hold:
	\begin{enumerate}
		\item 
			Each~$C^2$-partition class induces a graph identified by~$C^2$ 
				(i.e., the induced graph has no edges or it is a matching or a~$5$-cycle), 
		\item 
			for all~$C^2$-partition classes~$P$ and~$Q$ 
				we have~$P\squarerel Q$,~$P \doteq Q$,~$P \ll Q$ or~$Q \ll P$,
		\item 
			the skeleton~$S_G$ is a forest, 
		\item 
			there is no path~$P_0,P_1,\ldots,P_t$ in~$S_G$ 
				with~$P_0\ll P_1$ and~$P_{t-1} \gg P_{t}$ in~$G$,
		\item 
			there is no path~$P_0,P_1,\ldots,P_t$ in~$S_G$ 
				where~$P_0\ll P_1$ and~$G[P_{t}]$ is a~$5$-cycle or a matching, and
		\item 
			in every connected component of~$S_G$
				there is at most one exception (i.e., a class~$P$ that induces a matching or a~$5$-cycle).
	\end{enumerate}
\end{corollary}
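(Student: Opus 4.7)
The plan is to observe that this corollary essentially collates the necessary conditions already established and then to argue that, taken together, they are also sufficient by reducing to Theorem~\ref{th:characterization}.

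For the forward direction, assume that the flipped graph~$G$ is identified by~$\LogicCzwo$. Condition~1 is exactly the statement of Lemma~\ref{le:regular} applied to each induced subgraph~$G[P]$ for~$P$ a~$\LogicCzwo$-partition class of~$G$ (since~$G[P]$ is flipped regular and must itself be identified). Condition~2 follows from Lemma~\ref{le:biregular} applied to each bipartite subgraph between two distinct~$\LogicCzwo$-partition classes (note that this is exactly the definition of the edges of~$S_G$ together with the~$\squarerel$ relation for non-edges). Conditions~4,~5 and~6 are precisely Parts~\ref{item:monotone:graph},~\ref{item:cannot:increase:to:execption:graph} and~\ref{item:at:most:one:exept:graph} of Lemma~\ref{le:restr}, respectively. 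Finally, Condition~3 (that~$S_G$ is a forest) was proven as part of the forward direction of the proof of Theorem~\ref{th:characterization} by the cycle-of-matchings argument combined with the cardinality restrictions imposed by~$\ll$.

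For the backward direction, assume that~$G$ is a flipped graph satisfying Conditions~1--6. I would argue that~$G$ is then a bouquet forest, after which Theorem~\ref{th:characterization} (and Lemma~\ref{le:iff-identified}, though~$G$ is already flipped) immediately implies that~$G$ is identified by~$\LogicCzwo$. To see that~$G$ is a bouquet forest, fix a connected component~$T$ of the skeleton~$S_G$, which is a tree by Condition~3. By Condition~6, at most one vertex~$P^\star$ of~$T$ induces an exception. If no such~$P^\star$ exists, then by Condition~1 every class in~$T$ induces an edgeless graph, and the subgraph of~$G$ spanned by~$V(T)$ is a vertex-colored tree (after accounting for the~$\ll$ and~$\doteq$ relations, which describe stars and matchings between classes). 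If~$P^\star$ induces a matching, then Condition~5 forces every edge of~$T$ incident to any vertex on the path to~$P^\star$ to be of type~$\doteq$, which together with Condition~4 forces all edges of~$T$ to be~$\doteq$; this again yields a vertex-colored tree structure. If~$P^\star$ induces a~$5$-cycle, the same argument applies, and the five copies of subtrees hanging off the~$5$-cycle are forced to be isomorphic as vertex-colored trees because they lie in the same~$\LogicCzwo$-partition classes, which is precisely the definition of a bouquet.

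The main (minor) obstacle is verifying in the bouquet case that the five ``branches'' hanging off the~$5$-cycle are genuinely isomorphic copies of a rooted vertex-colored tree rather than merely having the same multiset of colors at each level; this follows because Condition~2 rules out anything other than~$\doteq$ and~$\ll$ relations between classes in the same component, so the branches are determined up to isomorphism by the vertex colors and the sequence of relations along the path from~$P^\star$, and Condition~1 restricts the internal edge structure to be trivial. Putting the components together yields a bouquet forest, completing the proof.
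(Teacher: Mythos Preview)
Your forward direction is correct and matches how the paper obtains these conditions (Lemmas~\ref{le:regular}, \ref{le:biregular}, \ref{le:restr}, together with the skeleton argument inside the proof of Theorem~\ref{th:characterization}).

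The backward direction has the right strategy --- show that Conditions~1--6 force~$G$ to be a bouquet forest and then invoke Theorem~\ref{th:characterization} --- but it contains a concrete error. You claim that if~$P^\star$ is an exception, Conditions~4 and~5 force \emph{all} edges of the skeleton component~$T$ to be of type~$\doteq$. That is false. Condition~5 only forbids edges~$A\text{--}B$ with~$A \ll B$ and~$B$ closer to~$P^\star$ (since then the path~$A,B,\ldots,P^\star$ is of the forbidden form). Edges of type~$\text{parent}\ll\text{child}$ when~$T$ is rooted at~$P^\star$ are allowed. For example, take~$P^\star$ a matching on four vertices, a class~$Q$ with~$P^\star \doteq Q$, and a class~$R$ with~$Q \ll R$: this satisfies all six conditions, yet the edge~$Q\text{--}R$ is not~$\doteq$. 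Your argument in the no-exception case is similarly too quick; you assert that the subgraph is a tree without actually using Condition~4.

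The fix is not hard. In the exception case, root~$T$ at~$P^\star$; by Condition~5 every edge is~$\text{parent}\doteq\text{child}$ or~$\text{parent}\ll\text{child}$, and in both relations each vertex of the child class has exactly one neighbour in the parent class. In the no-exception case, root~$T$ at a class of minimal size; minimality together with Condition~4 yields the same conclusion. Hence every vertex of~$G$ outside the root class has a unique parent in~$G$, so the only possible cycle lies inside the root class, which by Condition~1 is edgeless, a matching, or the single 5-cycle. This gives a forest or a bouquet, and the isomorphism of the five branches follows from the uniformity of the~$\LogicCzwo$-partition, as you noted. (Alternatively, a clean root-free argument: any cycle in~$G$ projects to a closed walk in the tree~$S_G$, and analysing the leaves of the subtree spanned by this walk produces a path violating Condition~4 or~5.)

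One further remark: the paper's own proof of Theorem~\ref{th:characterization} establishes the bouquet structure by invoking Corollary~\ref{cor_orbit}, which presupposes that~$G$ is identified. You cannot use that here, since identification is what you are proving; your direct combinatorial argument via the~$\LogicCzwo$-partition is the right substitute, once repaired as above.
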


	%!TEX root = main.tex

\section{General finite structures}\label{sec:general:finite:structures}

Generalizing our observations about graphs, we now classify which finite relational structures are identified by~$\LogicCzwo$.

\textbf{Outline of the classification.} 
To achieve our classification we proceed as follows. 
As previously observed (see Subsection~\ref{subsec:rel:struct:and:pocs:prelims}), it suffices to analyze which edge-colored partially oriented graphs (i.e., $\ecPOG$s) are identified.  We would essentially like to follow the outline of the proof of the classification for graphs presented in Section~\ref{sec:characterization:of:graphs}. That is, we first intend to characterize
which color-regular~$\ecPOG$s are identified, and then which color-biregular~$\ecPOG$s are identified by $\LogicCzwo$. Note that color-regular~$\ecPOG$s are exactly the~$\ecPOG$s which have only one~$\LogicCzwo$-partition class. As we argue later, for an~$\ecPOG$ that is identified, the underlying undirected graph is also identified. Exploiting this observation we will first consider only undirected graphs. In our proof for the color-regular case, we need the undirected color-biregular  
case, so we treat it first. We then generalize the results to color-regular $\ecPOG$s and finally assemble these observations to classify all~$\ecPOG$s that are identified by~$\LogicCzwo$. 

Throughout this section we consider only complete~$\ecPOG$s. We can do this without loss of generality since we can interpret non-edges as edges of a particular color. The proofs of this section depend on the results from Section~\ref{sec:characterization:of:graphs}. 
We employ these results by interpreting uncolored graphs with edges and non-edges as complete graph with two edge colors.

In agreement with this outline we now start by analyzing color-biregular~$\ecPOG$s that do not have directed edges, i.e, undirected edge-colored color-biregular graphs. Recall that~$K_{3,3}$ is the complete bipartite graph with partition classes of size~$3$.

\begin{lemma}\label{lem:gen:math:structures:bipartite}
Let~$G$ be an undirected complete bipartite 
edge-colored color-biregular graph on bipartition~$(P,Q)$. Suppose~$G$ has at least three edge colors. Then there is a bipartite graph~$G'$ on bipartition~$(P,Q)$ which is non-isomorphic and~$\LogicCzwo$-equivalent to~$G$ if and only if~$G$ is not isomorphic to an edge-colored~$K_{3,3}$.
\end{lemma}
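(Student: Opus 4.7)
The plan is to handle the two directions of the biconditional separately.

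For the forward direction, I argue the contrapositive: if $G$ is isomorphic to an edge-colored $K_{3,3}$, then every $\LogicCzwo$-equivalent $G'$ on the fixed bipartition $(P,Q)$ is isomorphic to $G$. Since every vertex of $K_{3,3}$ has degree~$3$ and each color class is biregular with positive parameters on both sides, $G$ must have exactly three colors, each forming a perfect matching. Hence $G$ encodes a Latin square of order~$3$ with symbol set equal to the three colors, and any $\LogicCzwo$-equivalent $G'$ on $(P,Q)$ is likewise such a Latin square with the same symbol set. I then invoke the well-known fact that all Latin squares of order~$3$ over a fixed symbol set lie in a single orbit under $\mathrm{Sym}(P)\times\mathrm{Sym}(Q)$, yielding $G\cong G'$.

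For the backward direction, suppose $G$ is not isomorphic to an edge-colored $K_{3,3}$. Biregularity combined with at least three colors forces $|P|,|Q|\ge 3$, whence $\max(|P|,|Q|)\ge 4$. I construct a non-isomorphic, $\LogicCzwo$-equivalent $G'$ by case analysis on the biregularity parameters $(k_c,\ell_c)$. If some color $c$ has $k_c,\ell_c\ge 2$, then by Lemma~\ref{le:biregular} the color-$c$ subgraph alone admits two non-isomorphic $(k_c,\ell_c)$-biregular realizations on $(P,Q)$, one connected and one disconnected. I build $G$ and $G'$ with these two realizations as their color-$c$ subgraphs and extend each to a full color-biregular coloring of $K_{|P|,|Q|}$ by placing the remaining colors into the residual bipartite graph via iterated applications of Lemma~\ref{lem_constructions}. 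Connectedness of the color-$c$ subgraph is an isomorphism invariant, so $G\not\cong G'$. Otherwise every color satisfies $\min(k_c,\ell_c)=1$, and by symmetry I may assume $\ell_c=1$ for all $c$, so the number of colors equals $|P|$ and the coloring is encoded by a sequence of $|Q|$ column-bijections $P\to\{1,\ldots,|P|\}$. For $|P|=|Q|\ge 4$, two row-and-column-inequivalent Latin squares of order $\ge 4$ yield two non-isomorphic colorings; for $|P|<|Q|$, the extra columns allow me to choose two column-bijection sequences with distinct multisets (for instance, one using all $|P|!$ bijections versus one using only the $|P|$ cyclic bijections each repeated $|Q|/|P|$ times).

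The main obstacle is Case~1: after fixing the modified color-$c$ subgraph~$H$, one must show that the residual bipartite graph $K_{|P|,|Q|}\setminus H$ admits a decomposition into biregular subgraphs with the prescribed parameters for the remaining colors. I expect this to follow from a careful iteration of the doubly-circulant construction of Lemma~\ref{lem_constructions}, using the divisibility relations $|P|\cdot k_{c'}=|Q|\cdot\ell_{c'}$ for each remaining color~$c'$ together with the flexibility afforded by choosing the cyclic shift so as to accommodate the prior removal of~$H$.
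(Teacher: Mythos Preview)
Your forward direction is fine and matches the paper's. The backward direction, however, has two genuine gaps.

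\textbf{Case 1 (some color has $k_c,\ell_c\ge 2$): the extension step is not justified.} After you fix a connected or disconnected $(k_c,\ell_c)$-biregular subgraph $H$, you must fill the residual bipartite graph $K_{|P|,|Q|}\setminus H$ with biregular pieces of the prescribed parameters $(k_{c'},\ell_{c'})$. Lemma~\ref{lem_constructions} does \emph{not} do this: it produces a doubly-circulant biregular graph on a fresh bipartition, not inside a prescribed host. What you actually need is that every $(k,\ell)$-biregular bipartite graph contains a $(k',\ell')$-biregular spanning subgraph whenever $|P|k'=|Q|\ell'$ and $k'\le k$; this is true (e.g.\ by a max-flow/min-cut argument), but it is a separate lemma you have not supplied. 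The paper sidesteps this entirely: for $|P|=|Q|$ it uses K\"onig's theorem (the residual is regular bipartite, so it decomposes into perfect matchings, which can then be grouped to realise any degree sequence), and for $|P|\neq|Q|$ it \emph{does not} modify a single color and extend. Instead it reduces to the balanced case by a blow-up: construct two non-isomorphic colorings on $(P',Q')$ with $|P'|=|Q'|=\gcd(|P|,|Q|)$, then replace each vertex by the appropriate number of clones. A separate ad~hoc construction handles $\gcd(|P|,|Q|)=3$.

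\textbf{Case 2 ($\min(k_c,\ell_c)=1$ for all $c$), sub-case $|P|<|Q|$: your example does not work in general.} Using all $|P|!$ bijections requires $|Q|=|P|!$ for biregularity, which is not assumed. Moreover, the \emph{multiset} of column-bijections is not an isomorphism invariant (permuting $P$ right-composes every bijection with a fixed permutation), so ``distinct multisets'' alone does not prove non-isomorphism; you would need the multiset of \emph{multiplicities} to differ. For the balanced sub-case $|P|=|Q|\ge 4$, appealing to the existence of row/column-inequivalent Latin squares is correct but heavier than necessary: the paper simply lets two of the matchings form either a Hamiltonian $2|P|$-cycle or a union of shorter even cycles, then extends by K\"onig --- a direct argument that also avoids any external classification.
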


\begin{proof}
It is straightforward to check that for every regular edge coloring of~$K_{3,3}$ there is no~$\LogicCzwo$-equivalent bipartite non-isomorphic graph. We thus focus on the converse.

Let $G$ be undirected complete edge-colored color-biregular on bipartition~$(P,Q)$. Suppose~$|P| = p\cdot k$ and~$|Q| = q\cdot k$ with co-prime~$p$ and~$q$.

Let~$1\leq d_1^P\leq d_2^P \leq \dots \leq d_t^P$ and~$1\leq d_1^Q\leq d_2^Q \leq \dots \leq d_t^Q$ be the color degrees of colors~$1,\ldots,t$ for vertices in~$P$ and~$Q$, respectively. Since $t \geq 3$, it holds that~$\abs{P} \geq 3$ and~$\abs{Q} \geq 3$. By double counting, we have that~$q$ divides~$d_i^P$ and~$p$ divides~$d_i^Q$ for all~$i\in \{1,\ldots,t\}$.

(\emph{Case~$|P| = |Q|$}.) We first consider the case~$|P| = |Q|$. Note the following observation. If we construct a color-regular graph on~$(P,Q)$ for some subset of the colors with the correct color degrees, then it is always possible to complete the graph such that it has the correct color degrees for all edge colors. The reason is K\"{o}nig's Theorem, which implies that a regular bipartite graph contains a perfect matching.

Assume $d_1^P = d_2^P=1$ (which implies~$d_1^Q = d_2^Q = 1$). If $|P| = 3$, the graph is an edge-colored $K_{3,3}$. However, if~$|P|>3$, there are two non-isomorphic (not necessarily complete) bipartite graphs with edge colors~$1$ and~$2$ and color degrees~$d_1^P$ and~$d_2^P$: we can form a color-alternating Hamiltonian cycle or two shorter color-alternating cycles. By K\"{o}nig's Theorem, these graphs can be extended and we conclude that for a bipartite graph with~$|P| = |Q|>3$ and~$d_1^P = d_2^P =1$ there is a~$\LogicCzwo$-equivalent non-isomorphic graph on the same bipartition. Similarly, by Lemmas~\ref{le:iff-identified} and~\ref{le:biregular}, if~$1<d_1^P < |Q|-1$, then there are two non-isomorphic~$(d_1^P,d_1^Q)$-biregular graphs on bipartition~$(P,Q)$ which can both be extended. Note that since there are at least three colors,~$d_1^P< |Q|-1$, so this resolves the case~$|P| = |Q|$.

(\emph{Case~$|P| \neq |Q|$}.)  Suppose now that~$|P| \neq |Q|$. Consider a bipartite graph~$G'$ on $(P',Q')$ with~$|P'| = |Q'| = |P|/p = |Q|/q = k$ and with color degrees~$d_1^P/q, \ldots, d_t^P/q$ and~$d_1^Q/p, \ldots, d_t^Q/p$. From~$G'$ we obtain a graph with bipartition classes of sizes~$|P|$ and~$|Q|$ by replacing every vertex~$v$ in~$P'$ by~$p$ copies that each have in every color the same neighbors as~$v$ and then replacing each vertex~$w$ in~$Q'$ by~$q$ copies that each have in every color the same neighbors as~$w$.

If we perform the construction twice with non-isomorphic graphs~$G'_1$ and~$G'_2$ then we obtain two non-isomorphic~$\LogicCzwo$-equivalent graphs~$G_1$ and~$G_2$. We already know from the first part of the proof that such non-isomorphic graphs~$G'_1$ and~$G'_2$ exist if~$k\neq 3$.

It remains to consider the case~$k = 3$. Since we have already treated~$K_{3,3}$, we can assume without loss of generality that~$p>1$.
As above, we will construct two non-isomorphic graphs~$G'_1$ and~$G'_2$ where the second bipartition class $Q'$ has size~$|Q'| \coloneqq |Q|/q$, that is, $|Q'| = 3$, and then replace every vertex in that class by~$q$ copies with the same neighbors in each color in $P$. If~$G'_1$ and~$G'_2$ are non-isomorphic then the two final graphs are non-isomorphic. 
Recall that we assume that there are at least three edge colors. Since there are~$p k \cdot qk= 3 \cdot pqk$ edges in the complete bipartite graph and since there must be at least~$pqk = \lcm(pk,qk)$ edges in a color, there are exactly three colors, say red, green and blue, each with~$pqk$ edges.
Let~$Q' = \{w_0,w_1,w_2\}$.  
Consider the following two graphs~$G_1'$ and~$G_2'$. We divide~$P$ into three blocks~$P_0,P_1,P_2$ of equal size. We color all edges between $P_i$ and $w_i$ red, edges between $P_i$ and $w_{(i+1) \bmod 3}$ green and the other edges blue, as to form a blown-up color-regular bipartite~$K_{3,3}$. This yields the graph~$G'_1$. To obtain~$G'_2$ we choose for all~$i \in \{0,1,2\}$ a vertex $v_i$ from~$P_i$ and recolor its incident red edge green and its incident green edge red. That is, the edges~$\{v_i,w_i\}$ are now green while the edges~$\{v_i,w_{(i+1) \bmod 3}\}$ are now red.
The newly obtained graph~$G'_2$ is non-isomorphic to~$G'_1$ since in~$G'_2$ there are vertices that agree on some edge colors towards vertices in~$Q'$ but disagree on others. (For example, the edge $\{v_0, w_0\}$ is green just like the edges between vertices in~$P_2 \backslash \{v_2\}$ and~$w_0$, but~$\{v_0,w_1\}$ is red, whereas~$P_2$ only has blue edges to~$w_1$. Note that~$P_2 \backslash \{v_2\}$ is non-empty since $p > 1$) In~$G'_1$, for any vertex~$v \in P$, the color of any of its edges to~$Q'$ uniquely determines to which block of~$P$ it belongs and thus also determines the colors of the other edges incident to~$v$.
\end{proof}

In the lemma, the bipartition classes~$P$ and~$Q$ are not required to be distinguishable. 
However, if~$P$ and~$Q$ are a priori distinguishable (which is, for example, the case if vertices in~$P$ are colored with a different color than vertices in~$Q$), then the lemma also applies. The reason why the graphs~$G'_1$ and~$G'_2$ constructed in the proof are still~$\LogicCzwo$-equivalent to~$G$ in this situation, is that in case~$P$ and~$Q$ can be interchanged (which implies~$|P| = |Q|$), the color degrees for~$P$ and~$Q$ are the same. We obtain the following corollary.
\begin{corollary}\label{cor:gen:math:structures:bipartite:fixed:bipartition}
Let~$G$ be an undirected complete bipartite vertex-/edge-colored color-biregular graph on bipartition~$(P,Q)$. Suppose~$G$ has at least three edge colors, the vertices in~$P$ are colored red and the vertices in~$Q$ are colored blue. Then there is a bipartite graph~$G'$ on~$(P,Q)$ non-isomorphic to~$G$ and~$\LogicCzwo$-equivalent to~$G$ if and only if~$G$ is not isomorphic to~$K_{3,3}$.
\end{corollary}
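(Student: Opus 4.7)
The plan is to deduce the corollary directly from Lemma~\ref{lem:gen:math:structures:bipartite} by controlling the effect that the additional vertex coloring has on both $\LogicCzwo$-equivalence and isomorphism. The graphs produced in the proof of Lemma~\ref{lem:gen:math:structures:bipartite} will be reused; my task is only to argue that the vertex coloring neither breaks their $\LogicCzwo$-equivalence nor creates a new isomorphism.

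For the ``if'' direction, let $G\not\cong K_{3,3}$ and let $G_1,G_2$ be the two non-isomorphic uncolored-$\LogicCzwo$-equivalent bipartite graphs on $(P,Q)$ produced by the proof of Lemma~\ref{lem:gen:math:structures:bipartite}; one of them may be taken to be $G$. Non-isomorphism persists under the coloring, since color-preserving isomorphisms form a subset of uncolored ones. For $\LogicCzwo$-equivalence, I split into two cases using the color-biregularity identity $d_i^P\cdot|P|=d_i^Q\cdot|Q|$. If $|P|\neq|Q|$, the color-degree sequences of $P$ and $Q$ already differ, so the uncolored $\LogicCzwo$-partition refines the bipartition $(P,Q)$; hence, coloring $P$ red and $Q$ blue cannot alter the $\LogicCzwo$-coloring and equivalence is preserved. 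If $|P|=|Q|$, double counting forces $d_i^P=d_i^Q$ for each color $i$, so in each of $G$, $G_1$, $G_2$ the vertices of $P$ and those of $Q$ carry identical atomic types and identical profiles of neighbours in each color class; hence refining along $(P,Q)$ produces the same (still equitable) partition of each graph, and colored $\LogicCzwo$-equivalence follows.

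For the ``only if'' direction, suppose $G\cong K_{3,3}$ and, for contradiction, that $G'$ is bipartite on $(P,Q)$, not color-preserving-isomorphic to $G$, but colored-$\LogicCzwo$-equivalent to $G$. Colored equivalence refines uncolored equivalence, so Lemma~\ref{lem:gen:math:structures:bipartite} supplies an uncolored edge-color-preserving isomorphism $f\colon G\to G'$. Since both graphs are complete bipartite on $(P,Q)$, the map $f$ either preserves or swaps the bipartition. In the first case $f$ is color-preserving, a contradiction. In the second case I invoke the fact that a color-biregular edge-colored $K_{3,3}$ with at least three colors has exactly three colors, each contributing a perfect matching on $(P,Q)$, so its edge-coloring is (up to symbol renaming) the Cayley table of~$\mathbb{Z}_3$. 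This table is symmetric, so $G$ admits an edge-color-preserving automorphism $h$ that swaps $P$ and $Q$; then $f\circ h^{-1}$ is a color-preserving isomorphism $G\to G'$, again a contradiction.

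The main point requiring care is the $|P|=|Q|$ subcase of the ``if'' direction, where one must be sure that distinguishing $P$ from $Q$ by vertex color does not accidentally separate $G$ from $G_1$ or $G_2$ at some later refinement round; this is precisely where the equality of color degrees between $P$ and $Q$ in all three graphs is needed. The other step to keep in mind is the symmetry of the $\mathbb{Z}_3$-Cayley table, which is the specific reason the exceptional case in the lemma survives the addition of the vertex coloring.
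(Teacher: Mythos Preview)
Your argument is correct and follows the same route as the paper. The paper's entire proof is the short paragraph preceding the corollary: it observes that the two graphs constructed in Lemma~\ref{lem:gen:math:structures:bipartite} remain $\LogicCzwo$-equivalent to $G$ after the bipartition is made visible, the key point being that when $|P|=|Q|$ the color degrees on the two sides coincide---exactly your case split. Your treatment of the $K_{3,3}$ direction via the swap automorphism coming from the symmetry of the $\mathbb{Z}_3$ Cayley table is more explicit than the paper, which simply inherits the ``straightforward to check'' from Lemma~\ref{lem:gen:math:structures:bipartite}; one minor wrinkle is your phrase ``one of them may be taken to be $G$''---Lemma~\ref{lem:gen:math:structures:bipartite} only guarantees $G_1\not\cong G_2$, not that either equals $G$, but since at most one of them can be isomorphic to $G$ you still get the required $G'$.
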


In the following, we will need both Lemma~\ref{lem:gen:math:structures:bipartite} and Corollary~\ref{cor:gen:math:structures:bipartite:fixed:bipartition} depending on whether we are in a situation where the bipartition classes~$P$ and~$Q$ can be interchanged. We first use Lemma~\ref{lem:gen:math:structures:bipartite} to characterize color-regular graphs that are identified by~$\LogicCzwo$.

\begin{lemma}\label{lem:mixed:graphs:identified}
Let~$G$ be an undirected color-regular complete graph with at least three edge colors. Then~$\LogicCzwo$ identifies $G$ if and only if it is
\begin{enumerate}
	\item a graph on 4 vertices, which has $3$ edge colors that each induce a perfect matching,
	\item a graph on 6 vertices, which has 5 edge colors that each induce a perfect matching, or 
	\item a graph on 6 vertices, which has 3 edge colors, one of which induces the complement of a 6-cycle, and the other two edge colors each induce an undirected perfect matching. 
\end{enumerate}
\end{lemma}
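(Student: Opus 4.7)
The plan is to prove the two directions separately. For the backward direction, I would verify each of the three cases by a uniqueness argument: the decomposition of $K_4$ into three perfect matchings is the unique $1$-factorization of $K_4$, the $1$-factorization of $K_6$ is classically unique up to isomorphism (Remark~\ref{remark:on:factorization:of:K6}), and in the third case the $9$-edge color forces its complement in $K_6$ to be a specific $6$-cycle, which admits a unique decomposition into two perfect matchings up to the reflection that swaps the two matching labels.

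For the forward direction, I would prove the contrapositive and proceed by case analysis on $n \coloneqq |V(G)|$. Double counting forces $d_c \cdot n$ to be even for each color $c$, with the $d_c$ summing to $n-1$ and at least three of them positive. This rules out $n = 3$ and $n = 5$ entirely, leaves only the sequence $(1,1,1)$ for $n = 4$ (yielding the first listed case), and leaves $(1^5)$, $(1^3, 2)$, $(1, 2^2)$, $(1^2, 3)$ for $n = 6$. Of these, $(1^5)$ is the second listed case; in $(1^3, 2)$ and $(1, 2^2)$, the $2$-regular color can be realized either as a $6$-cycle or as two disjoint triangles, producing $\LogicCzwo$-equivalent non-isomorphic graphs; and in $(1^2, 3)$ the $3$-regular color is either $K_{3,3}$ or the prism, where the $K_{3,3}$ realization is incompatible with the matching colors (its complement in $K_6$ is two disjoint triangles, which admit no perfect matching decomposition) while the prism realization forces exactly the third listed case.

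The main obstacle is $n \geq 7$, where I would combine two ideas. First, for each color $c$ with $2 \leq d_c \leq n-3$, the $d_c$-regular subgraph of that color is not identified in isolation by Lemma~\ref{le:regular}, so multiple non-isomorphic $d_c$-regular graphs exist on the vertex set; whenever such a replacement can be extended back to a color-regular graph via the matching and circulant constructions from Subsection~\ref{subsec:finstruct:inv} and Lemma~\ref{lem_constructions}, we obtain a non-isomorphic $\LogicCzwo$-equivalent graph. Second, fixing a vertex $v$ and splitting the remaining vertices into color classes $P_1, \ldots, P_t$ of sizes $d_1, \ldots, d_t$ according to the color of the edge to $v$ exposes complete color-biregular bipartite slices $G[P_i, P_j]$ with at least three edge colors; once some slice is large enough not to be $K_{3,3}$, which is automatic for $n$ sufficiently large, Lemma~\ref{lem:gen:math:structures:bipartite} provides a non-isomorphic replacement which is then lifted back to $G$. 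The delicate point will be to verify that for every admissible sequence $(d_1, \ldots, d_t)$ at $n \geq 7$ at least one of these two routes succeeds; the handful of tight configurations, such as $n = 7$ with sequence $(2, 2, 2)$, can be handled directly using that $K_7$ has several non-isomorphic decompositions into Hamilton cycles.
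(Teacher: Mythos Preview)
Your treatment of $n \le 6$ and the backward direction is correct and in fact tidier than the paper's, which only recovers the small cases as they surface inside a coarser case split. The problems lie entirely in your plan for $n \ge 7$.

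The second route does not work as stated. Partitioning $V\setminus\{v\}$ by the color of the edge to $v$ does \emph{not} produce color-biregular slices $G[P_i,P_j]$: color-regularity of $G$ only fixes the total number of color-$c$ edges at each vertex, not their distribution over the blocks $P_k$, so two vertices of $P_i$ may well have different color-degree profiles into $P_j$ (this already happens for simple examples on $8$ vertices). Hence Lemma~\ref{lem:gen:math:structures:bipartite} cannot be invoked. There is also no reason a given slice should carry at least three colors, and even if you could replace a slice by a non-isomorphic color-biregular graph with the same parameters, an isomorphism $G\to G'$ is under no obligation to fix $v$ or the blocks $P_i$, so local non-isomorphism of the slice does not yield global non-isomorphism.

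The first route has a genuine gap as well: the constructions in Subsection~\ref{subsec:finstruct:inv} manufacture a color-regular graph in one piece; they do not show that an \emph{arbitrary} prescribed $d_c$-regular subgraph can be completed to a color-regular graph with the remaining degrees $d_1,\dots,\widehat{d_c},\dots,d_t$. For even $n$ with several odd $d_i$ this is a nontrivial decomposition problem, and Lemma~\ref{le:regular} alone does not supply two $d_c$-regular graphs that both admit such a completion. The paper sidesteps both difficulties by working globally rather than one color or one slice at a time: for odd $n$ it exhibits two circulant colorings $\Circ(\psi)$, $\Circ(\psi')$ distinguished by a common-neighbour count in the smallest color; for even $n\ge 6$ it halves the vertex set, pushes as many colors as possible into the bipartite part between the halves (governed by a ``sum condition'' $\sum_{i\ge 2}\even(d_i)\ge n/2-1$), and then applies Lemma~\ref{le:biregular} or Lemma~\ref{lem:gen:math:structures:bipartite} to that bipartite part, handling the residual case by counting Walecki colorings $\Match(\psi)$.
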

\begin{proof}
Let~$1 \leq d_1\leq d_2\leq \ldots \leq d_t$ be the color degrees of the graph. Since we interpret non-edges as edges of a particular color, without loss of generality we can assume that~$\sum_{i = 1}^t d_i = n-1$.

We distinguish two cases according to the parity of the number~$n$ of vertices.

\emph{Case 1:~$n$ is odd.} 

Suppose first that $n$ is odd.
We will employ the generalized  circulant construction outlined in Subsection~\ref{subsec:finstruct:inv}. In particular, we saw there that for every map~$\psi \colon \{1,\ldots,n-1\} \rightarrow \{1,\ldots,t\}$ such that for all~$i$ we have~$\psi(n-i) = \psi(i)$ and for all~$j$ we have~$|\psi^{-1}(j)| = d_j$, we obtain a circulant graph~$\Circ(\psi)$ by coloring the edge between vertices~$i$ and~$j$ satisfying~$j\leq i$ with the color~$\psi(i-j)$.

Since $n$ is odd, all color degrees are even, and since there are at least three colors, we have $1<d_1\leq (n-1)/3$. This also implies~$n\geq 7$.

There are suitable~$\psi$,~$\psi'$ such that~$\psi^{-1}(1) = \{-d_1/2 ,\ldots,-1,1,\ldots,{d}_1/2\}$ and~$\psi'^{-1}(1) = \{-d_1/2 -1,\dots, -2,2,\dots,d_1/2 + 1\}$. We claim that~$\Circ(\psi)$ and~$\Circ(\psi')$ are not isomorphic.
Indeed, consider vertex~$0$. In~$\Circ(\psi)$ this vertex has two neighbors that both have~$d_1-2$ common neighbors with it (namely the neighbors~$1$ and~$n-1$). However, since~$d_1\leq (n-1)/3$, in~$\Circ(\psi')$ the vertex~$0$ does not have such a neighbor. Since the graph is transitive, it suffices to consider only vertex~$0$, and we conclude that the graphs are non-isomorphic.

\emph{Case 2:~$n$ is even.}  

Suppose now that~$n$ is even. Note first that all color-regular graphs on at most four vertices with at least three edge colors have exactly four vertices and are unions of three perfect matchings. We can thus assume in the following that~$n\geq 6$.

For every integer~$i$ define~$\even(d_i)$ to be the largest even integer that is at most~$d_i$. We distinguish two cases, depending on the following inequality, which we call the \emph{sum condition}. We say that the sum condition is fulfilled if~$\sum_{i = 2}^t \even(d_i) \geq n/2-1$. Note that this summation starts at index~$2$.

\emph{Case 2a: The sum condition is fulfilled.}

Observe that if the sum condition is fulfilled then there exists a graph~$G_H$ on~$n$ vertices indistinguishable from~$G$ with the following properties: The vertex set~$V$ of~$G_H$ is the union of two sets~$V_1$ and~$V_2$, each of size~$n/2$. The two graphs induced by~$V_1$ and~$V_2$ are isomorphic, color-regular graphs and the bipartite graph~$H$ induced by all edges that run between~$V_1$ and~$V_2$ is color-regular and contains all edges of color~$1$. This is possible by applying K\"onig's Theorem, since the summation in the sum condition starts at index~2.  We will, depending on the color degrees, construct such a graph~$H$ as to allow us to also construct a different graph~$H'$ that can replace~$H$ to obtain a non-isomorphic graph.

(\emph{$d_1>1$}.) If~$d_1 > 1$ then the graph~$H$ can be constructed in such a way that the graph induced by the edges in color~$d_1$ is connected (which can be seen by  starting with a Hamiltonian cycle and then applying K\"{o}nig's Theorem as in the previous proof). There are at least three colors and~$d_1$ is the smallest color degree, so we have~$d_1\leq (n-1)/3$. Since~$n\geq 6$, this implies that~$d_1 < n/2-1$ and thus, the graph induced by the edges of color~$1$ is neither a matching nor a co-matching. Lemma~\ref{le:biregular} implies that there is a bipartite graph~$H'$ that is~$\LogicCzwo$-equivalent but not isomorphic to~$H$. 
Let~$G_{H'}$ be the graph obtained from~$G_H$ by replacing~$H$ with~$H'$. Since the edges of color~$1$ in~$H$ induce a connected subgraph, the partition of~$G_{H}$ into the sets~$V_1$ and~$V_2$ is combinatorially determined by the graph~$G_{H}$ (i.e., the partition is isomorphism invariant). If this bipartition is not combinatorially determined by~$H'$ then the graphs~$G_H$ and~$G_{H'}$ are not isomorphic. However, if the bipartition is combinatorially determined by~$H'$ then~$G_{H}$ and~$G_{H'}$ are also not isomorphic since~$H$ and~$H'$ are not isomorphic. This resolves the case of the fulfilled sum condition if~$d_1> 1$.

(\emph{$d_1=d_2 = 1$}.) If~$d_1 = d_2 = 1$, we 
can construct~$G_H$ such that all edges of color~$1$ and~$2$ are contained in~$H$.
We repeat the argument above, this time constructing a bipartite graph~$H$ for which the subgraph induced by the edges of color~$1$ and the edges of color~$2$ is connected.
According to Lemma~\ref{lem:gen:math:structures:bipartite} there is exactly one size for which there exists no non-isomorphic graph~$H'$ that is~$\LogicCzwo$-equivalent to~$H$, namely the case~$n= 6$. Note that in this case, since the sum condition is fulfilled, we have~$d_3 = 3$ (which implies~$t = 3$), or it holds that~$d_3 = 1$ and~$d_4= 2$ (and~$t=4$). In the first case, since two disjoint matchings always form a union of cycles of even length, for every graph with these degrees, the graph induced by the third color must be
the complement of a~$6$-cycle and thus, the graph~$G$ is identified by~$\LogicCzwo$. In the second case, two non-isomorphic~$\LogicCzwo$-equivalent graphs are obtained by forming three matchings and two 3-cycles and by forming three matchings and a 6-cycle.

(\emph{$d_1=1$ and~$d_2 >1$}.) Supposing now that~$d_1 = 1$ and~$d_2 >1$, consider first the case~$n=6$, which implies~$d_2= d_3=2$. Such a graph is not identified since it is possible to let color~$d_2$ induce two triangles but also possible to let it induce a~$6$-cycle. Supposing~$n>6$ one can choose the graphs induced by~$V_1$ and~$V_2$ so as to have at least two edge colors. 
Indeed, since there are at least three edge colors and $d_3 \geq 2$, we can take a number~$d'_2$ with~$1\leq d'_2\leq d_2$ and let the induced color-regular graphs on $V_1$ and~$V_2$ be at least~$d'_2$-regular in color~2 and at least~$2$-regular in color 3, for example, by using the circulant construction from Subsection~\ref{subsec:grap:inv}.
The graph~$H$ contains a perfect matching of color~$1$. Note that this does not necessarily mean that the bipartition~$(V_1,V_2)$ is combinatorially determined by the graph.
Since the induced graphs on~$V_1$ and~$V_2$ are isomorphic, the matching can be chosen such that whenever~$(v_1,v_2)$ and~$(v_1',v_2')$ are distinct edges from~$V_1$ to~$V_2$ of color~$1$ then the color of~$(v_1,v_1')$ is equal to the color of~$(v_2,v_2')$. However, since~$G_H[V_2]$ has at least two edge colors, by suitably changing the matching, 
we can destroy this property maintaining the~$\LogicCzwo$-equivalence class.

We claim that this alteration produces a non-isomorphic graph. We argue this by showing that the number of 4-tuples~$(v_1,v_2,v'_1,v'_2)$ of distinct vertices such that~$(v_1,v_2)$ and~$(v'_1, v'_2)$ are matching edges and~$(v_1,v'_1)$ has a different color than~$(v_2,v_2')$ changes. Since the bipartition might not be isomorphism invariant, in addition to the edges considered above, we also need to consider for an unordered pair of matching edges~$(\{v_1,v_2\},\{v_1',v_2'\})$ with~$v_i,v'_i\in V_i$ whether~$(v_1,v_2')$ and~$(v'_1,v_2)$ are of the same color (note the reversal of the indices). However,~$(v_1,v_2')$ and~$(v'_1,v_2)$ are edges in the graph~$H$. Thus, the number of such pairs for which we obtain edges of different colors does not change if the matching is rearranged while maintaining the isomorphism type of~$H$. We can thus create two~$\LogicCzwo$-equivalent graphs which differ in the number of pairs of (directed) matching edges of color~$1$ whose starting vertices are connected by the same color as the end vertices. These graphs cannot be isomorphic.

\emph{Case 2b: The sum condition is not fulfilled.}

Suppose now that the sum condition is not fulfilled. Observe that this 
implies that~$d_1= d_2 = 1$, since otherwise~$\even(d_2)+\even(d_3) \geq (d_1+d_2+d_3+1)/2-1$ and~$\even(d_i) \geq d_i/2$ for all~$i\geq 3$. 
If there are at most four colors, then~$\sum_{i = 2}^t \even(d_i) \geq n-4$.
Hence, if~$n\geq 6$, the sum condition is fulfilled. 

We can thus assume that there are at least 5 colors and that~$d_1 = d_2 = 1$.
Every map~$\psi\colon \{1,\ldots,n-1\} \rightarrow \{1,\ldots,t\}$ leads to an edge-colored graph~$\Match(\psi)$ via the matching construction from Subsection~\ref{subsec:finstruct:inv}. Note that if~$\psi_1$ and~$\psi_2$ are two distinct maps then the graphs~$\Match(\psi_1)$ and~$\Match(\psi_2)$ are not identical (but may be isomorphic).

We will in the following consider only maps~$\psi$ that satisfy~$\psi(1) = 1$ and~$\psi(2) = 2$. 

There are at least $\binom{n-3}{d_3}\binom{n-3-d_3}{d_4}\geq (n-3) \cdot 2$ distinct valid choices for~$\psi$ that are of interest to us.
Since there are only~$n$ possible automorphisms of a color-alternating Hamiltonian cycle on~$n$ vertices, if all graphs resulting from the choices for~$\psi$ were isomorphic then we would have~$(n-3) \cdot 2\leq n$, which implies~$n\leq 6$.
We already considered~$n\leq 4$ so suppose~$n = 6$. In this case, since there must be~5 colors,~$d_1 = d_2 = d_3 = d_4 = d_5 = 1$. We thus obtain a 1-factorization of~$K_6$, of which there is only one up to isomorphism~\cite{Lindner1976265}, so the graph is identified by $\LogicCzwo$.
\end{proof}

So far we have only dealt with undirected graphs. We now extend the previous lemma
to color-regular complete $\ecPOG$s (see Figure~\ref{fig:the:identified:graphs}).
Recall that~$\ecPOG$s have the property that there is no edge color with both directed and undirected edges. (With this assumption we do not lose generality since directed edges and indirected edges are always distinguishable.)

\begin{theorem}\label{thm:edge:colored:color:regular:graph:identified}
Let~$G$ be a color-regular complete~$\ecPOG$. 
Then~$\LogicCzwo$ identifies $G$ if and only if~$G$ is 
\begin{enumerate}
\item an undirected complete graph with only one edge color, \label{item:undirected:complete}
\item undirected and has two edge colors, one of which induces a perfect matching, \label{item:perfect:matching}
\item a directed 3-cycle,\label{item:directed:3:cycle}
\item an undirected graph on four vertices, which has three edge colors that each induce a perfect matching,
\item a graph on four vertices, has two edge colors, one of which induces a 4-cycle that may or may not be directed, and the other edge color induces an undirected perfect matching,\label{item:4:cy:and:match}
\item the unique regular tournament on five vertices,
\item a graph on five vertices, has two edge colors which both induce a $5$-cycle of which at most one is directed,\label{item:5:cycle}
\item an undirected graph on six vertices, which has five edge colors that each induce a perfect matching, or 
\item an undirected graph on six vertices, which has three edge colors, one of which induces the complement of a~$6$-cycle, and the other two edge colors each induce an undirected perfect matching.\label{item:1:fac:of:K6}
\end{enumerate}
\end{theorem}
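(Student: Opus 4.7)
The plan is to prove both directions by leveraging the preceding lemmas, reducing the directed situation to facts we already have for the undirected color-regular case.

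For the backward direction, I would verify identifiability for each of the nine cases by direct inspection. Items 1 and 2 are immediate (the unique complete graph, and a regular graph with a distinguished matching color, which reduces to Lemma~\ref{le:regular} via the flip construction). The undirected items 4, 8, and 9 are exactly the output of Lemma~\ref{lem:mixed:graphs:identified}. Items 3, 5, 6, and 7 concern structures on at most five vertices: for each one the isomorphism type is determined by the atomic $\LogicCzwo$-types of vertex pairs, using in addition the well-known uniqueness (up to isomorphism) of the regular tournament on five vertices and of the directed 3-cycle, together with finite checks for the 4- and 5-vertex cases.

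For the forward direction, let $G$ be a color-regular complete $\ecPOG$ identified by $\LogicCzwo$. I would split on whether $G$ has any directed edges. If $G$ is purely undirected (treating non-edges as an additional color, as stated at the beginning of the section), then Lemma~\ref{le:regular} applied via the flip handles the case of at most two colors and gives items 1, 2, and the undirected subcase of 7. If $G$ has three or more colors, Lemma~\ref{lem:mixed:graphs:identified} yields items 4, 8, and 9.

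Now suppose $G$ has a directed color $c$ with common in- and out-degree $d$. The central claim is that $n \leq 5$. When $d = 1$, the arcs of color $c$ decompose into disjoint directed cycles of a common length $\ell$; if there were more than one such cycle, one could shift the cyclic labeling of one of them while keeping all within-color structure intact, which changes the distribution of the remaining colors across the cycles and yields a non-isomorphic $\LogicCzwo$-equivalent $\ecPOG$; hence there is a single cycle, so $n = \ell$. A circulant-style rearrangement of the remaining colors along this cycle then forces $\ell \in \{3,4,5\}$. When $d \geq 2$, I would reduce to the undirected setting by symmetrizing the directed color (each arc $(v,w)$ becomes an undirected edge of a new tag, the directedness being recorded by an auxiliary color), producing an undirected color-regular graph whose isomorphism type must itself be essentially rigid. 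Applying Lemma~\ref{lem:mixed:graphs:identified} to the symmetrized graph bounds $n$ by $6$ and restricts the possible "undirected skeletons"; a finite case check then rules out everything except the listed directed configurations.

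The main obstacle is precisely the case $d \geq 2$ in the forward direction. The symmetrization step must be carried out so that no information is lost in a way that would artificially increase rigidity: distinct $\ecPOG$s must give distinct (not merely non-isomorphic) symmetrizations, and identifiability must transfer across this reduction. I would handle this by an invariant-counting argument, pairing $(v,w)$ arcs of color $c$ with the color of $(w,v)$ and checking that any non-rigidity of the symmetrization lifts to a non-rigidity of $G$. Once $n \leq 5$ is established, the remaining classification becomes a finite enumeration, and the surviving structures are exactly items 3, 5 (directed subcase), 6, and 7 (directed subcase).
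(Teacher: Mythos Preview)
Your approach diverges from the paper's and contains real gaps in the directed case. The paper handles orientation by a single clean reduction: define $\overline{G}$ as the graph obtained from $G$ by forgetting all edge directions (keeping colors), and show that if $G$ is identified then $\overline{G}$ is identified. The point is that any $\overline{H}$ that is $\LogicCzwo$-equivalent but non-isomorphic to $\overline{G}$ can be oriented---using Euler tours on each even-degree directed color class---to produce an $H$ that is $\LogicCzwo$-equivalent to $G$; since $\overline{H}\not\cong\overline{G}$ forces $H\not\cong G$, this contradicts identification of $G$. The forward direction then reduces to: for each undirected identified graph coming from Lemma~\ref{le:regular} or Lemma~\ref{lem:mixed:graphs:identified}, determine which orientations remain identified. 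That is a short case analysis (regular tournaments are unique only for $n\in\{3,5\}$; matchings and odd-degree color classes cannot be oriented regularly; the $4$-cycle and one $5$-cycle can be oriented; directing both $5$-cycles fails; the matching plus co-matching on $n\geq 6$ admits two non-isomorphic orientations).

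Your $d=1$ argument is not sound as stated. The directed cycles of color $c$ need not have a common length, and more importantly ``shifting the cyclic labeling of one of them'' is just applying a permutation to some vertices, which a priori yields an isomorphic graph; if instead you mean rotating the attachment of the other colors relative to one $c$-cycle, you must argue why the result is non-isomorphic, and this fails whenever the original graph already has that rotational symmetry. The subsequent claim that a circulant rearrangement forces $\ell\in\{3,4,5\}$ is asserted without argument. For $d\geq 2$ you yourself flag the symmetrization as the main obstacle, and the sketch (new tags, auxiliary colors, invariant counting) does not make clear why identifiability transfers in the required direction. The paper's $\overline{G}$ reduction is exactly the correct version of this step and bypasses all of these difficulties.
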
 

\begin{figure}[H]
	\centering \includegraphics[width=\textwidth]{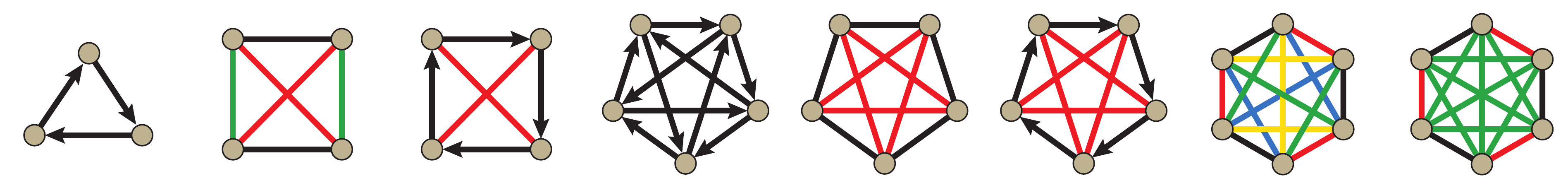} 
	\caption{
		The special cases (Items~\ref{item:directed:3:cycle}--\ref{item:1:fac:of:K6}) that occur in the classification
			in Theorem~\ref{thm:edge:colored:color:regular:graph:identified}.}\label{fig:the:identified:graphs}
\end{figure}
 \begin{proof}
Let~$G$ be an edge-colored color-regular complete graph.

Let~$\overline{G}$ be the graph that is obtained from~$G$ by replacing every directed edge with an undirected edge of the same color. 
We claim that if~$G$ is identified by $\LogicCzwo$, then so is~$\overline{G}$.
Indeed, assume there is a non-isomorphic graph~$\overline{H}$ not distinguished from~$\overline{G}$ by~$\LogicCzwo$. Then we obtain a non-isomorphic graph~$H$ not distinguished from~$G$ by directing all edges in edge color classes that consist of directed edges in~$G$. This is consistently possible (i.e., making the graph~$H$ color-regular) because each such edge color class induces a regular graph of even degree. By the classical result of Euler we can choose a Eulerian tour, which gives us equal in- and out-degrees for all vertices.

To prove the theorem it thus remains to consider graphs whose undirected underlying graph is identified by $\LogicCzwo$.

\emph{(at most 2 edge colors)} Starting with graphs that have at most two edge colors we consider all undirected graphs that appear in Lemma~\ref{le:regular} (with non-edges interpreted as edges of a distinct color). For graphs with only one edge color, the question is for which~$n$ there exists exactly one regular tournament on~$n$ vertices. This is the case for~$n\in \{1,3,5\}$. Indeed, to see that for~$n>5$ there is no unique regular tournament it suffices to consider the circulant directed graph in which there is an edge from~$i$ to~$j$ if~$(i-j)\bmod n <n/2$. Reversing all edges of the form~$(i,i+1)$ yields a non-isomorphic regular tournament.

Next we consider graphs that consist of a matching and the complement of a matching. Suppose the matching is of color red and the complement is blue. Since a matching induces a graph of odd degree, it cannot be directed while maintaining regularity. If~$n = 4$ then by directing the complement of the matching we obtain the graph described in Item~\ref{item:4:cy:and:match}. Note that $n$ is even, so suppose~$n>5$ from now on. We form two non-isomorphic graphs~$\widehat{G}$ and~$\widehat{G'}$ on~$\{v_1,\ldots,v_{n/2}\} \cup \{w_1,\ldots,w_{n/2}\}$ that are~$\LogicCzwo$-equivalent to~$G$ as follows.
We start with an arbitrary directed complete graph~$H$ on vertex set~$\{1,\ldots,n/2\}$ that contains the edges~$(1,2)$,~$(2,3)$ and~$(3,1)$. In~$\widehat{G}$ each pair~$\{v_i,w_i\}$ forms an edge of the red matching. For every directed edge~$(i,j)$ in~$H$, we insert the blue directed edges~$(v_i,v_j)$,~$(v_j,w_i)$,~$(w_i,w_j)$ and~$(w_j,v_i)$. This yields a graph~$\widehat{G}$ which consists of a red matching and a blue directed complement of a matching.
Note that in~$\widehat{G}$ every~$4$-tuple of vertices that induce two red matching edges also induces a directed 4-cycle in blue. Also note that~$\widehat{G}$ contains a blue directed 3-cycle~$C_3$ with vertices~$v_1,v_2,v_3$. We form a new graph~$\widehat{G'}$ by reversing all edges in $C_3$. The graph~$\widehat{G'}$ is~$\LogicCzwo$-equivalent to~$\widehat{G}$, but it is not isomorphic to~$\widehat{G}$ since the~$4$-tuple of vertices~$(v_1,v_2,w_1,w_2)$ induces two red matching edges but no blue directed~$4$-cycle in~$\widehat{G'}$.

Consider now a 5-cycle. Interpreted as an edge-colored complete graph, it has two edge colors, both of which induce a 5-cycle. If we direct one of the 5-cycles, we obtain a graph 
described in Item~\ref{item:5:cycle}. It is easy to see that these graphs are identified by $\LogicCzwo$ since reversing all edges of the~5-cycle yields an isomorphic graph. However, directing both 5-cycles, i.e., directing edges of both edge colors, we obtain a graph that is not identified by $\LogicCzwo$ since it is not isomorphic to the graph obtained when reversing the directions in one of the 5-cycles, since only one of the graphs has a directed~$3$-cycle with two edges in the first color.

\emph{(at least three edge colors)} Concerning graphs with three or more edge colors we need to consider all graphs mentioned in Lemma~\ref{lem:mixed:graphs:identified}. Since matchings cannot be oriented maintaining regularity it suffices to consider graphs on 6 vertices that are the disjoint union of the complement of a 6-cycle and two perfect matchings. However, for the edge color class that is the complement of a 6-cycle the degree is odd, so it cannot be oriented in a regular way either.
We thus obtain only the undirected versions of the graphs mentioned in Lemma~\ref{lem:mixed:graphs:identified}.
\end{proof}

We are going to determine the structure of~$\ecPOG$s that are identified by~$\LogicCzwo$. So far, we have classified the $\ecPOG$s that have exactly one~$\LogicCzwo$-partition class. Similarly to the case of undirected graphs treated in Section~\ref{sec:characterization:of:graphs} we will now describe how to combine such building blocks to form a larger graph that is identified. As before, by interpreting non-edges as edges of a special color we only need to consider complete graphs. Since vertices of different~$\LogicCzwo$-partition classes are distinguished, we can assume that all edges between two~$\LogicCzwo$-partition classes are undirected. (In other words, the direction of a directed edge starting in a color class~$P$ and ending in another color class~$Q$ is implied by the colors of the starting and end vertices. Edges of the same color in the other direction can thus always be marked with a different color). 
Recall that Corollary~\ref{cor:gen:math:structures:bipartite:fixed:bipartition} says that the only regular graph with a fixed bipartition and at least three edge colors that is color-biregular is~$K_{3,3}$  with an edge coloring that induces three disjoint perfect matchings.
This shows that it is possible that two~$\LogicCzwo$-partition classes of size~$3$ in an identified graph can be connected via three matchings of different colors.
In the light of this, in comparison to the case of two colors covered in Lemma~\ref{le:biregular} we need an additional relation. 

\begin{definition}
Let~$G$ be a vertex-colored~$\ecPOG$ and let~$P$ and~$Q$ be two disjoint subsets of~$V(G)$. We introduce the relation~$P \equiv_3^3 Q$ to denote the fact that the graph induced by the edges running between~$P$ and~$Q$ is the graph~$K_{3,3}$ with three edge colors which each induce a perfect matching between~$P$ and~$Q$.
\end{definition}

Since we treat non-edges as edges of a particular color, we also need to adapt our notation which we used for graphs (see Notation~\ref{notation:relations}) so that it becomes applicable to graphs with two edge colors. Suppose between two distinct~$\LogicCzwo$-partition classes~$P$ and~$Q$ there are edges of two colors, say red and blue.
To define the relations~$P\squarerel Q$,~$P \doteq Q$, $P \ll Q$ we always consider the graph induced by the edge color class that contains fewer edges and ignore orientations. It is not difficult to see that if the number of red edges is equal to the number of blue edges then choosing either induced graph yields the same results. Note that with this convention the relations~$P\squarerel Q$,~$P \doteq Q$, $P \ll Q$ in particular imply that there are only at most two colors among the edges running between~$P$ and~$Q$.

\begin{corollary}\label{cor:relationship:between:classes:in:iden:pog}
Let~$G$ be a vertex/edge-colored undirected graph that is identified by~$\LogicCzwo$. If~$P$ and~$Q$ are distinct $\LogicCzwo$-partition classes of~$G$, then~$P\squarerel Q$,~$P \doteq Q$, $P \ll Q$,~$Q \ll P$ or~$P \K33 Q$.
\end{corollary}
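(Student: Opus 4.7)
The plan is to reduce the statement to the classifications of identified color-biregular bipartite graphs already established in Lemma~\ref{le:biregular} and Corollary~\ref{cor:gen:math:structures:bipartite:fixed:bipartition}. First, I would observe that since $\{P,Q\}$ is contained in the coarsest equitable partition of $G$, the bipartite subgraph $G_{P,Q}$ consisting of all edges of $G$ running between $P$ and $Q$ is color-biregular on the bipartition $(P,Q)$. Then I would argue that $G_{P,Q}$, viewed as a bipartite edge-colored graph whose vertices in $P$ are marked with one color and those in $Q$ with another, must itself be identified by $\LogicCzwo$. Indeed, if there were a non-isomorphic $\LogicCzwo$-equivalent substitute $H_{P,Q}$ on the same bipartition, then replacing $G_{P,Q}$ by $H_{P,Q}$ inside $G$ would not alter the atomic $\LogicCzwo$-types inside $P$, inside $Q$, or across any other pair of classes, and by $\LogicCzwo$-equivalence it would preserve all color degrees between $P$ and $Q$; the resulting graph would therefore be $\LogicCzwo$-equivalent but not isomorphic to $G$, contradicting the assumption that $G$ is identified.

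With this reduction in hand, I would split into cases by the number of edge colors appearing between $P$ and $Q$. If at most two colors appear, then interpreting the rarer color as ``edges'' and the more common color as ``non-edges'' turns $G_{P,Q}$ into a flipped biregular graph in the sense of Lemma~\ref{le:biregular}; that lemma then forces the biregularity parameters $(k,\ell)$ to satisfy $k \le 1$ or $\ell \le 1$, which translates exactly into one of $P \squarerel Q$, $P \doteq Q$, $P \ll Q$ or $Q \ll P$. If at least three edge colors appear, then Corollary~\ref{cor:gen:math:structures:bipartite:fixed:bipartition} applies (the vertex coloring separates $P$ from $Q$), and it says that the only identified case is $K_{3,3}$ with an edge coloring consisting of three disjoint perfect matchings; by the definition of $\K33$, this is precisely $P \K33 Q$.

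The only real obstacle lies in justifying the gluing step of the first paragraph, namely that replacing $G_{P,Q}$ by an $\LogicCzwo$-equivalent substitute on the same vertex set $P \cup Q$ yields a graph still $\LogicCzwo$-equivalent to $G$. This is where one should be careful about what ``$\LogicCzwo$-equivalent'' means in the colored bipartite setting: since $P$ and $Q$ are distinguishable $\LogicCzwo$-partition classes, any such substitute must in particular preserve, for every color, the out- and in-degrees from $P$ to $Q$ and from $Q$ to $P$; combined with equitability across all the other classes (which remains unchanged by the substitution), this is enough to ensure that color refinement produces the same stable coloring on the modified graph as on $G$, so the two graphs have the same complete invariant $\mathcal{I}_C^2$ and are thus $\LogicCzwo$-equivalent. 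Once this verification is in place, the rest of the argument is a direct application of the two previously established classification results.
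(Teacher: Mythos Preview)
Your proposal is correct and follows exactly the route the paper intends: the paper states this corollary without proof, relying on the preceding discussion that (i) for at least three edge colors between two distinguishable classes, Corollary~\ref{cor:gen:math:structures:bipartite:fixed:bipartition} forces the $K_{3,3}$ exception, and (ii) for at most two colors, interpreting the rarer color as edges reduces to the flipped setting of Lemma~\ref{le:biregular}. Your gluing argument---that a non-isomorphic $\LogicCzwo$-equivalent substitute on the fixed bipartition $(P,Q)$ yields a non-isomorphic $\LogicCzwo$-equivalent global graph---is the detail the paper leaves implicit, and your justification via preservation of the complete invariant $\mathcal{I}_C^2$ is appropriate (note that the ``furthermore'' clause of Lemma~\ref{le:biregular} is precisely what guarantees the substitute lives on the same bipartition).
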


Theorem~\ref{thm:edge:colored:color:regular:graph:identified} implies that each~$\LogicCzwo$-partition class of an~$\ecPOG$ that is identified must be one of~$9$ listed types. In an identified~$\ecPOG$, we call every $\LogicCzwo$-partition class which does not induce an undirected complete graph with only one edge color an \dfn{exception} (i.e., any class that does not fall under Item~\ref{item:undirected:complete} of Theorem~\ref{thm:edge:colored:color:regular:graph:identified}). Similarly, we call every pair of $\LogicCzwo$-partition classes~$P$ and~$Q$ for which~$P\K33 Q$ holds an \emph{exception}. 

Note that with this terminology Lemma~\ref{le:restr} says that for uncolored, undirected graphs each connected component of the skeleton can induce at most one exception in the original graph. Here, in accordance with Lemma~\ref{le:regular}, the possible exceptions are Item~\ref{item:perfect:matching} and the undirected option in Item~\ref{item:5:cycle} from Theorem~\ref{thm:edge:colored:color:regular:graph:identified}.
As we will see, a similar observation about exceptions only occurring once per connected component of the skeleton is true for edge-colored graphs.

Since in an edge-colored graph it is not clear what a flip is, we need to slightly adjust the approach from Section~\ref{sec:characterization:of:graphs}.
Let~$G$ be a vertex/edge-colored graph. As before, we define the vertices of the \emph{skeleton}~$S_G$ to be the~$\LogicCzwo$-partition classes of~$G$. Two distinct vertices~$P$,~$Q$ in~$S_G$ are adjacent in~$S_G$ if the corresponding classes in~$G$ do not satisfy~$P\squarerel Q$, i.e., whenever~$P$ and~$Q$ are not monochromatically connected.

Concerning the properties of identified structures, we obtain a theorem similar to Corollary~\ref{cor:summary:of:pros} for graphs.

\begin{theorem}\label{thm:classification:of:finite:ident:sturctures}
Let~$G$ be a vertex-colored~$\ecPOG$ (i.e., a vertex/edge-colored partially oriented graph). Then~$G$ is identified by~$\LogicCzwo$ if and only if the following hold: 
\begin{enumerate}
\item Each~$\LogicCzwo$-partition class induces a graph identified by~$\LogicCzwo$ (i.e., the induced graph is one of the graphs mentioned in Theorem~\ref{thm:edge:colored:color:regular:graph:identified}),\label{item:classes:are:exceptions}
\item for all~$\LogicCzwo$-partition classes~$P$ and~$Q$ we have~$P\squarerel Q$,~$P \doteq Q$, $P \ll Q$,~$Q \ll P$ or~$P \K33 Q$, \label{item:bipartite:exceptions}
\item the skeleton $S_G$ is a forest,  \label{item:skeleton:is:forest}
\item there is no path~$P_0,P_1,\ldots,P_t$ in~$S_G$ with~$P_0\ll P_1$ and~$P_{t-1} \gg P_{t}$, \label{item:facing:stars}
\item there is no path~$P_0,P_1,\ldots,P_t$ in~$S_G$ where~$P_0\ll P_1$ and~$P_{t}$ is an exception, and \label{item:stars:to:exception}
\item in every connected component of~$S_G$ there is at most one exception (i.e., either two $\LogicCzwo$-partition classes~$P$ and~$Q$ satisfying~$P\K33 Q$ or a $\LogicCzwo$-partition class among Items~\ref{item:perfect:matching}--\ref{item:1:fac:of:K6} described in Theorem~\ref{thm:edge:colored:color:regular:graph:identified}).\label{item:at:most:one:exep} \end{enumerate}
\end{theorem}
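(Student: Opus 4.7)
The plan is to follow the outline of the proof of Theorem~\ref{th:characterization}, adapting each step to the richer setting of $\ecPOG$s, and to prove the two directions of the biconditional separately. Most ingredients are already in place: Theorem~\ref{thm:edge:colored:color:regular:graph:identified} handles the intra-class picture, Corollary~\ref{cor:gen:math:structures:bipartite:fixed:bipartition} handles the bipartite picture with a fixed bipartition, and Section~\ref{sec_inversion} provides the inversion constructions needed to rearrange colored and directed edges while preserving $\LogicCzwo$-equivalence.

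For the forward direction we argue the contrapositive: whenever some condition \ref{item:classes:are:exceptions}--\ref{item:at:most:one:exep} fails, we produce a pair of non-isomorphic but $\LogicCzwo$-equivalent $\ecPOG$s that differ only in a small neighbourhood of the violation. If \ref{item:classes:are:exceptions} fails, we use Theorem~\ref{thm:edge:colored:color:regular:graph:identified} to swap the offending class for a $\LogicCzwo$-equivalent non-isomorphic one, reattaching it via the $\ecPOG$-biregular inversion of Subsection~\ref{subsec:finstruct:inv}. If \ref{item:bipartite:exceptions} fails, two classes $P,Q$ are connected color-biregularly by an arrangement that is none of $\squarerel$, $\doteq$, $\ll$, $\gg$, $\K33$, and Corollary~\ref{cor:gen:math:structures:bipartite:fixed:bipartition} provides a non-isomorphic replacement on the same bipartition. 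For \ref{item:skeleton:is:forest}--\ref{item:at:most:one:exep}, we mimic the three parts of Lemma~\ref{le:restr}, carrying out the same rerouting-of-neighbourhoods constructions one edge-colour at a time and using the adapted matching and doubly-circulant constructions to absorb directions and colour degrees.

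For the backward direction we proceed by induction on the number of vertices of the skeleton $S_G$. The base case, in which $S_G$ has a single vertex, is handled directly by \ref{item:classes:are:exceptions} via Theorem~\ref{thm:edge:colored:color:regular:graph:identified}. For the inductive step, pick a leaf $P$ of $S_G$ whose neighbour $Q$ is not an exception; conditions~\ref{item:stars:to:exception} and~\ref{item:at:most:one:exep} guarantee that such a leaf exists unless the whole component is itself an exception (handled separately using rigidity of the exception). By condition~\ref{item:bipartite:exceptions} the relation between $P$ and $Q$ is one of $\doteq$, $\ll$, $\gg$, $\K33$, and in each of these four cases the attachment of $P$ to the rest of the graph is combinatorially rigid once $G - P$ is fixed: a perfect matching with $Q$ for $\doteq$, a forest of stars for $\ll$ or $\gg$, and the unique edge-coloured $K_{3,3}$ for $\K33$. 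Removing $P$ preserves conditions~\ref{item:classes:are:exceptions}--\ref{item:at:most:one:exep}, the induction hypothesis identifies $G - P$, and rigidity of the attachment lifts the identification to $G$.

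The main obstacle is condition~\ref{item:at:most:one:exep}. On the one hand, the list of exceptions is heterogeneous: the single-class exceptions from Items~\ref{item:perfect:matching}--\ref{item:1:fac:of:K6} of Theorem~\ref{thm:edge:colored:color:regular:graph:identified} behave rather differently from the binary exception $P \K33 Q$, so the rerouting arguments of Lemma~\ref{le:restr} must be adapted case by case, paying attention to the directed variants (Items~\ref{item:directed:3:cycle},~\ref{item:4:cy:and:match}, and~\ref{item:5:cycle}) where in-degrees must match out-degrees. On the other hand, in the inductive step the leaf-removal argument has to be supplemented by an explicit treatment of skeleton components containing an exception, where the exception itself may have to be peeled off last. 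Here the identification of the exception as a whole (again Theorem~\ref{thm:edge:colored:color:regular:graph:identified} together with Corollary~\ref{cor:refinement:of:ident:graph:is:ident} applied to the coloured refinement induced by the neighbours) combined with the vertex-transitivity of its automorphism group on each $\LogicCzwo$-partition class lets the rigidity argument go through.
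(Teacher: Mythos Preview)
Your overall two-direction plan mirrors the paper's, but there is a genuine gap in the backward direction and you miss the paper's key shortcut in the forward direction.

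\textbf{Backward direction.} You choose a leaf $P$ whose \emph{neighbour} $Q$ is non-exceptional, and claim conditions~\ref{item:stars:to:exception} and~\ref{item:at:most:one:exep} guarantee such a leaf. They do not: take a star in $S_G$ whose centre $C$ is (say) a perfect-matching exception and whose leaves $L_1,\dots,L_r$ satisfy $C\doteq L_i$ or $C\ll L_i$. This satisfies all six conditions, yet every leaf has exceptional neighbour $C$. The correct choice---and the one the paper makes---is to require the \emph{leaf} $P$ itself to be non-exceptional. The rigidity of the attachment then comes from $G[P]$ being a monochromatic complete graph, so that any bijection of $P$ is an automorphism of $G[P]$; this is what lets you absorb an arbitrary matching or star system when re-attaching $P$. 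Your later appeal to ``vertex-transitivity of its automorphism group'' for the exception is also false: the $1$-factorized $K_6$ (Item~\ref{item:1:fac:of:K6}) is rigid. With the leaf choice reversed, the induction terminates in the exception, which is identified by Theorem~\ref{thm:edge:colored:color:regular:graph:identified} or Corollary~\ref{cor:gen:math:structures:bipartite:fixed:bipartition}, and no transitivity claim is needed.

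\textbf{Forward direction.} Your plan to re-run the constructions of Lemma~\ref{le:restr} colour-by-colour would eventually work for Parts~\ref{item:skeleton:is:forest}--\ref{item:stars:to:exception}, but the paper avoids this labour via a reduction you do not mention: if $G$ is identified and one merges two edge-colour classes (between or within $\LogicCzwo$-classes) or removes orientations, the result is still identified, provided each affected class remains on the list of Theorem~\ref{thm:edge:colored:color:regular:graph:identified}. This lets the paper collapse every class to a monochromatic complete graph and invoke Corollary~\ref{cor:summary:of:pros} wholesale for Parts~\ref{item:skeleton:is:forest}--\ref{item:stars:to:exception}. For Part~\ref{item:at:most:one:exep} the paper's arguments are not ``rerouting'' in the sense of Lemma~\ref{le:restr} at all: the merge trick reduces most pairs of exceptions to the matching/5-cycle case of Corollary~\ref{cor:summary:of:pros}, while the remaining directed cases (two directed $3$-cycles, or a $5$-tournament paired with another $5$-vertex exception) are handled by \emph{reversing} orientations in one class, and two $\K33$-type exceptions are separated by \emph{swapping} two of the three colours in one of them. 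Your proposal does not anticipate these moves, and ``adapting Lemma~\ref{le:restr} case by case'' will not produce them.
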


\begin{proof}
For the backward direction let~$G$ first be a vertex-colored~$\ecPOG$ that satisfies the given conditions. It suffices to show that for each connected component in the skeleton~$S_G$, the graph induced by its collection of~$\LogicCzwo$-partition classes is identified by~$\LogicCzwo$. For such an induced graph note that, if the entire graph itself is not already an exception, then it contains
a non-exceptional $\LogicCzwo$-partition class~$P$ such that for exactly one class~$Q\neq P$ we have~$P\ll Q$ or~$P\doteq Q$ and for all other $\LogicCzwo$-partition classes~$Q'\neq Q$ (including~$P$ itself) we have~$P\squarerel Q'$.
Such a graph is identified if and only if the graph obtained by removing~$P$ is identified.
 It thus suffices to show that all exceptions are identified. This is shown in Corollary~\ref{cor:gen:math:structures:bipartite:fixed:bipartition} and~Theorem~\ref{thm:edge:colored:color:regular:graph:identified}. 

For the forward direction let~$G$ be a vertex-colored~$\ecPOG$ identified by $\LogicCzwo$. Note that the vertex coloring of~$G$ only has the effect that the~$\LogicCzwo$-partition is refined. 
Corollary~\ref{cor:summary:of:pros} and Theorem~\ref{thm:edge:colored:color:regular:graph:identified} imply Part~\ref{item:classes:are:exceptions}. 
Corollary~\ref{cor:summary:of:pros} and
  Corollary~\ref{cor:gen:math:structures:bipartite:fixed:bipartition} imply Part~\ref{item:bipartite:exceptions}.

Note the following for two~$\LogicCzwo$-partition classes~$P$ and~$Q$ of an identified graph that satisfy~$P \K33 Q$. If we merge two of the three edge color classes into a single one to obtain a graph~$G'$, then the entire graph must still be identified by~$\LogicCzwo$. The reason is that if two classes~$P$ and~$Q$ of size~3 satisfy~$P \doteq Q$ then the larger edge color class can always be split so that~$P\K33 Q$.
Similarly, if we merge different edge color classes within a $\LogicCzwo$-partition class~$P$ so that the graph induced by~$P$ is still identified then the entire graph must still be identified by~$\LogicCzwo$. Likewise, if we remove directions from some color classes of edges within~$P$ so that the graph induced by~$P$ is identified then the entire graph must still be identified by~$\LogicCzwo$.

In particular, we can replace every $\LogicCzwo$-partition class by an undirected complete graph while maintaining that the graph is identified. 

With this in mind, Parts~\ref{item:skeleton:is:forest}--\ref{item:stars:to:exception} follow immediately from Corollary~\ref{cor:summary:of:pros}.

To prove Part~\ref{item:at:most:one:exep} we first ignore the~$\K33$ exceptions. 
Note that by Parts~\ref{item:facing:stars} and~\ref{item:stars:to:exception} exceptions of the other types that occur in the same component of~$S_G$ must have the same size. Moreover, these exceptions must be connected by a path~$P_0, P_1, \ldots, P_t$  such that we have $P_0 \doteq P_1 \doteq \ldots \doteq P_t$ in~$G$. 
Suppose in such a situation~$P_0$ and~$P_t$ each induce an exception.
Note that for all such exceptions apart from the tournaments on~$3$ and~$5$ vertices, it is possible to remove directions and merge colors in order to obtain a matching or a 5-cycle. It follows from Corollary~\ref{cor:gen:math:structures:bipartite:fixed:bipartition} that such exceptions cannot occur in the same component of the skeleton.
If~$P_0$ and~$P_t$ are directed $3$-cycles, then by reversing the directions in~$P_t$ we obtain a non-isomorphic graph. Similarly, if~$P_0$ is the regular tournament on~$5$ vertices and~$P_t$ is an arbitrary exception on~$5$ vertices with some directed edges then we obtain a non-isomorphic graph by reversing all edges in~$P_0$. If~$P_t$ is an exception on~$5$ vertices with no directed edges then~$P_t$ must be the exception that has two edge colors which each induce an undirected~$5$-cycle. Since~$P_0$ contains~$5$-cycles of the underlying undirected graph for which the edges form a directed cycle and~$5$-cycles for which the edges do not form a directed~$5$-cycles, by permuting the vertices in~$P_t$ we can construct a non-isomorphic graph (by either choosing to match a~$5$-cycle of~$P_t$ with a directed~$5$-cycle of~$P_0$ or with a~$5$-cycle whose edges do not form a consistent cyclic orientation).

Suppose finally that an exception of the type~$\K33$ occurs in a connected component of~$S_G$. Then by merging colors and by Parts~\ref{item:facing:stars} and~\ref{item:stars:to:exception} the only types of exceptions that can occur in the connected component are other exceptions of type~$\K33$ and directed~$3$-cycles.  
Observe that an exception of type~$P \K33 Q$ induces a cyclic order on both~$P$ and~$Q$. Thus, by swapping two edge colors of the~$\K33$-exception we obtain a non-isomorphic graph whenever there is a second exception.
\end{proof}

\begin{corollary}
Given a vertex-colored~$\ecPOG$ on~$n$ vertices we can decide  whether it is identified by~$\LogicCzwo$ in time~$O(n^2 \log n)$.
\end{corollary}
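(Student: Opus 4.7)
The plan is to verify the six conditions of Theorem~\ref{thm:classification:of:finite:ident:sturctures} algorithmically within the stated time bound. Since an~$\ecPOG$ on~$n$ vertices is essentially a complete structure, its representation has size~$\Theta(n^2)$, so the first task is to compute the~$\LogicCzwo$-partition. This can be done by color refinement on the underlying edge-colored partially oriented structure in time~$O((n^2 + n)\log n) = O(n^2\log n)$, which will dominate the overall running time.

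Once the partition~$\{P_1,\ldots,P_t\}$ is available, all remaining checks can be carried out in time~$O(n^2)$. For Part~\ref{item:classes:are:exceptions}, for every class~$P_i$ we inspect the induced edge-coloring (and orientations) on~$P_i$ and test whether it matches one of the nine explicit types listed in Theorem~\ref{thm:edge:colored:color:regular:graph:identified}. Since each such type has a fixed small universe (at most~$6$ vertices) or is characterized by having a single edge color or a perfect matching together with its complement, the test for each class can be performed in time proportional to the number of edges inside~$P_i$, totaling~$O(n^2)$. For Part~\ref{item:bipartite:exceptions}, for each pair~$(P_i, P_j)$ we examine the bipartite graph of edges between them, count color degrees, and classify the pair as one of~$\squarerel$,~$\doteq$,~$\ll$,~$\gg$ or~$\K33$, rejecting the input otherwise. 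Summed over all pairs, this takes time proportional to the number of edges between classes, which is~$O(n^2)$.

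With the relations between classes determined, we construct the skeleton~$S_G$, which has at most~$t \leq n$ vertices and at most~$\binom{t}{2}$ edges. Testing that~$S_G$ is a forest (Part~\ref{item:skeleton:is:forest}) is a standard connectivity computation in time~$O(t^2) = O(n^2)$. For Parts~\ref{item:facing:stars} and~\ref{item:stars:to:exception}, within each tree component we mark every class~$P$ that is the ``source'' of a~$\ll$-edge and every class~$Q$ that is the ``target''; a traversal from each~$\ll$-target then checks whether along any continuing path in~$S_G$ one encounters a further~$\gg$-edge or an exception. This amounts to a linear-time tree traversal per component and is~$O(n^2)$ overall. For Part~\ref{item:at:most:one:exep}, we simply count, within each connected component of~$S_G$, the number of classes that induce an exception in the sense of Theorem~\ref{thm:edge:colored:color:regular:graph:identified} plus the number of pairs~$(P,Q)$ with~$P \K33 Q$, and reject if this count exceeds one.

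No step really poses a technical obstacle: the only potential subtlety is the correct bookkeeping for the combined notion of exception (both an exceptional induced class and a~$\K33$-pair count as exceptions, as noted in Part~\ref{item:at:most:one:exep}), and the conventions for classifying two-color bipartite graphs described just before Corollary~\ref{cor:relationship:between:classes:in:iden:pog}. Since color refinement is the bottleneck and all other checks fit in~$O(n^2)$, the total running time is~$O(n^2 \log n)$, as claimed.
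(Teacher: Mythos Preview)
Your proposal is correct and follows essentially the same approach as the paper: run color refinement in~$O(n^2\log n)$ as the bottleneck step, then verify each condition of Theorem~\ref{thm:classification:of:finite:ident:sturctures} in~$O(n^2)$. The only cosmetic difference is that the paper replaces the explicit path checks for Parts~\ref{item:facing:stars} and~\ref{item:stars:to:exception} by a single monotonicity test on the class sizes within each tree component (starting from a smallest class and checking that every exception is smallest in its component), which is equivalent but slightly more compact.
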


\begin{proof}
For this we compute the coarsest equitable partition in time~$O(n^2 \log n)$ using the color refinement procedure. We then compute the skeleton of the obtained graph. For each~$\LogicCzwo$-partition class we check whether the induced graph appears among the exceptions from~Theorem~\ref{thm:edge:colored:color:regular:graph:identified}. Since the non-trivial exceptions have at most~6 vertices, this can be performed in linear time. We also check whether each pair of distinct~$\LogicCzwo$-partition classes satisfies one of the relations required by Corollary~\ref{cor:relationship:between:classes:in:iden:pog}. We mark each vertex or edge of the skeleton that corresponds to an exception. Finally, we check whether the skeleton is a tree, whether each connected component has at most one exception, and whether the sizes of the~$\LogicCzwo$-partition classes in each connected component monotonically increase when starting from a smallest class (for example using depth first search). We also check that every exception is a smallest class of its connected component.
\end{proof}

Since we can reduce the problem of deciding whether a finite relational structure is identified by~$\LogicCzwo$ to the problem of deciding whether an~$\ecPOG$ is identified by~$\LogicCzwo$ we obtain the following corollary.

\begin{corollary}\label{cor:runtime:structure} 
Given a finite relational structure~$\mathfrak{A}$ with a universe of size~$n$ over a fixed signature we can decide in time~$O(n^2 \log n)$ whether it is identified by~$\LogicCzwo$. 
\end{corollary}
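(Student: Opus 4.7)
The plan is to reduce the problem to the corresponding one for $\ecPOG$s, which is handled by the previous corollary. Given $\mathfrak{A}$ with universe of size $n$ and fixed signature $\sigma = (R_1,\ldots,R_\ell)$, I would first construct $\mathcal{G} \coloneqq \ecPOG(\mathfrak{A}|_2)$. For each ordered pair $(v,w)$ of distinct universe elements, the color $c(v,w)$ is determined by, for each $R_i$ of arity $r_i$, deciding for each of the $2^{r_i}$ tuples over $\{v,w\}$ whether $\mathfrak{A}$ satisfies $R_i$ on that tuple. Since $\sigma$ is fixed, both $\ell$ and all $r_i$ are constants, so this costs $O(1)$ per pair and $O(n^2)$ overall; the canonical linear order on the finite set of possible color values can likewise be precomputed in constant time, so $\mathcal{G}$ is assembled in $O(n^2)$ time.

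Next, I would invoke the observation from Subsection~\ref{subsec:rel:struct:and:pocs:prelims} that two finite relational structures are $\LogicCzwo$-equivalent if and only if their associated $\ecPOG$s are. This yields the announced reduction: $\mathfrak{A}$ is identified by $\LogicCzwo$ if and only if $\mathcal{G}$ is. Applying the previous corollary to $\mathcal{G}$ then decides identification in time $O(n^2\log n)$, so together with the $O(n^2)$ preprocessing the overall running time is $O(n^2\log n)$.

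The only step requiring care is the first one: the $O(n^2)$ bound for constructing $\mathcal{G}$ relies crucially on the signature being fixed, so that $\ell$ and the arities $r_i$ contribute only constants $2^{r_i}$ and not factors depending on $n$. Without that, reading or translating the higher-arity information of $\mathfrak{A}$ could dominate the cost. Once $\mathcal{G}$ is in hand, everything else is a direct application of the preceding corollary together with the previously established equivalence between $\LogicCzwo$-equivalence of structures and of their $\ecPOG$s.
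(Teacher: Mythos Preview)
Your reduction contains a genuine gap. From the equivalence ``$\mathfrak{A}\equiv_{\LogicCzwo}\mathfrak{B}$ if and only if $\ecPOG(\mathfrak{A})\equiv_{\LogicCzwo}\ecPOG(\mathfrak{B})$'' you cannot conclude that $\mathfrak{A}$ is identified if and only if $\mathcal{G}=\ecPOG(\mathfrak{A}|_2)$ is identified. The implication from right to left fails: if $\mathcal{G}$ is identified and $\mathfrak{B}\equiv_{\LogicCzwo}\mathfrak{A}$, you only obtain $\ecPOG(\mathfrak{B})\cong\ecPOG(\mathfrak{A})$, i.e., $\mathfrak{B}|_2\cong\mathfrak{A}|_2$. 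This does \emph{not} force $\mathfrak{B}\cong\mathfrak{A}$ when the signature has a relation of arity at least~$3$.

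Concretely, suppose $\sigma$ contains a relation $R$ of arity $r\geq 3$ and $n\geq 3$. Pick three distinct elements $a,b,c$ of the universe and let $\mathfrak{B}$ differ from $\mathfrak{A}$ only by toggling membership of the tuple $(a,b,\ldots,b,c)$ in $R$. Since no two-variable atomic formula can witness a tuple with three distinct entries, $\ecPOG(\mathfrak{B})=\ecPOG(\mathfrak{A})$ and hence $\mathfrak{B}\equiv_{\LogicCzwo}\mathfrak{A}$, yet $\mathfrak{B}\not\cong\mathfrak{A}$. Thus $\mathfrak{A}$ is \emph{never} identified in this situation, even though $\mathcal{G}$ may well be; your algorithm would then wrongly report ``identified''.

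The paper's proof inserts precisely this missing step: it first checks whether $n\geq 3$ and the (fixed) signature contains a relation of arity at least~$3$, and if so outputs ``not identified'' immediately. Only after this check---so that either $n\leq 2$ or all relations have arity at most~$2$---does the passage to $\ecPOG(\mathfrak{A})$ become an isomorphism-preserving encoding, and then your argument goes through verbatim. The fix is a constant-time test on $\sigma$ and $n$, so the running time is unaffected, but it is logically indispensable.
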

\begin{proof}
Since the logic~$\LogicCzwo$ can only use two variables, if~$n \geq 3$ and~$\mathfrak{A}$ has a relation of arity at least~3, then it is not identified. For a fixed signature and a given structure~$\mathfrak{A}$ over this signature, the graph~$\ecPOG(\mathfrak{A})$ can be constructed in~$O(n^2)$ time.
It suffices now to observe that a relational structure~$\mathfrak{A}$ with relations of arity at most~$2$ is identified if and only if~$\ecPOG(\mathfrak{A})$ is identified.
\end{proof}

Using the classification we also obtain an extension of Corollary~\ref{cor:refinement:of:ident:graph:is:ident} to~$\ecPOG$s and, more generally, to finite relational structures.

\begin{corollary}
	Let~$G$ be an~$\ecPOG$ and~$\chi$ a vertex coloring of~$G$ such that~$(G,\chi)$ is identified by~$\LogicCzwo$
	and let~$\chi'$ be a vertex coloring of~$G$ 
		which induces a finer partition on~$V(G)$ than~$\chi$ does.
	Then~$(G,\chi')$ is also identified by~$\LogicCzwo$.
\end{corollary}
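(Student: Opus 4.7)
The plan is to invoke the classification in Theorem~\ref{thm:classification:of:finite:ident:sturctures}. Without loss of generality we assume $\chi$ and $\chi'$ are equitable, by replacing each with the coloring induced by the corresponding $\LogicCzwo$-partition. Denote these partitions by $\Pi$ and $\Pi'$; then $\Pi'$ refines $\Pi$. The goal is to verify that $(G,\chi')$ satisfies each of the six conditions of Theorem~\ref{thm:classification:of:finite:ident:sturctures}.

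For Conditions~\ref{item:classes:are:exceptions} and~\ref{item:bipartite:exceptions} I would proceed by case analysis on the classes of $\Pi$. For each $P \in \Pi$, the restriction of $\Pi'$ to $P$ is an equitable refinement of the color-regular ecPOG $G[P]$, which by hypothesis is one of the nine types listed in Theorem~\ref{thm:edge:colored:color:regular:graph:identified}. The generic case (Item~\ref{item:undirected:complete}, a complete monochromatic class) admits arbitrary refinements: the induced subgraphs are again complete monochromatic and the inter-subclass bipartites are complete bipartite monochromatic, which in the ecPOG convention (with non-edges counted as the minority color) yield the relation $\squarerel$. For the exceptional types (Items~\ref{item:perfect:matching}--\ref{item:1:fac:of:K6}), their strong symmetry severely restricts the admissible equitable refinements, and one verifies case by case that each such refinement yields identified sub-structures with inter-subclass relations drawn from $\{\squarerel,\doteq,\ll,\gg,\K33\}$. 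A parallel analysis, exploiting equitability to force mirrored splittings of neighboring classes, handles refinements of the inter-class structure between distinct $P,Q \in \Pi$ related by one of $\doteq,\ll,\gg,\K33$: for instance, splitting $P \doteq Q$ forces $Q$ to split along the matching, producing parallel relations $P'_i \doteq Q'_{\sigma(i)}$ for a suitable bijection $\sigma$; the cases $\ll$ and $\K33$ behave analogously.

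For Conditions~\ref{item:skeleton:is:forest}--\ref{item:at:most:one:exep} I would rely on the observation above that equitable refinement of any skeleton edge produces only parallel sub-edges in $S_{(G,\chi')}$, so no new cycle is created and the skeleton remains a forest (Condition~\ref{item:skeleton:is:forest}). The monotonicity conditions (Conditions~\ref{item:facing:stars} and~\ref{item:stars:to:exception}) persist because refinement cannot reverse the direction of a biregular relation nor introduce a new exception at the far end of a $\ll$-path. Condition~\ref{item:at:most:one:exep} persists because the single exception in each original connected component either survives as a single exception in one sub-component, or decomposes into smaller substructures which are $\squarerel$-separated from one another and hence distributed among distinct sub-components of the new skeleton.

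The hardest part will be the exhaustive case analysis for the rigid exceptional color-regular types of Theorem~\ref{thm:edge:colored:color:regular:graph:identified}, in particular the $1$-factorizations of $K_6$ (Items~\ref{item:perfect:matching} and~\ref{item:1:fac:of:K6}) and the regular tournament on $5$ vertices, where the only non-trivial equitable refinement is the full one into singletons. In such cases one must verify explicitly that the resulting all-singleton substructure is consistent with the classification, in particular that every pair of singletons gives the relation $\squarerel$ in the ecPOG sense and that no forbidden skeleton configuration (such as a path violating Condition~\ref{item:stars:to:exception}, or a second exception appearing in the same component) arises.
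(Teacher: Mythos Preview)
Your overall strategy—verify the six conditions of Theorem~\ref{thm:classification:of:finite:ident:sturctures} under refinement, with a case analysis on the exceptional types of Theorem~\ref{thm:edge:colored:color:regular:graph:identified}—matches the paper's. The paper additionally reduces by induction to refining a single $\LogicCzwo$-class at a time; this is a minor simplification you could adopt.

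The substantive difference is in how Conditions~\ref{item:bipartite:exceptions}--\ref{item:stars:to:exception} are handled. Rather than argue them directly, the paper observes that these conditions are insensitive to the edge colors inside a class and to the distinction between $\doteq$ and $\K33$: if one replaces every $\LogicCzwo$-class by a monochromatic clique, the resulting \emph{uncolored} graph is still identified, and by Corollary~\ref{cor:refinement:of:ident:graph:is:ident} every vertex-colored refinement of it is identified as well. Corollary~\ref{cor:summary:of:pros} then yields Conditions~\ref{item:bipartite:exceptions}--\ref{item:stars:to:exception} for free. Only Conditions~\ref{item:classes:are:exceptions} and~\ref{item:at:most:one:exep}, which genuinely involve the ecPOG-specific exceptions, require the explicit case analysis you describe.

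Your direct route has a gap: the claim that refining a skeleton edge produces ``only parallel sub-edges'' is false. Take $P\ll Q$ with $P=\{p_1,p_2\}$, $Q=\{q_1,q_2,q_3,q_4\}$, minority-color edges $\{p_1,q_1\},\{p_1,q_2\},\{p_2,q_3\},\{p_2,q_4\}$. Coloring $q_1,q_3$ red and $q_2,q_4$ blue is equitable and splits $Q$ into $Q_1,Q_2$ without splitting $P$, giving $P\doteq Q_1$ and $P\doteq Q_2$—a star in the new skeleton, not parallel edges. The forest property does survive, but establishing it (and Conditions~\ref{item:facing:stars},~\ref{item:stars:to:exception}) by a direct structural argument is more delicate than your sketch indicates; the paper's reduction to the uncolored case sidesteps the issue entirely.
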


\begin{proof}
By induction, it suffices to consider a vertex coloring that refines exactly one~$\LogicCzwo$-partition class of~$G$.
First note that every vertex-colored version of an undirected complete graph has no exception. Furthermore, for a graph that is a perfect matching, if we consider the $\LogicCzwo$-partition classes obtained after coloring the vertices arbitrarily, then each component of the skeleton induces at most one exception (which must be a matching).

Next note that every vertex-colored version of a graph described in Items~\ref{item:directed:3:cycle}--\ref{item:1:fac:of:K6} of Theorem~\ref{thm:edge:colored:color:regular:graph:identified} is identified and has at most one exception. 
This is easy to see for all described graphs up to~$5$ vertices.
For the mentioned graphs on~$6$ vertices consider first the case of~$5$ disjoint matchings. It is not difficult to see that this graph only has two equitable partitions, the discrete partition and the trivial partition. Consider now the graph on~$6$ vertices that consists of two perfect matchings and a~$3$-regular graph. Every equitable partition
with a singleton for this graph is discrete. 
There are thus two types of equitable partitions to consider, namely equitable partitions with three classes of size~$2$ and equitable partitions with two classes of size~$3$. In the latter case, the graph induces an edge-colored version of~$K_{3,3}$ and in the former case the graph has at least two classes~$P_1$ and~$P_2$ such that~$P_1 \squarerel P_2$ (since each of the two matchings can only appear in one biregular connection between two classes). In any case, the graph is identified and has at most one exception. This proves that Properties \ref{item:classes:are:exceptions} and~\ref{item:at:most:one:exep} of Theorem~\ref{thm:classification:of:finite:ident:sturctures} are preserved.

To show globally that a color-refined version of an identified~$\ecPOG$ is also identified, note that also all other properties required by~Theorem~\ref{thm:classification:of:finite:ident:sturctures} are preserved, since (by Corollaries~\ref{cor:refinement:of:ident:graph:is:ident} and~\ref{cor:summary:of:pros}) these properties are preserved if all $\LogicCzwo$-partition classes are replaced by undirected complete graphs.
\end{proof}

\begin{corollary}
If a finite relational structure~$\mathfrak{A}$ is identified by~$\LogicCzwo$, then every finite relational structure obtained from~$\mathfrak{A}$ by adding unary relations is also identified by~$\LogicCzwo$.
\end{corollary}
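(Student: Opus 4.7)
The plan is to reduce the statement to the previous corollary about vertex-colored $\ecPOG$s. First I would observe that adding a unary relation $U$ to a relational structure $\mathfrak{A}$ has the effect of partitioning the universe into those elements that lie in $U$ and those that do not. Iterating this observation, adding any finite collection of unary relations to $\mathfrak{A}$ yields a structure $\mathfrak{A}'$ whose $\LogicCzwo$-information over the binary reduct is precisely that of $\left.\mathfrak{A}\right|_2$ together with a vertex coloring of its universe.

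Next I would pass to the $\ecPOG$ representation. Recall from Subsection~\ref{subsec:rel:struct:and:pocs:prelims} that two structures are $\LogicCzwo$-equivalent iff their $\ecPOG$s built from the binary reducts are $\LogicCzwo$-equivalent, and the vertex/edge-colored $\ecPOG$ of $\mathfrak{A}'$ is obtained from $\ecPOG(\left.\mathfrak{A}\right|_2)$ simply by imposing the vertex coloring $\chi'$ corresponding to the new unary relations. Since $\LogicCzwo$ cannot express anything about tuples involving relations of arity $\geq 3$ beyond what it sees on pairs, if $\mathfrak{A}$ has any relation of arity at least $3$ and $|A| \geq 3$ then $\mathfrak{A}$ would not be identified to begin with (compare the argument in Corollary~\ref{cor:runtime:structure}), so we may assume the signature is binary after this reduction.

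Now I would apply the preceding corollary. Since $\mathfrak{A}$ is identified by $\LogicCzwo$, the $\ecPOG(\left.\mathfrak{A}\right|_2)$, equipped with the vertex coloring $\chi$ inherited from $\mathfrak{A}$, is identified by $\LogicCzwo$. The new coloring $\chi'$ induced by the additional unary relations refines $\chi$ (each new unary relation only subdivides existing color classes). By the previous corollary, $(\ecPOG(\left.\mathfrak{A}\right|_2),\chi')$ is identified by $\LogicCzwo$. Translating back, this means $\mathfrak{A}'$ is identified by $\LogicCzwo$, which is what we wanted.

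The only non-routine step is the justification that adding unary relations corresponds exactly to refining the vertex coloring of the $\ecPOG$, and that this refinement is compatible with the $\LogicCzwo$-types; but this follows directly from the construction of $\ecPOG(\mathfrak{A})$ and the observation that $\LogicCzwo$ sees unary predicates as atomic vertex colors. No new combinatorial argument is required beyond invoking the preceding corollary.
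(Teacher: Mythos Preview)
Your proposal is correct and follows essentially the same route as the paper: reduce to the at-most-binary case (since a structure with a ternary or higher relation on a universe of size $\geq 3$ is never identified), pass to the $\ecPOG$ representation, observe that adding unary relations amounts to refining the vertex coloring, and invoke the preceding corollary. The paper's proof is merely a terser version of exactly this argument.
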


\begin{proof}
Since~$\mathfrak{A}$ is identified, it has at most~$2$ elements in its universe or there is no relation of arity at least~$3$. Recall that under this circumstance,~$\mathfrak{A}$ is identified if and only if~$\ecPOG(\mathfrak{A})$ is identified. The addition of unary relations corresponds to a refinement of the vertex coloring of~$\ecPOG(\mathfrak{A})$, so the corollary follows from the previous one.
\end{proof}

	%!TEX root = main.tex
\section{Higher dimensions}\label{sect_highdim}

One may wonder whether the results presented for~$\LogicCzwo$ throughout this paper can be extended to the logics~$\LogicCK$. Chang~\cite{chang} and Hoffman~\cite{hoffman1960} showed that triangular graphs on sufficiently many vertices are identified by~$\LogicCdrei$. They form an infinite family of strongly regular graphs, which shows that for~$k>2$ any classification result for graphs identified by~$\LogicCK$ must include an infinite number of non-trivial graphs. This already indicates that the situation for~$k=2$ is special.
We first show that statements analogous to Corollaries~\ref{cor_orbit} and~$\ref{cor:refinement:of:ident:graph:is:ident}$ do not hold for higher dimensions.
We use the construction from~\cite{DBLP:journals/combinatorica/CaiFI92} which provides us with the following fact.
\begin{fact}[Cai, F\"urer, Immerman~\cite{DBLP:journals/combinatorica/CaiFI92}] \label{fact:CFI}
For every~$k>2$, there are non-isomorphic~$3$-regular graphs~$G$ and~$G'$ of size~$O(k)$ on the same vertex set~$V$, which contains vertices~$a$,~$b$,~$a'$,~$b'$ with the following properties: The graphs are not distinguishable by~$\LogicCK$ and the edge sets satisfy~$E(G)\setminus E(G') = \{\{a,a'\},\{b,b'\}\}$
and~$E(G')\setminus E(G) = \{\{a,b'\},\{b,a'\}\}$. Moreover, we can guarantee that every graph indistinguishable from~$G$ and~$G'$ by~$\LogicCdrei$ is isomorphic to~$G$ or~$G'$.
\end{fact}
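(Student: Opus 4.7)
The plan is to invoke the Cai-F\"urer-Immerman (CFI) construction on a carefully chosen base graph~$H$. Take $H$ to be a $3$-regular connected graph on $\Theta(k)$ vertices whose treewidth (or, equivalently, separation number with respect to the bijective $k$-pebble game) exceeds~$k$, and which is moreover identified by~$\LogicCdrei$. Replace each vertex~$v$ of~$H$ by an \emph{inner gadget} indexed by the even-size subsets of its three incident edges, and introduce two \emph{port} vertices on each edge~$e$ of~$H$ at each endpoint, with the inner vertex corresponding to subset~$S$ adjacent to the port of~$e$ in a way that depends on whether $e\in S$. For each edge $e=\{u,v\}$ of~$H$, connect the two $u$-side ports of~$e$ to the two $v$-side ports by one of two possible matchings (\emph{straight} or \emph{twisted}). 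Let $G$ arise from choosing the straight matching at every edge, fix a designated edge $e^{*}=\{u^{*},v^{*}\}$, and let $G'$ arise by twisting only at~$e^{*}$. Taking $a,a'$ to be the two ports of~$e^{*}$ on the $u^{*}$-side and $b,b'$ the two on the $v^{*}$-side, the symmetric difference $E(G)\symdiff E(G')$ is exactly the four edges demanded in the statement.

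Both graphs are $3$-regular and of order $O(k)$. Non-isomorphism follows from CFI's cycle-space/parity argument: the parity of ``twisted'' edges of~$H$ is an isomorphism invariant modulo the cycle space of~$H$, and a single twist flips it. Indistinguishability by~$\LogicCK$ is proved via the bijective $k$-pebble game: the Duplicator responds by ``floating'' the twist along edges of~$H$ that avoid the pebbled area, which is possible precisely because the separation of~$H$ exceeds~$k$.

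The main obstacle is the final clause, namely the uniqueness claim that $G$ and $G'$ exhaust the $\LogicCdrei$-equivalence class up to isomorphism. This is where $\LogicCdrei$-identifiability of the base graph~$H$ enters essentially. Any graph $\widetilde G$ that is $\LogicCdrei$-equivalent to~$G$ must exhibit the same local gadget structure, so its $\LogicCdrei$-coloring exposes a gadget partition; since~$H$ itself is identified by~$\LogicCdrei$, the incidence structure of~$H$ is recoverable from this coloring, which forces~$\widetilde G$ to be a CFI graph over~$H$ for some twist set $T\subseteq E(H)$. The cycle-space argument then collapses all such choices of~$T$ into exactly two isomorphism classes, namely those of $G$ and~$G'$. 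Arranging a single $3$-regular graph~$H$ of size~$O(k)$ which simultaneously has sufficient separation and is $\LogicCdrei$-identified is the delicate engineering step, and is precisely what the cited CFI construction delivers.
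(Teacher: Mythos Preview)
Your outline tracks the paper's sketch closely: both invoke the CFI construction over a suitable base graph, both use the parity-of-twists argument for non-isomorphism, and both use the bijective pebble game for $\LogicCK$-indistinguishability. Two points, however, deserve correction.

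First, a small labeling slip: with your convention that $a,a'$ are the two ports on the $u^{*}$-side and $b,b'$ the two on the $v^{*}$-side, the straight matching would be $\{a,b\},\{a',b'\}$ and the twisted one $\{a,b'\},\{a',b\}$, which does not match the symmetric difference required by the statement. The intended convention (and the paper's) is that $a,b$ sit on one side of $e^{*}$ and $a',b'$ on the other.

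Second, and more substantively, your treatment of the ``Moreover'' clause diverges from the paper and from CFI. The paper does \emph{not} require the base graph $H$ to be identified by $\LogicCdrei$. Instead it follows CFI in working with a vertex-\emph{colored} construction in which vertices of different gadgets receive different colors; each individual gadget is then identified by $\LogicCdrei$ (checked via the $8$-cycles formed by the middle vertices and any two pairs of outer vertices), so any $\LogicCdrei$-equivalent graph must decompose into the same colored gadgets over the same base and can differ only in its twist set. The colors are afterwards eliminated by the standard colored-to-uncolored reduction, costing $O(k)$ extra vertices. Your alternative route---asking that the uncolored $3$-regular base graph $H$ itself be $\LogicCdrei$-identified---is not what the cited CFI paper delivers, contrary to your final sentence. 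It could in principle be made to work (for instance, random $3$-regular graphs are almost surely both expanders and $\LogicCdrei$-identified), but then your recovery step ``the $\LogicCdrei$-coloring exposes a gadget partition, and from it one reads off $H$'' needs a real argument: without the per-gadget colors, the gadgets sitting over different vertices of a $3$-regular $H$ need not be distinguished by $\LogicCdrei$, and $\LogicCdrei$-identifiability of $H$ alone does not obviously lift to $\LogicCdrei$-recoverability of $H$ from the CFI graph.
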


The graphs~$G$ and~$G'$ referred to in the fact are obtained by replacing every vertex~$v$ of a suitable connected graph by a gadget that only depends on the degree of~$v$ (see~\cite{DBLP:journals/combinatorica/CaiFI92} for details). Each original edge~$\{v,v'\}$ corresponds to two disjoint edges in the new graph, connecting a pair of specified vertices~$\{a_v, b_v\}$ in the gadget for~$v$ to a pair of specified vertices~$\{a_{v'}, b_{v'}\}$ in the gadget for~$v'$. If the two disjoint edges are~$\{a_v,a_{v'}\}$ and~$\{b_v, b_{v'}\}$, we call them \emph{parallel}. Otherwise, if they are~$\{a_v,b_{v'}\}$ and~$\{b_v, a_{v'}\}$, we call them \emph{twisted} and the pair of them a \emph{twist}.

\begin{figure}[H]
	\centering \includegraphics[width=0.4\textwidth]{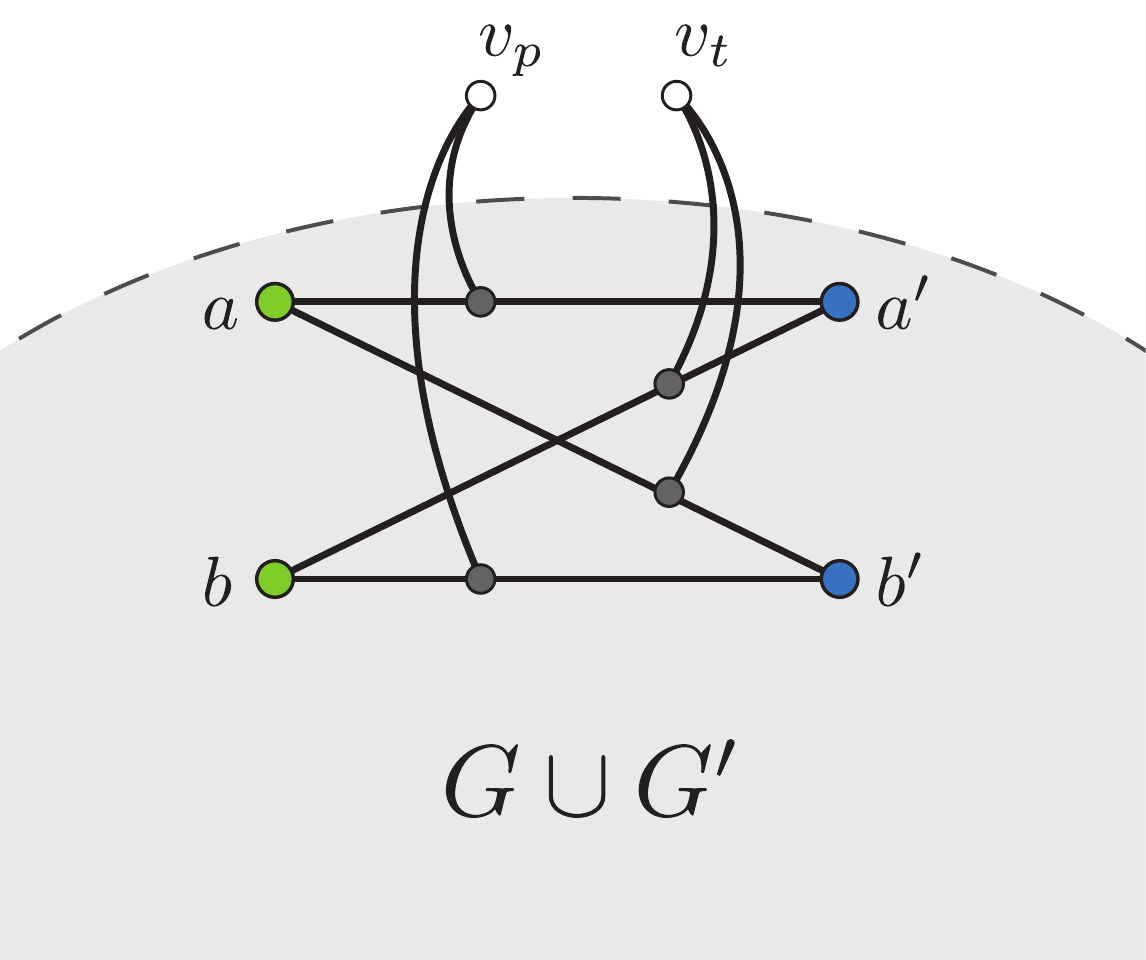}
	\caption{the construction used to show that higher-dimensional analogues of Corollaries~\ref{cor_orbit} and~\ref{cor:refinement:of:ident:graph:is:ident} do not hold} 
\end{figure}

Each gadget consists of so-called middle vertices and~$d$ pairs of outer vertices, where $d$ is the degree of the corresponding vertex in the original graph. Cai, F\"urer and Immerman show the following. When choosing each pair of edges between gadgets of adjacent vertices to be arbitrarily either parallel or twisted, we can only obtain one of two graphs up to isomorphism. More precisely, the isomorphism type of the resulting graph only depends on the parity of the number of twists in it. 

In~\cite{DBLP:journals/combinatorica/CaiFI92}, first colored graphs are constructed. By a standard transformation which reduces isomorphism of colored graphs to isomorphism of uncolored graphs, one can remove the colors in the graph. Since there are suitable transformations that do not add more than~$O(k)$ extra vertices, we can assume the graphs~$G$ and~$G'$ appearing in Fact \ref{fact:CFI} to be uncolored, still preserving the property that for a fixed input graph, all possible resulting graphs with the same parity of the number of twists are isomorphic. 
We will continue to work with colored graphs, knowing that the same reduction techniques can be applied.

Every gadget of the construction is, when considered by itself, identified by~$\LogicCdrei$. This can easily be verified by using the fact that the pairs of outer vertices have distinct colors and every two of them form a simple~$8$-cycle together with the middle vertices. It is straightforward to conclude that a graph that is indistinguishable from~$G$ or~$G'$ may only differ from them by the number of twists it has. Depending on that parity, the graph is either isomorphic to~$G$ or to~$G'$. 

\begin{theorem}
For every~$k > 2$, there is a graph~$H$ of size~$O(k)$ identified by~$\LogicCdrei$ for which the~$\LogicCK$-partition is strictly coarser than the orbit partition. Moreover, not all vertex-colored versions of~$H$ are identified by~$\LogicCK$.
\end{theorem}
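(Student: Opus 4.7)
My plan is to construct $H$ by modifying the CFI graphs from Fact~\ref{fact:CFI}. For each $k>2$, let $G$ and $G'$ be the non-isomorphic $\LogicCK$-equivalent pair built over a rigid base graph $B$ of size $\Theta(k)$. Since the $\LogicCdrei$-equivalence class of $G$ consists only of $G$ and $G'$ up to isomorphism, I would attach to $G$ a small rigid \emph{distinguisher} $D$ of constant size, anchored at the four specified vertices $a,b,a',b'$. The distinguisher is designed so that its local adjacencies at the anchoring points are incompatible with the twisted configuration $\{a,b'\},\{b,a'\}$ of $G'$, while leaving $|V(H)|=O(k)$. The graph $H$ is then identified by $\LogicCdrei$: any $H^{*}$ with $H^{*}\equiv_{\LogicCdrei}H$ contains a local copy of $D$ at the analogous vertices, so $H^{*}\setminus D$ is $\LogicCdrei$-equivalent to $G$; by Fact~\ref{fact:CFI} this forces $H^{*}\setminus D\cong G$ or $\cong G'$, and the rigidity of $D$ excludes the latter.

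For the orbit-partition claim, the key observation is that since $B$ is rigid, every automorphism of $G$ (and hence of $H$, with $D$ being rigidly attached) arises only from gadget-internal symmetries compatible with the global twist pattern; in particular, vertices in gadgets corresponding to distinct base vertices of $B$ lie in distinct orbits of $H$. On the other hand, $\LogicCK$ with $k$ linear in $|H|$ still cannot distinguish $G$ from $G'$, and since $D$ contributes only $O(1)$ many $\LogicCK$-coloring classes, the $\LogicCK$-coloring of the CFI part of $H$ remains as coarse as in plain $G$. Hence $\LogicCK$ fails to separate analogous vertices across different gadgets, so the $\LogicCK$-partition of $H$ is strictly coarser than its orbit partition.

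For the vertex-coloring claim, I would introduce a coloring $\chi$ that assigns a fresh color to specific vertices of $D$ (or of its anchoring points) in a way that \emph{neutralizes} the rigidifying effect of $D$: after recoloring, the local $\LogicCK$-type at the anchor is equally compatible with the parallel and the twisted attachment. The colored graph $(H,\chi)$ is then $\LogicCK$-equivalent to the analogously colored graph built from $G'$ and $D$, which is non-isomorphic to $(H,\chi)$; thus the vertex-colored version is not identified by $\LogicCK$. The main obstacle is engineering $D$ with these competing requirements: rigid and locally incompatible with $G'$ (so that uncolored $H$ is $\LogicCdrei$-identified), yet susceptible to full neutralization by a single additional color class (so that a coloring $\chi$ can restore the $G\leftrightarrow G'$ ambiguity). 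I expect this to require a careful choice of $D$ that exploits the precise degree and adjacency differences of the four vertices $a,b,a',b'$ between the parallel and twisted configurations.
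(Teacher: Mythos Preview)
Your proposal has a genuine gap in the central step. The claim that ``$\LogicCK$ fails to separate analogous vertices across different gadgets'' rests on a misconception about the CFI construction. When the base graph $B$ is rigid, vertices in different gadgets \emph{are} distinguished already by $\LogicCdrei$ (in the paper's version they even carry different colors); this is precisely what makes the $\LogicCdrei$-equivalence class of $G$ consist only of $\{G,G'\}$. The indistinguishability in Fact~\ref{fact:CFI} is a \emph{global} statement about $G$ versus $G'$ (the parity of twists), not a statement that parts of $G$ look alike to $\LogicCK$. Consequently, in your graph $H=G+D$ there is no reason to expect the $\LogicCK$-partition to be strictly coarser than the orbit partition: the $\LogicCK$-partition of $G$ may well already equal its orbit partition, and attaching a rigid $D$ can only refine both. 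You have not exhibited a concrete pair of vertices that are $\LogicCK$-equivalent yet lie in different orbits.

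The ``neutralization via coloring'' idea is also backwards. Passing to a finer vertex coloring can only \emph{refine} the $\LogicCK$-partition; it cannot remove the information that $D$ contributes. So if the uncolored $H$ were identified by $\LogicCK$ with $\LogicCK$-partition equal to orbits, no coloring could create a $\LogicCK$-equivalent non-isomorphic partner. The paper's construction avoids both problems by a different mechanism: it forms $G\cup G'$ (inserting \emph{both} the parallel edges $\{a,a'\},\{b,b'\}$ and the twisted edges $\{a,b'\},\{b,a'\}$), subdivides these four edges, and attaches two new degree-$2$ vertices $v_p$ and $v_t$ to the parallel and twisted midpoints respectively. The resulting $H$ is identified by $\LogicCdrei$ because it is isomorphic to its own twist. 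Crucially, $v_p$ and $v_t$ are $\LogicCK$-equivalent (distinguishing them would let Spoiler distinguish $G$ from $G'$ using only a bounded number of extra pebbles) but lie in different orbits (an automorphism swapping them would yield $G\cong G'$). Individualizing $v_p$ then immediately gives a colored version not identified by $\LogicCK$. The missing idea in your plan is exactly this: one must \emph{create} a pair of vertices whose distinguishability encodes the $G$-versus-$G'$ question, rather than hope that such a pair already exists inside $G$.
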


\begin{proof}
Let~$G$ and~$G'$ be the graphs described in Fact~\ref{fact:CFI}. We form~$G\cup G'$, which is the graph on~$V(G) = V(G')$ with edge set~$E(G)\cup E(G')$.
This graph in particular has the parallel edges~$\{a,a'\},\{b,b'\}$ and the twisted edges~$\{a,b'\},\{b,a'\}$. Subdivide these four edges and connect the two middle vertices of the subdivided parallel edges to a new vertex~$v_p$ and the two middle vertices of the subdivided twisted edges to a new vertex~$v_t$. Let~$H$ be the resulting graph.

To argue that~$H$ is identified by~$\LogicCdrei$, we will use the fact that in the Cai, F\"urer, Immerman construction we can achieve that vertices in different gadgets have different colors. (Without this, it can still easily be proven that~$C^4$ identifies the graph.)  Since every gadget is identified by itself, the only critical edges are the
ones connecting gadgets. They can either be twisted or parallel. However, since~$G\cup G'$ has both versions present between the vertices~$a$,~$a'$,~$b$,~$b'$, for every inserted twist, there is an isomorphism to~$G\cup G'$ that resolves the twist. Thus, the graph is identified since it is isomorphic to its twisted version. Consequently, the graph~$H$ is identified by~$\LogicCdrei$ since the newly added vertices~$v_p$ and~$v_t$ are the only vertices of degree~2 and therefore distinguished by~$\LogicCdrei$ from all other vertices.

It holds that~$v_p$ and~$v_t$ are indistinguishable by~$\LogicCKminusFour$. (This can easily be seen by considering pebble games. Spoiler could win the game on~$G$ and~$G'$ by simulating the game on~$H$, placing~$4$ additional pebbles on~$a,b,a'$ and~$b'$, which would imply that~$G$ and~$G'$ can be distinguished by $\LogicCK$.)

We claim that~$v_p$ and~$v_t$ are not in the same orbit.
Assuming otherwise, there is an automorphism that maps the neighbors of~$v_p$ to the neighbors of~$v_t$ and vice versa.
This implies that there is an automorphism that maps the parallel edges to the twisted edges and vice versa. However, this implies that~$G$ and~$G'$ are isomorphic, yielding a contradiction.

Note that, since~$v_p$  and~$v_t$ are indistinguishable by~$\LogicCKminusFour$, the graph obtained by coloring~$v_p$ with a color different from the color of every other vertex is not identified by~$\LogicCKminusFour$.
\end{proof}

Even if a graph is identified and the orbits of its automorphism group are correctly determined by~$\LogicCK$, it may still be the case that this does not hold for all colored versions of the graph.

\begin{theorem}
For every~$k>2$, there is a graph~$H$ of size~$O(k)$ which is identified by~$\LogicCdrei$ such that the~$\LogicCK$-partition classes are the orbits of~$H$ but there are vertex-colored versions of~$H$ that are not identified by~$\LogicCK$ and for which the~$\LogicCK$-partition classes are not the orbits of~$H$. 
\end{theorem}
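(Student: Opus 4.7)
The plan is to adapt the construction from the previous theorem by introducing additional symmetry. Concretely, I would take two copies $H_0^{(1)}$ and $H_0^{(2)}$ of the previous theorem's graph and cross-identify them by setting $x_1 := v_p^{(1)} = v_t^{(2)}$ and $x_2 := v_t^{(1)} = v_p^{(2)}$. The resulting graph $H$ has size $O(k)$ and admits a copy-swap automorphism sending $x_1 \mapsto x_2$, placing these two vertices into a common orbit; more generally, corresponding vertices across the two copies end up in the same orbit of $H$.

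For $H$ to be identified by $\LogicCdrei$, I would use that each copy of $H_0$ is identified by $\LogicCdrei$ by the previous theorem, and the glued degree-4 vertices $x_1, x_2$ are structurally distinct from everything else in $H$, serving as rigid anchors that pin down the gluing. For the equality of the $\LogicCK$-partition with the orbit partition of $H$, one needs to check that the cross-identification forces $\LogicCK$ to see the parallel/twisted asymmetry as resolved only by the copy-swap; this may require extending the construction with further symmetric identifications (for instance, of corresponding midpoints across copies) to eliminate any remaining CFI-indistinguishable classes that separate orbits inherited from $H_0$. This is the main technical obstacle, since the raw cross-gluing alone preserves the CFI-style $\LogicCK$-confusion between parallel and twisted midpoints within each copy, and some care is needed to ensure these are eventually merged into common orbits as well as common $\LogicCK$-classes.

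Next, I would color $x_1$ with a fresh color. This destroys the copy-swap automorphism, so the orbit partition splits: $x_1, x_2$ become singleton orbits, and no automorphism of the colored graph relates a vertex in copy 1 to one in copy 2. However, by the $\LogicCKminusFour$-equivalence of $v_p$ and $v_t$ established in the previous theorem, the colored $H$ looks the same to $\LogicCK$ whether the colored vertex plays the role of "parallel side" (as in copy 1) or "twisted side" (as in copy 2) of a copy of $H_0$; corresponding vertices across the two copies thus remain $\LogicCK$-equivalent, so the $\LogicCK$-partition of colored $H$ is strictly coarser than its orbit partition. Finally, applying a CFI twist inside one of the copies (using Fact~\ref{fact:CFI}) produces a graph non-isomorphic to $H$ but $\LogicCK$-equivalent, which, carrying the same vertex colouring, witnesses that colored $H$ is not identified by $\LogicCK$.
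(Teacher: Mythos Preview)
Your construction has a genuine gap at exactly the point you flag as ``the main technical obstacle,'' and the suggested fix does not close it. In the glued graph $H$, the copy-swap automorphism sends the parallel-side midpoints of copy~1 to the parallel-side midpoints of copy~2, not to the twisted-side midpoints of copy~1. Thus the orbit of a parallel midpoint $m_p^{(1)}$ is $\{m_p^{(1)},m_p^{(2)}\}$, while its $\LogicCK$-class also contains the twisted midpoints $m_t^{(1)},m_t^{(2)}$ (inherited from the $\LogicCK$-indistinguishability of $v_p$ and $v_t$ in $H_0$). The same mismatch propagates to the vertices $a,b,a',b'$ and beyond: within a single copy, nothing you have added produces an automorphism exchanging the parallel and twisted configurations, so the $\LogicCK$-partition of $H$ remains strictly coarser than its orbit partition. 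Pushing the cross-identifications one layer deeper does not help; it merely shifts the same asymmetry inward, and iterating collapses the two copies entirely.

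The paper takes a different route that avoids this difficulty. Rather than gluing two copies, it works inside a single CFI graph and stretches the critical pair of edges into two consecutive layers: the parallel edges $\{a,a'\}$, $\{b,b'\}$ are subdivided to form paths $(a_1,a_2,a_3)$ and $(b_1,b_2,b_3)$, and both parallel and twisted edges are inserted between consecutive layers, each equipped (as before) with its own marker vertex $v_{i,p}$ or $v_{i,t}$ for $i\in\{1,2\}$. The crucial point is that swapping $a_2\leftrightarrow b_2$ is an automorphism of the whole graph which \emph{simultaneously} exchanges parallel and twisted at both layers; hence $v_{i,p}$ and $v_{i,t}$ lie in a common orbit for each $i$, and the $\LogicCK$-partition genuinely coincides with the orbit partition. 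Once $v_{1,p}$ is coloured, this automorphism is destroyed, and now $v_{2,p}$ and $v_{2,t}$ are in different orbits while remaining $\LogicCK$-equivalent. This ``two correlated layers in one graph'' idea is precisely what your two-copy approach lacks: it creates a single automorphism that resolves the parallel/twisted ambiguity everywhere at once, rather than only pairing up copies.
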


\begin{proof}
We slightly modify the construction outlined in the previous proof.
We first subdivide the parallel edges~$\{a,a'\}$ and~$\{b,b'\}$ to obtain undirected paths~$(a = a_1, a_2, a_3=a')$ and~$(b = b_1, b_2, b_3 = b')$, respectively, and also insert all twisted edges~$\{a_i, b_{i+1}\}$ and~$\{b_i,a_{i+1}\}$. 
As before, we subdivide all new edges and add for~$i \in \{1,2\}$ vertices~$v_{i,p}$ adjacent to the midpoint of~$\{a_i, a_{i+1}\}$ and to the midpoint of~$\{b_i, b_{i+1}\}$. We also add vertices~$v_{i,t}$ adjacent to the midpoint of~$\{a_i, b_{i+1}\}$ and to the midpoint of~$\{b_i, a_{i+1}\}$. 
We obtain a graph~$H$ that is, by the same reasoning as above, identified by~$\LogicCdrei$. 
Furthermore, the logic~$\LogicCdrei$ determines the orbits of~$H$ since for~$i \in \{1,2\}$ the vertex~$v_{i,t}$ can be mapped to~$v_{i,p}$ by an automorphism.
Coloring~$v_{1,p}$ yields a graph whose orbits are not the~$\LogicCK$-classes since~$v_{2,p}$ and~$v_{2,t}$ are not in the same orbit but not distinguished by~$\LogicCK$, i.e., they are in the same~$\LogicCK$-partition class.
\end{proof}

By repeating the subdivision step, the construction described in the proof can easily be generalized to show that there exist vertex-colored graphs of size~$O(k)$ in which orbits are correctly determined by~$\LogicCdrei$ even if~$k$ vertices are individualized, but in which there are~$k+1$ vertices such that individualization of all of them produces a graph whose~$\LogicCK$-partition classes are not the orbits of the automorphism group.

\section*{Acknowledgments}
We thank Konstantinos Stavropoulos and Martin Grohe for providing us with inspiring, helpful comments that guided us towards the inversion problem.

\bibliographystyle{abbrv}
\bibliography{main}

\end{document}